\documentclass[11pt]{article}
\usepackage[a4paper,margin=1in]{geometry}
\usepackage{amsmath,amssymb,amsthm,mathtools,bm}
\providecommand{\ket}[1]{\left|#1\right\rangle}
\providecommand{\bra}[1]{\left\langle#1\right|}
\DeclareMathOperator{\Tr}{Tr}
\usepackage{enumitem}
\usepackage{booktabs}
\usepackage{graphicx}
\usepackage{tikz}
\usetikzlibrary{decorations.pathreplacing}
\usepackage[colorlinks=true,linkcolor=blue!55!black,citecolor=blue!55!black,%
            urlcolor=blue!55!black]{hyperref}

\theoremstyle{plain}
\newtheorem{theorem}{Theorem}
\newtheorem{proposition}{Proposition}
\newtheorem{lemma}{Lemma}
\newtheorem{corollary}{Corollary}
\newtheorem*{thmA}{Theorem A}
\newtheorem*{thmB}{Theorem B}
\newtheorem*{thmC}{Theorem C}

\theoremstyle{definition}
\newtheorem{definition}{Definition}
\newtheorem{remark}{Remark}

\DeclareMathOperator{\spec}{spec}

\DeclareMathOperator{\arctanh}{arctanh}
\DeclareMathOperator{\arccosh}{arccosh}

\newcommand{\Do}{\Delta_{\mathrm o}}

\newcommand{\deferproof}[1]{%
  \par\smallskip\noindent\emph{Proof:} Appendix~\ref{#1}.\par\smallskip}

\title{Hamiltonian Eigenvalue Transformation\\ by Tridiagonal Gadgets}
\author{%
  Arthur Braida\thanks{\texttt{arthur.braida@irif.fr}}\\
  \small Universit\'e Paris Cit\'e, CNRS, IRIF, France, \\ \small Alfréd Rényi Institute of Mathematics, Budapest, Hungary
  \and
  Joseph Cunningham\thanks{\texttt{joseph.cunningham@u-bordeaux.fr}}\\
  \small Universit\'e de Bordeaux, CNRS, LaBRI, France
  \and
  J\'er\'emie Roland\thanks{\texttt{jeremie.roland@ulb.be}}\\
  \small Universit\'e libre de Bruxelles, QuIC, Belgium%
}
\date{\today}

\begin{document}
\maketitle

\begin{abstract}
An analog device implements a local Hamiltonian $H$, but a polynomial $P(H)$ is
in general not local, so the device cannot implement it. We carry $P(H)$, to any
prescribed accuracy, as the action of a single time-independent local
Hamiltonian on an explicitly described invariant subspace. Short chains of
ancilla qubits are attached to $H$. A chain of $2m$ sites has a unique isolated
eigenvalue that is an analytic function of the input vanishing to order exactly
$2m$, because the input must cross the chain and come back before it can shift
the energy at the far end. Chains of different lengths therefore form a
triangular family, and a weighted sum of them reproduces a prescribed polynomial
term by term. Even chains give the even part and odd chains the odd part, so
any polynomial is reached. We prove uniqueness, analyticity and coefficient
bounds uniform in the chain length for inputs $H$ of norm below any $r<1$. 
Where the circuit model pays degree $2\ell$ in
$2\ell$ sequential oracle calls, the result here is one Hamiltonian of locality
two more than that of $H$, which pays the degree in
$O(\ell^{2}+\log^{2}(1/\varepsilon))$ ancillas and in energy scale. As an
application we filter a marked eigenstate of a local Hamiltonian. Composing a
synthesised square along the Chebyshev doubling identity works, but its energy
scale grows quasi-polynomially with the degree. Iterating instead the exact
eigenvalue branch of the two-site chain $k$ times gives a filter on $k+1$
ancilla qubits whose error decays geometrically in $k$, at an energy scale
polynomial in the inverse passband width and independent of $k$, the locality
growing by one per stage. In adiabatic optimisation with a rank-one driver, the construction replaces that driver, the one non-local ingredient, by a Hamiltonian with $O(\log n)$ ancillas and $O(\log n)$-body terms, at an energy scale polynomial in $n$. Simulations including every ancilla, up to $n=20$, reproduce the spectrum and the anticrossing gap of the ideal algorithm on a simple cost function.
\end{abstract}

\section{Introduction}\label{sec:intro}

Suppose you have an analog device. It implements a local Hamiltonian, you can
tune its couplings, and you can let it evolve. What you would like to run on it
is an algorithm, and continuous-time algorithms are often designed around operators
chosen for their analytic convenience rather than for their locality. The
prototypical one is analog search by quantum adiabatic
evolution~\cite{farhi1998analog,farhi2000adiabatic}, a model universal for
quantum computation~\cite{aharonov2007adiabatic}. The analog search algorithm interpolates towards
$-\ket m\!\bra m$, a rank-one projector on $2^{n}$ dimensions. Its spectral gap
can be studied analytically and the schedule that follows it is
optimal~\cite{RolandLocalAQC,Braida2025unstructured}. None of these results applies as such to hardware, because the projector is as
far from local as an operator can be and no device implements it.

What a device does give you is a local Hamiltonian $H_1$ whose ground state is
the marked string $\ket m$, an Ising Hamiltonian for instance. The projector is a
function of that Hamiltonian, namely the polynomial taking the value $1$ at the
ground energy and nearly $0$ on the rest of the bounded spectrum. So the distance between
what is analysed and what can be built is exactly one question.

\emph{Can we find a local Hamiltonian whose restriction to an invariant subspace is 
equal to some prescribed polynomial of the original Hamiltonian?}

We answer it by attaching small ancilla chains to the input
and computing exactly, rather than perturbatively, which
function of the input the isolated eigenvalue of such a chain is. The answer is
a polynomial family rich enough to synthesise any target, and the operator that
results is time independent and local. The transformation is chosen once, by a
list of chain lengths and real weights $\alpha_j$, and is then written into an operator
rather than sequenced in time.
Figure~\ref{fig:search} is that answer applied to the case above: on the encoded
subspace, and numerically in the full space with every ancilla for $n\le20$, the
local Hamiltonian our gadgets build reproduces the spectrum of analog search and
its gap at the anticrossing.

In comparison, in the circuit model, the quantum singular value transformation
(QSVT)~\cite{gilyen2019qsvt} applies an essentially arbitrary bounded polynomial
to the singular values of a block-encoded matrix, and recovers quantum
search~\cite{grover1996}, phase estimation~\cite{kitaev1995phase}, matrix
inversion~\cite{harrow2009linear} and Hamiltonian
simulation~\cite{berry2015taylor} as special
cases~\cite{low2017qsp,low2019qubitization,martyn2021grand}. Its mechanism is
interleaving, so a polynomial of degree $2\ell$ costs $2\ell$ sequential oracle
calls and a constant number of ancillas, which is to say that the degree is paid
in depth. The construction below pays it the other way. Degree $2\ell$
costs $O(\ell^{2})$ ancilla qubits and a larger energy scale, but no sequencing of the
transformation, which is a single operator that one switches on.
Digital and analog sit at opposite ends of the same trade-off, and having the
analog end available is a step towards carrying a guarantee proved for an idealised
Hamiltonian over to one that a device could actually run.
Table~\ref{tab:tradeoff} is the whole account of what each end pays.

\begin{table}[!tb]
\centering
\small
\renewcommand{\arraystretch}{1.3}
\begin{tabular}{@{}lllll@{}}
\toprule
 & \textbf{steps $k$} & \textbf{ancillas} & \textbf{locality}
 & \textbf{time overhead}\\
\midrule
\multicolumn{5}{@{}l}{\emph{a general polynomial of degree $2\ell$, to accuracy
$\varepsilon$}}\\
QSVT
 & & $O(1)$ & a circuit & $2\ell$ oracle calls\\
one pass
 & & $O\bigl(\ell^{2}+\log^{2}\tfrac1\varepsilon\bigr)$ & $\varkappa+2$
 & $(\Delta+2)\Lambda_L$\\
\addlinespace
\multicolumn{5}{@{}l}{\emph{the projector $\ket m\!\bra m$, to accuracy $\eta$, at
passband edge $d$}}\\
Chebyshev filter via recursion
 & $\bigl\lceil\log_2\bigl(\tfrac{\sigma}{2d}\ln\tfrac8\eta\bigr)\bigr\rceil$
 & $O\bigl(k\,(k+\log\tfrac1\eta)^{2}\bigr)$ & $\varkappa+2k+2$
 & $e^{O(k\,(k+\log\frac1\eta))}$\\
single block recursion
 & $\bigl\lceil\log_\varrho\tfrac{\sigma^2}{2d^2\eta}\bigr\rceil$
 & $k+1$ & $\varkappa+k+1$
 & $\bigl(\tfrac\sigma d\bigr)^{2\log_\varrho\theta}$\\
\bottomrule
\end{tabular}
\caption{What each route to a polynomial of a Hamiltonian pays. The two blocks
measure different errors and are not comparable row for row. Upper block: the
degree $2\ell$ is given, the coefficients are $O(1)$,
$\varepsilon=\|P(H)-P^{e}_L(H)\|$, the difference between the target and synthesised 
polynomials, and $\Lambda_L=\sum_{j\le L}|\alpha_j|$ the total weight 
of the combined gadget. Lower
block: the target is the projector, $\eta=\|\widehat R(H_1)-\ket m\!\bra m\|$ at
passband edge $d$, and $k$ is the number of doubling steps
that~\eqref{eq:Lneed} and Proposition~\ref{prop:single}(iii) require. The time
overhead of QSVT is its number of sequential oracle calls, so the circuit model
pays the degree in depth. The constructions of this paper need no sequencing,
since each is a single operator, and pay the degree in width, locality and
energy. Their time overhead is the factor by which a device with a fixed energy
budget stretches every evolution time, namely the norm of the Hamiltonian
normalised to its encoded action, stated at fixed $\Delta/\sigma$ and up to
constant factors, with $\theta,\varrho$ as in~\eqref{eq:Qmap}
and~\eqref{eq:rho}. Section~\ref{sec:single} compares the two
filters. The entries are Corollary~\ref{cor:L_needed},
Theorem~\ref{thm:realisation}, Lemma~\ref{lem:filter},
Propositions~\ref{prop:filter}, \ref{prop:energy} and~\ref{prop:single}, and
Corollary~\ref{cor:energy}.}
\label{tab:tradeoff}
\end{table}

For practical application on machines, locality is of great importance. The Hamiltonians one can build, on ultracold atoms,
on Rydberg arrays, on superconducting
hardware~\cite{bloch2012simulations,browaeys2020rydberg,daley2022practical}, are
local and few-body, and often stoquastic, so the reachable computational power
is constrained~\cite{bravyi2008stoquastic}.
The construction keeps the input's locality up to an additive constant per
layer, which is what makes the answer above worth having.

\paragraph{The route: gadgets, used forwards.}
Perturbative gadgets were introduced to simulate a target Hamiltonian:
one designs a large-detuning ancilla system whose low-energy effective
Hamiltonian reproduces a prescribed
interaction~\cite{kempe2006complexity,oliveira2008complexity,jordan2008perturbative,bravyi2008gadgets,biamonte2008realizable,cao2015gadgets,cao2015nostrong},
the error being controlled perturbatively in the inverse detuning. We use the
same mechanism in another way. Rather than asking which effective
interaction a gadget produces, we ask which function of the input a
gadget's isolated eigenvalue is, and we compute that function exactly instead of
to leading order. The answer, for the chain of
Section~\ref{sec:gadget}, is an analytic function of the input whose Taylor
expansion begins at a prescribed order. Prescribed leading order is exactly
what makes a family triangular, hence complete. The whole construction works in particular
because we can study the exact expansion to any given order. Triangularity is the
statement that the coefficients of $x^{2},\dots,x^{2m-2}$ in $\lambda_0^{(m)}$
are zero rather than small, since the synthesis inverts a triangular system, and
a family whose leading orders are only approximately separated gives no such
system. The synthesis error, in turn, is a sum of everything the chains
contribute beyond the target degree, so it needs a bound on the expansion
coefficients $c^{(m)}_k$ of $\lambda_0^{(m)}$. 

\subsection{The gadget}\label{sec:intro:gadget}

The object is an open chain of $2m$ sites with couplings alternating between the
input $x$ and $1$, and a detuning $-\Delta$ on the last site
(Figure~\ref{fig:chain}). It is a finite Su--Schrieffer--Heeger
chain~\cite{su1979solitons,asboth2016short} with intracell hopping $x$,
intercell hopping $1$ and one boundary defect. Replacing the scalar $x$ by an
input Hamiltonian $H$ makes the intracell bonds operator-valued, and the whole
object is a Hamiltonian on $2m$ ancilla qubits whose locality is two more than
that of $H$, so a $\varkappa$-local input gives a $(\varkappa+2)$-local gadget
(Lemma~\ref{lem:local}).

\begin{figure}[t]
\centering
\begin{tikzpicture}[
  every node/.style={font=\small},
  site/.style={circle,draw,thick,fill=white,inner sep=0pt,minimum size=7mm},
  sink/.style={circle,draw,thick,fill=black!12,inner sep=0pt,minimum size=7mm},
  xbond/.style={very thick},
  ubond/.style={thin}]
  \def\s{1.75}
  \foreach \i in {1,...,5}{\node[site]  (n\i) at (\s*\i,0) {\scriptsize $\i$};}
  \node[sink] (n6) at (\s*6,0) {\scriptsize $6$};
  \draw[xbond] (n1)--(n2) node[midway,above=2pt] {$x$};
  \draw[ubond] (n2)--(n3) node[midway,above=2pt] {$1$};
  \draw[xbond] (n3)--(n4) node[midway,above=2pt] {$x$};
  \draw[ubond] (n4)--(n5) node[midway,above=2pt] {$1$};
  \draw[xbond] (n5)--(n6) node[midway,above=2pt] {$x$};
  \node at (\s*6,-0.85) {$-\Delta$};
  \draw[decorate,decoration={brace,mirror,amplitude=5pt}]
       (\s*1-0.35,-1.35) -- (\s*6+0.35,-1.35)
       node[midway,below=6pt] {$2m$ ancilla sites, one excitation};
\end{tikzpicture}
\caption{The even gadget $E_{2m}(x)$ for $m=3$. Thick bonds carry the input $x$
and become the operator $H$ when the scalar is replaced by a Hamiltonian. Thin
bonds are unit couplings, and the last site carries the detuning $-\Delta$. The
quantity of interest is the eigenvalue branch $\lambda_0^{(m)}(x)$ that vanishes
at $x=0$.}
\label{fig:chain}
\end{figure}

The chain has an eigenvalue $\lambda_0^{(m)}(x)$ that vanishes at $x=0$, and the
single fact the whole paper turns on is the validity of the following expansion in 
a neighbourhood of $x=0$:
\begin{equation}
\lambda_0^{(m)}(x)\;=\;\frac{x^{2m}}{\Delta}\;+\;O\bigl(x^{2m+2}\bigr).
\label{eq:intro_branch}
\end{equation}
The function $\lambda_0^{(m)}$ vanishes at $x=0$ to order exactly $2m$, with
$c^{(m)}_0=1/\Delta\neq0$. The picture behind it is that the input must traverse
the whole chain and come back before it can shift the energy at the far end. Chains 
of different lengths therefore produce eigenvalues with
different leading orders, and that is what makes the family complete. Attaching
several chains to the same input, with a real weight $\alpha_j$ on each, adds their
eigenvalues, and the weights can be chosen one after another so that the leading
orders reproduce a prescribed polynomial term by term. Because the chains occupy
disjoint sets of ancillas they do not interact, so the whole assembly is again a
single Hamiltonian of the same locality as one chain, and it carries a function that 
agrees with the target up to a tail
geometrically small in the number of chains, on a subspace described in
Section~\ref{sec:realisation}.
That subspace is invariant, so a computation prepared inside it and evolved
under the operator stays inside it.

The chains of Figure~\ref{fig:chain} have an even number of sites and produce
even functions of the input. Chains of odd length, with a slightly different
boundary, produce odd ones, and everything below transposes to them with two
modifications, which Appendix~\ref{app:odd} carries out. Any polynomial is
therefore reached by splitting it into its two parities and using chains of both
kinds, the constant and linear terms being added directly as multiples of $I$
and of $H$. To keep the exposition in one piece, the body of the paper treats the
even family.

\subsection{Results}\label{sec:intro:results}

Making this picture rigorous rests on two properties of the branch. It must be
unique, so that each chain has a well-defined eigenvalue and the
synthesis is a well-posed linear system, and it must be analytic on a
controlled complex disc, so that the resulting series converges for a given maximal
admissible input norm. Theorem~A proves both, Theorem~B turns them
into the synthesis of an arbitrary polynomial, and Theorem~C applies the
construction to a filter for a target state.

\begin{thmA}[The branch, Sections~\ref{sec:gadget} and~\ref{sec:branch}]
Fix a target radius $r\in(0,1)$ and let $\Delta\ge\Delta_\star(r)$, a threshold
given in closed form by one quadratic inequality $($Corollary~\ref{cor:Dmin},
equation~\eqref{eq:summary}$)$ and satisfying $\Delta_\star(r)\sim(1-r)^{-2}$.
Then for every $m\ge1$ the chain $E_{2m}(x)$ has, for each $|x|<r$, a
\emph{unique} eigenvalue in a disc around the origin, that eigenvalue is an
analytic function of $x$ on the whole complex disc $\{|x|<r\}$
$($Theorem~\ref{thm:onestep}$)$, and its Taylor series is
\[
\lambda_0^{(m)}(x)=\sum_{k\ge0}c^{(m)}_kx^{2m+2k},
\qquad
c^{(m)}_0=\frac1\Delta,
\quad
c^{(m)}_1=-\frac1\Delta\ \ (m\ge2),
\quad
c^{(m)}_k=0 \ \ (1<k<m),
\]
with $\bigl|c^{(m)}_k\bigr|\le C_r\big/\bigl(\Delta r^{2k}\bigr)$ and
$C_r=2/(1-r^{2})$, uniformly in $m$ $($Theorem~\ref{thm:series},
equation~\eqref{eq:summary_ck}$)$. The leading order is exactly $2m$, which is
what makes chains of different lengths a triangular family.
\end{thmA}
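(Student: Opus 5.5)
The plan is to reduce the spectral problem for $E_{2m}(x)$ to a scalar fixed-point equation by taking the Schur complement onto site $1$. At $x=0$ the chain splits into pieces:
\begin{itemize}[nosep]
\item site $1$ is isolated, with eigenvalue $0$;
\item the pairs $(2,3),(4,5),\dots,(2m-2,2m-1)$ are unit-coupled dimers with eigenvalues $\pm1$;
\item site $2m$ is isolated, with eigenvalue $-\Delta$.
\end{itemize}
So the branch $\lambda_0^{(m)}$ is the continuation of the zero mode at site $1$.

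Eliminating sites $2,\dots,2m$ gives $\lambda = x^{2}\,G^{(m)}(\lambda,x)$, where $G^{(m)}$ is the $(2,2)$ entry of the resolvent of the remaining chain. Written from the far end, it is the finite continued fraction
\[
\lambda=\frac{x^{2}}{\lambda-\dfrac{1}{\lambda-\dfrac{x^{2}}{\lambda-\dfrac{1}{\ddots\;\lambda-\dfrac{x^{2}}{\lambda+\Delta}}}}}.
\]
Define $\Phi_m(\lambda,x)$ to be this right-hand side. The tail $x^{2}/(\lambda+\Delta)$ is small because $\Delta$ is large. Each unit link, read at small $\lambda$, acts approximately as the map $s\mapsto -1/(\lambda-s)\approx s(1+\dots)$ composed with multiplication by $x^{2}$. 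As a result, each dimer transmits the small quantity multiplied by $-x^{2}$ up to sign conventions.

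The first step is existence, uniqueness and analyticity (Theorem~\ref{thm:onestep}). I will fix a disc $D_\rho=\{|\lambda|\le\rho\}$ with $\rho$ of order $1/\Delta$, and prove by induction along the chain that every intermediate tail of the continued fraction stays in an explicit small disc. This holds uniformly in $m$ and for all complex $|x|<r$. From these bounds I will show that $\lambda\mapsto\Phi_m(\lambda,x)$ maps $D_\rho$ into itself and is a contraction there, with both constants independent of $m$.

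The condition making all the intermediate discs nest should be a single quadratic inequality in $\Delta$. This is where $\Delta_\star(r)\sim(1-r)^{-2}$ should come from: the margin separating the zero mode from the $\pm1$ dimer levels shrinks like $1-r$, and it enters squared through the two-step composition per cell. Banach's fixed-point theorem then gives a unique solution in $D_\rho$. Because $\Phi_m$ is jointly holomorphic and contracting, the fixed point depends holomorphically on $x$ (via the implicit function theorem, or as a uniform limit of holomorphic iterates). Uniqueness as an eigenvalue in the disc holds because any eigenvalue in $D_\rho$ with nonzero weight on site $1$ solves the fixed-point equation. Finally, I will show that for such $\lambda$ the eliminated block is invertible, so no eigenvector can vanish on site $1$.

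The second step is the Taylor coefficients. Expanding the continued fraction inductively in $x^{2}$, I will show that the tail after $j$ cells is $x^{2j}/\Delta\,(1+O(x^{2}))$ plus terms of order $\lambda$. Substituting $\lambda=O(x^{2m})$ back in, the feedback of $\lambda$ only enters at order $x^{4m}$. This yields $c_0^{(m)}=1/\Delta$ and $c_k^{(m)}=0$ for $1<k<m$, while the $O(x^{2})$ correction accumulates to $-1/\Delta$, giving $c_1^{(m)}=-1/\Delta$ for $m\ge2$. I expect to get this bookkeeping cleanly from the tridiagonal determinant recursion (three-term recurrence) rather than from the fraction itself.

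The third step is the uniform bound, which is where I expect the main obstacle. The key is to prove the sharper pointwise estimate
\[
|\lambda_0^{(m)}(x)|\le\frac{2|x|^{2m}}{\Delta(1-|x|^{2})}\qquad(|x|<r),
\]
which follows from the tail bounds of the first step, propagated with a factor $|x|^{2}$ per cell and summed geometrically. The function $g=\lambda_0^{(m)}/x^{2m}$ is then holomorphic on $|x|<r$ with $|g|\le 2/(\Delta(1-r^{2}))$, so Cauchy's estimate on circles $|x|=r'\uparrow r$ gives $|c_k^{(m)}|\le C_r/(\Delta r^{2k})$. The difficulty is keeping every constant in the inductive disc estimates independent of $m$: a loss of even a factor $(1+\epsilon)$ per cell would destroy uniformity in $m$. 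I would first try to choose the nested radii as a fixed point of the one-cell Möbius map, so that the induction closes with no loss at each step.
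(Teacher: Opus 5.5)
Your reduction to site $1$ is legitimate, but the step the whole plan rests on fails. In your normalization, $\lambda\mapsto\Phi_m(\lambda,x)$ is \emph{not} a contraction, nor even a self-map, of any small disc around the branch once $r>1/\sqrt2$, and enlarging $\Delta$ does not help.

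The reason is that the $\lambda$-dependence of $\Phi_m$ is not weak: every cell feeds $\lambda$ back linearly. Write $s_m=x^{2}/(\lambda+\Delta)$ and $s_k=x^{2}(\lambda-s_{k+1})/(\lambda^{2}-\lambda s_{k+1}-1)$, so that $\Phi_m=s_1$. Then $s_k\approx x^{2}(s_{k+1}-\lambda)$ and $\Phi_m(0,x)=x^{2m}/\Delta$ exactly. With $E'$ the chain on sites $2,\dots,2m$,
\[
\partial_\lambda\Phi_m(0,x)=-x^{2}\bigl(E'^{-2}\bigr)_{22}=-x^{2}\,\frac{1-x^{2m-2}}{1-x^{2}}+O\bigl(\Delta^{-2}\bigr).
\]
For real $x$ this equals $-x^{2}\|E'^{-1}e_2\|^{2}$. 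Its modulus is at least $x^{2}(1-x^{2m-2})/(1-x^{2})$, which tends to $r^{2}/(1-r^{2})>1$ as $x\to r$ and $m\to\infty$ whenever $r>1/\sqrt2$. Since $\lambda_0^{(m)}(x)=O(r^{2m}/\Delta)$ is tiny, the fixed point is then repelling for such $x$ and $m$. By Schwarz--Pick no disc containing it is mapped into itself, and the iterates you propose to take a limit of move away from it.

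This is not a $(1+\epsilon)$ loss per cell that a clever choice of nested radii could absorb. It is an $O(1)$ linear feedback summed over the cells, and it is independent of $\Delta$. So your argument can only work for $r$ below about $1/\sqrt2$, whereas the theorem is for every $r<1$ and the regime that matters is $r$ near $1$.

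The same issue breaks two other parts of the plan:
\begin{itemize}
\item \emph{Step 3.} Propagating a factor $|x|^{2}$ per cell with the triangle inequality gives only $|\lambda|\bigl(1-|x|^{2}/(1-|x|^{2})\bigr)\lesssim|x|^{2m}/\Delta$, which degenerates at $|x|=1/\sqrt2$.
\item \emph{Step 2.} In your normalization the feedback of $\lambda$ enters at order $x^{2m+2}$, not $x^{4m}$. It is exactly what produces $c_1^{(m)}=-1/\Delta$. The zeros $c_k^{(m)}=0$ for $1<k<m$ then come from the cancellation $\lambda\,(1-x^{2m})/(1-x^{2})=x^{2m}/\Delta+O(x^{4m})$.
\end{itemize}

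The missing idea is to move the part linear in $\lambda$ to the left before estimating, so that what remains is $O(|\lambda|^{2})$ with a constant independent of $m$. The paper does this through the characteristic polynomial $\chi_{2m}=S_m(\lambda^{2}+\Delta\lambda-x^{2})-x^{2}S_{m-1}$, which amounts to a Schur complement onto the detuned site. The equation becomes $\lambda(\lambda+\Delta)=x^{2}R_m(\lambda,x)$. Its right-hand side depends on $\lambda$ only through $\lambda^{2}$, so the coefficient of $\lambda$ is the large $\Delta$. The coefficient $c_1^{(m)}$ then sits in the value $M(x)=x^{2m}(1-x^{2})/(1-x^{2m})$ at $\lambda=0$.

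Uniformity in $m$ is Lemma~\ref{lem:Rflat}, proved from the explicit partial fractions of $S_{m-1}/S_m$. A pole $w_j$ comes close to the domain only when $\sin\vartheta_j\lesssim1-r$, and then its residue $\kappa_j=\sin^{2}\vartheta_j/m$ is small as well. One Rouch\'e comparison with $f=\Delta\lambda$ on $|\lambda|=2a_m/\Delta$ gives uniqueness, simplicity and analyticity on $\{|x|<r\}$. The threshold $\Delta_\star(r)\sim(1-r)^{-2}$ comes from $a_2\sim1/(2(1-r))$ together with $s_-\sim(1-r)^{2}$, not from a per-cell nesting condition.

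Your final Cauchy step is sound. The pointwise bound $|\lambda_0^{(m)}(x)|\le2|x|^{2m}/\bigl(\Delta(1-|x|^{2})\bigr)$ does hold for $\Delta\ge\Delta_\star(r)$. For $m\ge2$ it follows from the same Rouch\'e argument with the $x$-dependent radius $2|x|^{2m}/\bigl(\Delta(1-|x|^{2})\bigr)$, because $|M(x)|\le|x|^{2m}/(1-|x|^{2})$. For $m=1$ it follows from Proposition~\ref{prop:m1}. It gives $C_r$ directly, $m=1$ included. But it has to be derived from the rearranged equation, not by iterating $\Phi_m$.
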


\begin{thmB}[Any polynomial, as a local Hamiltonian, Section~\ref{sec:synthesis}]
Let $\Delta\ge\Delta_\star(r)$, let $H$ be Hermitian and $\varkappa$-local with
$\|H\|<r$, let $P$ be an even polynomial with $P(0)=0$ and largest coefficient $M_P$ in
modulus, and let $\varepsilon>0$. There are real weights $\alpha_j$, and
$L=O\bigl(\max\{\deg P,\ \ln\tfrac{M_P}\varepsilon\}\bigr)$ of them, the
constant depending only on $r$ and $\|H\|/r$,
such that the single time-independent Hamiltonian
$\Sigma=\sum_{j}\alpha_jE_{2j}(H)$, on $O(L^{2})$ ancilla qubits and
$(\varkappa+2)$-local, satisfies
\[
\Sigma\,W \;=\; W\,P^{e}_L(H),
\qquad
\bigl\|P(H)-P^{e}_L(H)\bigr\|\;\le\;\varepsilon ,
\]
for the explicit isometry $W$ that sends each eigenvector of $H$ to itself
tensored with the branch eigenvector of every chain at that eigenvalue, and whose
image is therefore invariant under $\Sigma$
$($Theorem~\ref{thm:realisation}, Corollary~\ref{cor:L_needed}$)$. The error
decays geometrically,
$\limsup_{L\to\infty}\bigl\|P(H)-P^{e}_L(H)\bigr\|^{1/L}\le\|H\|^{2}/r^{2}$, so
the admissible input norm is strictly bounded by the disc radius $r$. 
Odd chains supply the odd part
$($Appendix~\ref{app:odd}$)$ and commute with the even ones, so every polynomial
is reached by splitting it into its two parities, its constant and linear terms
being added directly as $P(0)I$ and $P'(0)H$.
\end{thmB}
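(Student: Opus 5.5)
We will take Theorem~A as given and reduce Theorem~B to three steps: a triangular solve for the weights, a tail bound from the uniform coefficient estimates, and a spectral lift from scalars to the operator $H$ through the isometry $W$.

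\textbf{Step 1: choosing the weights.} Write $P(x)=\sum_{i=1}^{\ell}p_ix^{2i}$ and define the synthesised function
\[
P^{e}_L(x)=\sum_{j=1}^{L}\alpha_j\lambda_0^{(j)}(x).
\]
By Theorem~A, $\lambda_0^{(j)}(x)=x^{2j}/\Delta+O(x^{2j+2})$, so the coefficient of $x^{2i}$ in $P^e_L$ is
\[
\sum_{j\le i}\alpha_j c^{(j)}_{i-j}.
\]
We require this to equal $p_i$ for $i\le L$, where $p_i=0$ for $i>\ell$. This system is lower triangular with diagonal entries $1/\Delta\neq0$, so forward substitution gives unique real weights:
\[
\alpha_i=\Delta\Bigl(p_i-\sum_{j<i}\alpha_jc^{(j)}_{i-j}\Bigr).
\]

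\textbf{Step 2: bounding the error.} With these weights, $P(x)-P^e_L(x)$ has no monomials below $x^{2L+2}$. On $|x|\le\|H\|=:s<r$, the error is therefore bounded by
\[
\sum_{i>L}s^{2i}\,\Bigl|\sum_{j\le L}\alpha_jc^{(j)}_{i-j}\Bigr|.
\]
The coefficient estimate $|c^{(j)}_k|\le C_r/(\Delta r^{2k})$ of Theorem~A then gives a recursive bound on the weights. Normalising $\beta_i=\alpha_ir^{2i}/\Delta$, this reads
\[
|\beta_i|\le |p_i|r^{2i}+C_r\sum_{j<i}|\beta_j|.
\]
For $i\le \ell$ this yields $|\beta_i|\lesssim M_P(1+C_r)^{i}$. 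For $i>\ell$ the forcing term $p_i$ is absent, and I expect this regime to be the main obstacle: the naive recursion gives growth like $(1+C_r)^{i}$, which competes with $(s/r)^{2i}$ and would destroy geometric decay. To control it I would bypass the recursion and work with generating functions. The triangular inversion amounts to composing $P$ with the inverse of the map $x^{2j}\mapsto\lambda_0^{(j)}$. The uniform analyticity on $|x|<r$ from Theorem~A suggests that this inverse is analytic on a slightly smaller disc. A Cauchy estimate would then bound the residual coefficients by $M_P\,C\,\rho^{-2i}$ for every $\rho\in(s,r)$, so the tail is
\[
O\bigl(M_P(s/\rho)^{2L}\bigr).
\]
This gives the claimed $\limsup$ rate $\|H\|^2/r^2$ after letting $\rho\to r$. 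Choosing $L\ge\ell$ and $L\gtrsim\ln(M_P/\varepsilon)/\ln(\rho/s)$ then produces the stated $L=O(\max\{\deg P,\ln\frac{M_P}{\varepsilon}\})$.

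\textbf{Step 3: lifting to $H$.} Since $H$ is Hermitian, take its eigendecomposition $H=\sum_\mu \mu\ket{\mu}\!\bra{\mu}$. Each $E_{2j}(H)$ is block diagonal in this basis, with block $E_{2j}(\mu)$ on the $j$-th chain. Let $v_j(\mu)$ denote the branch eigenvector of $E_{2j}(\mu)$, which is unique by Theorem~A. Define
\[
W\ket{\mu}=\ket{\mu}\otimes\bigotimes_j v_j(\mu).
\]
Eigenvectors of $H$ within a degenerate eigenspace share $\mu$, so $W$ is well defined and is an isometry. The chains act on disjoint ancillas, so
\[
\Sigma W\ket{\mu}=\Bigl(\sum_j\alpha_j\lambda_0^{(j)}(\mu)\Bigr)W\ket{\mu},
\]
which is $\Sigma W=WP^e_L(H)$ and shows that the image of $W$ is invariant under $\Sigma$. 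The operator-norm error then equals the supremum of the scalar error over the spectrum of $H$, which Step 2 controls.

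\textbf{Resource count.} Chain $j$ uses $2j$ qubits, so the total ancilla count is $\sum_{j\le L}2j=O(L^2)$. Each chain is $(\varkappa+2)$-local by Lemma~\ref{lem:local}, and summing chains on disjoint ancillas does not increase locality. The odd part follows verbatim from Appendix~\ref{app:odd}, and the constant and linear terms are added directly as $P(0)I+P'(0)H$.
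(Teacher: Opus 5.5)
Your Steps~1 and~3 and the resource count match the paper (Proposition~\ref{prop:triangular_solvability}, Theorem~\ref{thm:realisation}). You also locate the difficulty correctly: the full recurrence only gives $\sum_j r^{2j}|\alpha_j|$ growing like $(1+C_r)^{L}$, hence convergence only for $\|H\|<r/\sqrt{1+C_r}$. But Step~2 asserts the conclusion rather than proving it, and the heuristic behind it is false.

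The properties you invoke are analyticity on $|x|<r$, $c^{(j)}_0=1/\Delta$, and the uniform bound $|c^{(j)}_k|\le C_r/(\Delta r^{2k})$. The family $\Delta\lambda^{(j)}(x)=x^{2j}\bigl(1-C_r\sum_{n\ge1}(x/r)^{2n}\bigr)$ satisfies all of them. Its triangular system is Toeplitz with symbol $\bigl(1-(1+C_r)y/r^{2}\bigr)/\bigl(1-y/r^{2}\bigr)$ in $y=x^{2}$, and this symbol vanishes at $|x|=r/\sqrt{1+C_r}$. So the weights synthesising $x^{2}$ grow like $\bigl((1+C_r)/r^{2}\bigr)^{j}$, and the truncation error does not converge for $r/\sqrt{1+C_r}<|x|<r$. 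Equivalently, inverting $x^{2j}\mapsto\Delta\lambda_0^{(j)}$ by a Neumann series in the norm $\sum_j|a_j|\rho^{2j}$ works only when $C_r\gamma/(1-\gamma)<1$ with $\gamma=\rho^{2}/r^{2}$, which is the same loss. Note also that the triangular inversion is a linear map on coefficient sequences, not a composition with an inverse function, so ``the inverse is analytic'' needs a definition before it can be proved.

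The missing idea is the sparsity in Theorem~A, which you never use: $c^{(j)}_1=-1/\Delta$ exactly for $j\ge2$, and $c^{(j)}_n=0$ for $1<n<j$. With it the back-substitution becomes $\alpha_k=\Delta p_{2k}+\alpha_{k-1}-\Delta\sum_{q\le j\le\lfloor k/2\rfloor}\alpha_jc^{(j)}_{k-j}$ (Proposition~\ref{prop:sparsity}). The predecessor enters with coefficient exactly $1$, which becomes $r^{2}<1$ after rescaling, and every other contribution comes from the first half of the weights.

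Solving that recursion gives $\mathcal A_L=\sum_{j\le L}r^{2j}|\alpha_j|\le\frac{r^{2q}\Delta M_P}{1-r^{2}}\Phi_L$ with $\ln\Phi_L=O(\ln^{2}L)$ (Proposition~\ref{prop:alpha_growth}). Since $|d_k|\le\frac{C_r}{\Delta r^{2k}}\mathcal A_L$ for $k>L$, the tail is at most $\frac{C_rM_Pr^{2q}\Phi_L}{(1-r^{2})(1-\gamma)}\gamma^{L+1}$ with $\gamma=\|H\|^{2}/r^{2}$ (Proposition~\ref{prop:tail_estimate}). This yields both the rate $\|H\|^{2}/r^{2}$ and $L=O(\max\{\ell,\ln(M_P/\varepsilon)\})$.

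If you prefer your operator picture, the same facts say $\Delta\lambda_0^{(j)}=x^{2j}(1-x^{2})+O(x^{4j})$. The map is therefore multiplication by $1-x^{2}$, invertible on $|x|<1$, plus a perturbation that at least doubles degrees. It is this degree doubling, not analyticity, that keeps the loss quasi-polynomial and lets the admissible norm reach $r$.
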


\begin{thmC}[A filter for a marked state, Section~\ref{sec:filter}]
Let $H_1$ be $\varkappa$-local with $\spec H_1\subset [-\sigma,-\delta]\cup\{0\}\cup[\delta,\sigma]$ and
$\ket m$ its unique $0$-energy state, and let $0<d\le\delta$, $d<\sigma$ and
$\Delta>0$. Applying the two-site gadget to $H_1$ and then recursively iterating it $k$ times,
with fixed affine constants in between, gives a $(\varkappa+k+1)$-local
Hamiltonian $\Sigma_k$ on $k+1$ ancilla qubits whose encoded action
$\widehat R_k(H_1)$ satisfies
\[
\bigl\|\widehat R_k(H_1)-\ket m\!\bra m\bigr\|\;\le\;\frac{\sigma^2}{2d^2}\,\varrho^{-k} ,
\]
with $\varrho\in(2,4)$ explicit, and whose norm, normalised to that action, is
bounded, at fixed $\Delta/\sigma$, by a polynomial in $\sigma/d$ that does not depend on $k$. The construction is exact and
needs no radius condition $($Proposition~\ref{prop:single}$)$.
\end{thmC}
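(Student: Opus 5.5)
The plan is to work first at the level of scalars and then lift to operators through the spectral theorem, exactly as for the even chains. Write $\Sigma_0$ for the operator obtained by applying the two-site gadget to $H_1$, and $\Sigma_j$ for the operator after the $j$-th iteration.

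\emph{The two-site branch.} For $m=1$ the chain is the $2\times2$ matrix $E_2(x)=\begin{pmatrix}0&x\\ x&-\Delta\end{pmatrix}$. Its upper eigenvalue is, in closed form,
\[
\lambda(x)=\tfrac12\bigl(\sqrt{\Delta^2+4x^2}-\Delta\bigr).
\]
This is analytic and even for all real $x$, vanishes only at $0$, and stays separated from the other eigenvalue by at least $\Delta$. So no radius condition is needed.

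\emph{Lifting to the operator.} Replacing $x$ by a Hermitian input $Y$ with spectrum in a known set gives an operator $E_2(Y)$ on one extra ancilla. The isometry $W$ sends each eigenvector of $Y$ to itself tensored with the branch eigenvector at that eigenvalue. With it, $E_2(Y)W=W\lambda(Y)$, and the image of $W$ is invariant, as in Theorem~\ref{thm:realisation}. Locality grows by one per stage: the next stage couples the new ancilla to $Y$, which is itself the $(\varkappa+j)$-local operator $\Sigma_{j-1}$ (shifted by a multiple of $I$). This yields $k+1$ ancillas and locality $\varkappa+k+1$ after the first gadget and $k$ iterations.

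\emph{First stage.} Normalise the first output as $t_0=\lambda(y)/\lambda(\sigma)$, which is an affine image of $\lambda$. This sends $\spec H_1$ into $\{0\}\cup[t_\star,1]$, with $0\mapsto0$ for the marked state. Since $\lambda$ is increasing in $|y|$, we have $t_\star=\lambda(d)/\lambda(\sigma)\ge 2d^2/\sigma^2$, at least at fixed $\Delta/\sigma$ after adjusting constants.

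\emph{Iterated stages.} Each later stage applies $t\mapsto Q(t)=a\,\lambda(bt+c)+e$. The fixed constants in~\eqref{eq:Qmap} are chosen so that:
\begin{itemize}[nosep]
\item $Q$ maps $[0,1]$ into itself;
\item $Q(0)=0$ exactly, so the marked eigenvalue stays at $0$;
\item $Q(1)=1$.
\end{itemize}
The key scalar inequality is a M\"obius-type comparison
\[
h(Q(t))\ge\varrho\,h(t),\qquad h(t)=\frac{t}{1-t},\qquad t\in(0,1].
\]
I would prove it by checking that $t\mapsto h(Q(t))/h(t)$ is minimised at an endpoint. At $t\to0$ the ratio equals $Q'(0)=\theta$. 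At $t\to1$ it equals $1/Q'(1)$. The constant $\varrho\in(2,4)$ of~\eqref{eq:rho} is the smaller of these two values.

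Iterating the inequality gives
\[
1-Q^{k}(t)\le\frac{1}{1+\varrho^k h(t)}\le\frac{\varrho^{-k}}{t}\le\frac{\sigma^2}{2d^2}\varrho^{-k}.
\]
Finally, set $\widehat R_k=1-Q^k(t_0(H_1))$. This is again affine in the last stage, so it is absorbed into the constants. It equals $1$ on $\ket m$ and is small elsewhere, which gives the stated operator-norm bound by the spectral theorem.

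\emph{Main obstacle: the energy bound.} The hard step is showing that the norm of $\Sigma_k$, normalised to its encoded action, is polynomial in $\sigma/d$ uniformly in $k$. Each stage multiplies the relevant scale by the gain constant $a/b$, so a naive bound grows like $\theta^{k}$.

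I would try to control this by tracking the product of the affine rescalings along the recursion. The point is that only the roughly $\log_\varrho(\sigma^2/d^2)$ stages during which the smallest non-marked value is still below a constant incur a genuine gain of $\theta$. Once every non-marked value is in the basin near $1$, the map is contracting there, and I would show that the later rescalings have bounded product. This gives a bound of order $\theta^{\log_\varrho(\sigma^2/d^2)}=(\sigma/d)^{2\log_\varrho\theta}$, independent of $k$.

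Making this telescoping rigorous requires controlling the stage-dependent norms $\|\Sigma_j\|$ jointly with the shifts $c$. This is where I expect the real work to lie.
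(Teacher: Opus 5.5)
Your lifting step, the locality and the ancilla count are fine. The scalar dynamics you propose, however, cannot deliver the statement, for two concrete reasons.

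\textbf{The key comparison fails.} The inequality $h(Q(t))\ge\varrho\,h(t)$ with $\varrho>2$ is false for every admissible choice of constants. A map $Q(t)=a\lambda(bt+c)+e$ has a single critical point, where $bt+c=0$. So if it fixes $0$ and $1$ and maps $[0,1]$ into itself, it must be increasing there: an interior maximum would exceed $Q(1)=1$, and an interior minimum would fall below $Q(0)=0$. For $0$ to repel, $Q$ must then be concave. That means $a<0$ and, up to $(b,c)\mapsto(-b,-c)$, $c>c+b\ge0$. Convexity of $\lambda$ on $[0,\infty)$ then gives
\[
Q'(0)=\frac{|b|\,\lambda'(c)}{\lambda(c)-\lambda(c+b)}\le\frac{c\,\lambda'(c)}{\lambda(c)}=1+\frac{\Delta}{\sqrt{\Delta^{2}+4c^{2}}}<2<\varrho .
\]
So $h(Q(t))/h(t)\to Q'(0)<2$ as $t\to0$, not $\theta$. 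Likewise $\lambda(d)/\lambda(\sigma)\ge d^2/\sigma^2$ only, not $2d^2/\sigma^2$.

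In the paper the factor lifting $\varrho$ above $2$ comes from folding. $Q(y)=\sigma-\theta\lambda(y)$ is even and maps each half of $[-\sigma,\sigma]$ onto the whole interval, which no monotone self-map with two fixed endpoints can do. Accordingly the marked point is not a fixed point. The entry map puts it at $\sigma+u_0$, outside the interval, and it escapes past the repelling fixed point $-\sigma$:
\[
u_{i+1}=\theta\bigl(\lambda(\sigma+u_i)-\lambda(\sigma)\bigr)\ge\theta\lambda'(\sigma)u_i=\varrho u_i .
\]
The stopband is merely kept inside $[-\sigma,\sigma]$; it never converges. One divides by $f_k(0)$ once, at the end. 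The bound $u_0=\theta_{\rm in}\lambda(d)>\theta\lambda(d)>2d^2/\sigma$ then gives $\sigma/|f_k(0)|\le(\sigma/u_0)\varrho^{-k}$, and the factor $2$ there is the width $2\sigma$ of the folded interval.

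\textbf{The energy bound is unreachable in your frame.} This is the more serious gap, and no bookkeeping of rescalings can fix it. Each block also has the eigenvalue $-\Delta-\lambda(y)$, so $\spec\Sigma_{i+1}$ contains $e-a\Delta-a\lambda(bs+c)$ for every $s\in\spec\Sigma_i$. Follow this branch from stage to stage, using $\lambda(y)\ge|y|-\Delta/2$. It produces eigenvalues growing geometrically at rate $|ab|$. Here $|ab|>1$, because $|\lambda'|<1$ gives
\[
1=|Q(1)-Q(0)|=|a|\,|\lambda(b+c)-\lambda(c)|<|ab| .
\]
Your constants are identical at every stage, so the later rescalings cannot have bounded product. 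Your encoded action stays in $[0,1]$ with the marked value pinned, so $\|\Sigma_k\|$ normalised to it grows exponentially in $k$.

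The missing idea is that the marked value must grow as fast as the norm. The paper bounds $\|\Sigma_k\|\le\theta^kE$ and shows $u_{i+1}\ge\max\{\varrho u_i,\ \theta(u_i-\Delta/2)\}$. During a slow phase of about $\log_\varrho(\Delta/u_0)$ stages the marked value gains only $\varrho$ per stage. Afterwards it gains $\theta(1-2^{-m-1})$ per stage, so $|f_k(0)|\ge\tfrac14(\varrho/\theta)^n\theta^ku_0$. The ratio is then independent of $k$. The slow phase alone, through $u_0>2d^2/\sigma$, produces the factor $(\sigma/d)^{2\log_\varrho\theta}$. Your heuristic attaches the slow phase to the stopband rather than to the marked value, which is why it cannot be made rigorous.
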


For analog search on $n$ qubits, where $\sigma/d=n$, accuracy $\eta$ is reached
with $k=\bigl(2\log_2n+\log_2\frac1\eta\bigr)/\log_2\varrho+O(1)$ stages, and
$2/\log_2\varrho=1.007$ at $\Delta=10\sigma$. At constant $\eta$ that is
$O(\log n)$ ancilla qubits and a $(\varkappa+O(\log n))$-local Hamiltonian
(Section~\ref{sec:search}).

Theorem~C is not an instance of Theorem~B. For the filtering application, one could naturally
decompose the Chebyshev filter in the monomial basis and apply the procedure. 
However, to reach the desired accuracy, one would need a degree that scales polynomially
with $n$. Because the coefficients of such a filter are exponentially large with the degree
(Lemma~\ref{lem:MPbig}), the energy scale, in turn, would also explode. 
Another approach, composing a synthesised square along the Chebyshev
doubling identity $T_{2\ell}=2T_\ell^{2}-1$, does work, reaching the Chebyshev filter of
degree $2\ell$ on $O(\log^{3}\ell)$ ancillas, but
every stage pays the weight $\Lambda_L$ of the synthesis and the energy scale still grows as
$\ell^{\log_2(\Lambda_L/\sigma)}$ (Appendix~\ref{app:recursion}). Theorem~C rests on
the observation that the doubling needs only the shape of the square, a map
folding an interval onto itself under which the marked point escapes, and that the
exact branch of the two-site chain has that shape. The price is an error that
decays geometrically in $k$ rather than doubly exponentially, and
Figure~\ref{fig:search} shows that this is enough for analog search.

Theorem~C shows that the exact branch, not only its Taylor polynomial, is a useful primitive. It
reaches Sections~\ref{sec:branch} and~\ref{sec:synthesis} only through
Proposition~\ref{prop:m1}, which is the root of a quadratic, and through the
one-excitation form~\eqref{eq:even-gadget-Hamiltonian-matrix} of
Section~\ref{sec:def}. It uses neither $\Delta_\star(r)$ nor $C_r$, neither the
triangular system nor the truncation tail, and it carries no radius condition at
all, since by Proposition~\ref{prop:m1} the branch of the two-site chain is a
closed-form algebraic function whose only singularities sit at $x=\pm i\Delta/2$,
so on the real axis it is defined everywhere. The general theory earns its place
elsewhere. It is what shows that any polynomial is reachable.

\subsection{Related work}\label{sec:intro:related}

\emph{Polynomial transformations of a spectrum.} QSVT grew out of composite
pulses~\cite{low2016resonant}, quantum signal processing~\cite{low2017qsp} and
qubitisation~\cite{low2019qubitization}. All of them realise $P(A)$ for a
block-encoded $A$ by a circuit of length $\deg P$, with phase factors computed
classically~\cite{haah2019product,dong2021efficient}. Variants transform
eigenvalues rather than singular values~\cite{dong2022qetu,low2024eigenvalue}.
Two works are close to ours in name. \emph{Hamiltonian} singular value
transformation~\cite{lloyd2021hsvt} still runs a circuit of controlled
evolutions. The universal Hamiltonian eigenvalue transformation of Odake,
Kristj\'ansson, Taranto and Murao~\cite{odake2025universal} shares our goal, a
prescribed function $f$ of the eigenvalues of $H$ with its eigenvectors
unchanged, but neither our access model nor our output. There $H$ is unknown and
is given only through its evolution $e^{\pm iHt}$ as a black box, and what is
produced is the evolution $e^{-if(H)t}$, to accuracy $\varepsilon$, by
interleaving the black box with fixed operations on the system and one auxiliary
qubit, with a run time growing as $t^{2}/\varepsilon$. The transformation is a
protocol sequenced in time, and $f(H)$ never appears as a Hamiltonian. Here $H$
is known, and $P(H)$ is the exact action of one time-independent local
Hamiltonian on an invariant subspace, paid for in ancilla qubits and energy
scale. The two are complementary. Theirs applies to an unknown Hamiltonian and
to any sufficiently differentiable function. Ours yields a static operator,
which is what an analog device implements and what allows the transformed
Hamiltonian to enter a larger one as a term, as in Section~\ref{sec:search}. In
both earlier works the Hamiltonian is the resource and a sequence of operations
is the mechanism, whereas here a single time-independent Hamiltonian is the
whole object, and it is in this static sense that we speak of an eigenvalue
transformation. We use no block encoding, and $H$ is Hermitian, so the
transformation acts on eigenvalues. The analog and digital ends are also being joined from the
hardware side, by quantum signal processing on hybrid continuous- and
discrete-variable registers~\cite{liu2024mixed}. That is a different axis from
ours, oscillators against qubits rather than one time-independent operator
against a circuit.

\emph{Perturbative gadgets and Hamiltonian simulation.} The mediator
construction of Kempe--Kitaev--Regev~\cite{kempe2006complexity} and its
descendants~\cite{oliveira2008complexity,jordan2008perturbative,bravyi2008gadgets,biamonte2008realizable,cao2015gadgets,cao2015nostrong}
reproduce a target interaction in a low-energy subspace, with error controlled
by a Schrieffer--Wolff or Feshbach
expansion~\cite{schrieffer1966,feshbach1958,bravyi2011schrieffer}. The theory of
universal Hamiltonians~\cite{cubitt2018universal,piddock2018qudit} classifies
which families can simulate all others in this sense. Our gadget is of the same
physical type but is analysed exactly rather than to a fixed perturbative order,
and the object of interest is an isolated branch rather than a low-energy
effective operator.

\emph{A spectrum as a function of a spectrum.} The closest relative of the
present construction is spectral gap amplification~\cite{somma2013gap}, which
builds from a frustration-free $H$ of gap $\delta$ a second Hamiltonian whose gap
is of order $\sqrt\delta$, a quadratic amplification. That too realises a
prescribed function of a spectrum as a spectrum, by an algebraic rewriting
tailored to one function. What is different here is that the function is a free
parameter, fixed after the fact by a list of chain lengths and real weights. The
price of any scheme of this kind is the interaction strength it asks for, and
\cite{harley2026analogue} makes that point sharply for perturbative gadgets,
whose couplings scale polynomially in the system size and the inverse precision
and which are on that ground physically unrealisable as usually stated. We do not
escape the objection. Propositions~\ref{prop:energy} and~\ref{prop:single} and
Section~\ref{sec:search} say exactly what it costs here.

\emph{Spectral filtering.} Chebyshev filtering is the standard classical tool
for extracting an extremal eigenspace~\cite{zhou2006chebyshev,saad2011eigenvalue},
and its quantum analogues underpin ground-state preparation
algorithms~\cite{poulin2009preparing,ge2019faster,lin2020ground}. Our filter is
the same polynomial, carried by a local Hamiltonian rather than by
a circuit.

\subsection{Organisation}\label{sec:intro:org}

Table~\ref{tab:notation} lists, for reference, the symbols that recur across
sections. Each is also defined where it is first used.

Section~\ref{sec:gadget} defines the gadget and derives its characteristic
polynomial through a transfer matrix. Section~\ref{sec:branch} is the analysis: the real-axis bracket, the
Taylor series, the coefficient bound on a complex disc, and the closed-form
threshold $\Delta_\star(r)$. Section~\ref{sec:synthesis} solves the triangular
system, bounds the growth of its solution and the truncation tail, and assembles
the gadgets into one Hamiltonian. Section~\ref{sec:filter} applies it to a filter
for a marked state. It explains why a steep polynomial cannot be synthesised in
one pass, builds the filter instead by iterating the exact two-site block, and
puts the result into analog search, simulated with every ancilla.
Section~\ref{sec:discussion} discusses how the construction is used, the
one obstruction we have not removed, and the questions it leaves open.
Appendix~\ref{app:recursion} carries out the alternative route, a synthesised
square composed along the Chebyshev doubling identity, Appendix~\ref{app:proofs}
collects the longer proofs and Appendix~\ref{app:odd} the companion odd gadget.

\begin{table}[t]
\centering
\small
\renewcommand{\arraystretch}{1.15}
\begin{tabular}{@{}ll@{\qquad}ll@{}}
\toprule
\multicolumn{2}{@{}l}{\textbf{Gadget}} & \multicolumn{2}{l@{}}{\textbf{Synthesis}}\\
$m$ & half-length of the chain, $2m$ sites & $P$ & even target polynomial, degree $2\ell$\\
$x$ & the input, an eigenvalue of $H$ & $L$ & number of gadgets, $L\ge \ell$\\
$\Delta$ & detuning of the boundary site & $\alpha_j$ & synthesis weights, \eqref{eq:alpha_rec}\\
$E_{2m}(x)$ & the gadget matrix, \eqref{eq:even-gadget-matrix} & $P^{e}_L$ & $\sum_j\alpha_j\lambda_0^{(j)}$\\
$\lambda_0^{(m)}$ & branch vanishing at $x=0$ & $M_P$ & $\max_k|p_{2k}|$, coefficient of $P$\\
$c^{(m)}_k$ & Taylor coefficients of $\lambda_0^{(m)}$ & $\mathcal E_L$ & $P^{e}_L-P$, truncation error\\
$S_k$ & Chebyshev convergents & $\Phi_L$ & growth factor, \eqref{eq:Phi}\\
$R_m$ & $1+S_{m-1}/S_m$, Lemma~\ref{lem:poles} & $\Sigma$ & the realised Hamiltonian\\
\midrule
\multicolumn{2}{@{}l}{\textbf{Analysis}} & \multicolumn{2}{l@{}}{\textbf{Filter}}\\
$r$ & radius of the certified disc & $H_1$ & local input, gap $\delta$, norm $\le\sigma$\\
$\rho$ & Rouch\'e contour radius, $2a_m/\Delta$ & $\ell=2^{k}$ & filter half-degree in Appendix~\ref{app:recursion}\\
$a_m$ & $\sup_{|x|\le r}|M(x)|$, \eqref{eq:ab} & $d$ & passband edge, $d\le\delta$\\
$C_r$ & $2/(1-r^{2})$, coefficient constant & $\eta$ & target filter accuracy\\
$\Delta_\star(r)$ & required detuning, \eqref{eq:summary} & $R_\ell,\ \phi$ & Chebyshev filter of degree $2\ell$, its affine map\\
$\gamma$ & $\|H\|^{2}/r^{2}<1$ & $Q,\ f_i$ & single-block map and stages, Def.~\ref{def:single}\\
$\ket{\xi_j}$ & branch eigenvector of $E_{2j}$ & $\theta,\ \varrho$ & its prefactor and slope, \eqref{eq:Qmap}, \eqref{eq:rho}\\
 & & $\widehat R_k,\ u_0$ & its filter and initial escape, Prop.~\ref{prop:single}\\
 & & $\ket{\Xi_k}$ & encoded ancilla state, Prop.~\ref{prop:single}\\
\bottomrule
\end{tabular}
\caption{Recurring notation. The branch coefficients $c^{(m)}_k$ always carry a
superscript, and the filter values $c_i$ of Appendix~\ref{app:recursion} never
do.
The degree of a target is written $2\ell$ throughout, including the Chebyshev
filter $R_\ell$ of Section~\ref{sec:filter}, the letter $d$ is the passband edge
and never a degree, and the single block, which is not a polynomial, is indexed by
its number $k$ of iterations.}
\label{tab:notation}
\end{table}

\section{The even gadget}\label{sec:gadget}

This section defines the gadget and derives its eigenvalue equation
analytically, so that the behaviour of the target eigenvalue
$\lambda_0^{(m)}(x)$ can be studied rigorously in Section~\ref{sec:branch}. The
chain is tridiagonal, so its characteristic polynomial is generated by a
transfer matrix, and that transfer matrix runs the Chebyshev recurrence.
Section~\ref{sec:branch} uses two outputs of this section: the reduced
eigenvalue equation~\eqref{eq:eig_reduced} and the partial-fraction expansion of
its right-hand side~\eqref{eq:partialfrac}. 

\subsection{Definition and local realisation}\label{sec:def}

We first define the gadget as a $2m\times2m$ matrix of a scalar input $x$, then
as an operator when $x$ is replaced by a Hamiltonian $H$. Lemma~\ref{lem:local}
shows that this operator is carried by a Hamiltonian on $2m$ ancilla qubits, on
their one-excitation sector, whose locality exceeds that of $H$ by two.

For each integer $m\geq 1$, define the $2m\times 2m$ tridiagonal Hermitian matrix
\begin{equation}
E_{2m}(x) =
\begin{pmatrix}
0 & x & 0 & \cdots & 0 \\
x & 0 & 1 & \ddots & \vdots \\
0 & 1 & 0 & \ddots & 0 \\
\vdots & \ddots & \ddots & \ddots & x \\
0 & \cdots & 0 & x & -\Delta
\end{pmatrix},
\label{eq:even-gadget-matrix}
\end{equation}
where the off-diagonal couplings alternate between $x$ and $1$, starting with $x$.

This is a finite Su--Schrieffer--Heeger chain~\cite{su1979solitons,asboth2016short}
with intracell hopping $x$, intercell hopping $1$, and a defect of strength
$-\Delta$ on the last site. We call the couplings the \emph{$x$-bonds} and the
\emph{unit bonds} throughout.

When the scalar coupling $x$ is replaced by an input Hamiltonian $H$, the resulting block matrix $E_{2m}(H)$ can be written as
\begin{multline}
E_{2m}(H) = H\otimes \sum_{k=1}^m \ket{k}\bra{k} \otimes X
+ I\otimes\sum_{k=1}^{m-1} \bigg(\ket{k+1}\bra{k}\otimes \begin{pmatrix} 0 & 1 \\ 0 & 0 \end{pmatrix} + \ket{k}\bra{k+1}\otimes \begin{pmatrix} 0 & 0 \\ 1 & 0 \end{pmatrix}\bigg) \\
+ I\otimes\ket{m}\bra{m} \otimes \begin{pmatrix}0 & 0 \\ 0 & -\Delta\end{pmatrix}. \label{eq:even-gadget-Hamiltonian-matrix}
\end{multline}
This operator acts on the domain of $H$ tensored with the ancilla register $\mathbb{C}^{m}\otimes \mathbb{C}^{2}$:

\begin{lemma}[Local realisation]\label{lem:local}
The operator~\eqref{eq:even-gadget-Hamiltonian-matrix} is the restriction to an invariant subspace of a Hamiltonian on the system and $2m$ ancilla qubits, whose locality is two more than that of $H$. For $m=1$ a single ancilla qubit and one extra unit of locality suffice.
\end{lemma}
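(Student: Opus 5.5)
The plan is to use a unary (one-hot) encoding of the $2m$ ancilla sites. The ancilla register $\mathbb{C}^m\otimes\mathbb{C}^2$ has basis $\ket{k}\otimes\ket{b}$, and I index site $s=2(k-1)+b+1$, so $\ket{k}\otimes\ket{0}$ is site $2k-1$ and $\ket{k}\otimes\ket{1}$ is site $2k$. Each site $s$ is assigned its own qubit, and $\ket{k,b}$ is identified with the state $\ket{e_s}$ in which qubit $s$ alone is excited. The invariant subspace is $\mathcal H_{\mathrm{sys}}\otimes\operatorname{span}\{\ket{e_s}\}$, the single-excitation sector. Reading~\eqref{eq:even-gadget-Hamiltonian-matrix} in site language, the term $H\otimes\ket{k}\bra{k}\otimes X$ is the $x$-bond between sites $2k-1$ and $2k$. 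The $\ket{k+1}\bra{k}\otimes\bigl(\begin{smallmatrix}0&1\\0&0\end{smallmatrix}\bigr)$ term and its adjoint form the unit bond between sites $2k$ and $2k+1$. The last term is $-\Delta$ on site $2m$.

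Each piece is then replaced by an excitation-conserving qubit operator. A bond between sites $s,t$ with coefficient $A$ becomes $A\otimes(\sigma^+_s\sigma^-_t+\sigma^-_s\sigma^+_t)=A\otimes\tfrac12(X_sX_t+Y_sY_t)$. On one-excitation states this acts as $A\otimes(\ket{e_s}\bra{e_t}+\ket{e_t}\bra{e_s})$. The detuning becomes $-\Delta\,\ket{1}\bra{1}_{2m}=-\tfrac{\Delta}{2}(I-Z_{2m})$, which on the sector is $-\Delta\ket{e_{2m}}\bra{e_{2m}}$. Summing, I get
\[
\widetilde E=\sum_{k=1}^m H\otimes\tfrac12\bigl(X_{2k-1}X_{2k}+Y_{2k-1}Y_{2k}\bigr)+\sum_{k=1}^{m-1}\tfrac12\bigl(X_{2k}X_{2k+1}+Y_{2k}Y_{2k+1}\bigr)-\tfrac{\Delta}{2}\bigl(I-Z_{2m}\bigr).
\]
Every term commutes with the total excitation number $\sum_s\ket{1}\bra{1}_s$, so the one-excitation sector is invariant. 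The matrix elements of $\widetilde E$ there are exactly those of~\eqref{eq:even-gadget-Hamiltonian-matrix}, and this is checked term by term. Hermiticity follows because $H$ is Hermitian and the ancilla factors are Hermitian. For locality, expand $H=\sum_j h_j$ with each $h_j$ acting on at most $\varkappa$ qubits. Then $h_j\otimes X_sX_t$ acts on at most $\varkappa+2$ qubits. The remaining terms act on at most two qubits, and $\varkappa\ge1$ for any nontrivial $H$, so the maximum is $\varkappa+2$.

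For $m=1$ the matrix is $\bigl(\begin{smallmatrix}0&H\\H&-\Delta\end{smallmatrix}\bigr)$, which is already an operator on the system tensored with one qubit: $H\otimes X-\tfrac{\Delta}{2}I\otimes(I-Z)$. No restriction is needed, and its locality is $\varkappa+1$. The same binary encoding would fail for $m\ge2$, because the pair register $\ket{k}$ must be either unary-encoded or paid for with extra locality. This is why the unary encoding is used in general.

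There is no serious obstacle here. The only care needed is the bookkeeping of the index map, in particular checking that the $\bigl(\begin{smallmatrix}0&1\\0&0\end{smallmatrix}\bigr)$ convention indeed couples $\ket{k,1}$ to $\ket{k+1,0}$, which is site $2k$ to site $2k+1$. One must also make sure the cross terms $Y_sY_t$ cancel the double-excitation components, so that the hopping is exact on the sector rather than only approximate.
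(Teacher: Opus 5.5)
Your proposal is correct and is essentially the paper's proof. The paper also uses the one-hot encoding of the $2m$ sites on $2m$ qubits, with hopping terms $\tfrac12(X_sX_t+Y_sY_t)$ (multiplied by $H$ on the $x$-bonds) and the detuning written as $-\Delta\ket{1}\bra{1}$ on the last qubit; invariance of the one-excitation sector follows from excitation-number conservation, and for $m=1$ it uses the direct one-qubit form $H\otimes X-\Delta\, I\otimes\ket{1}\bra{1}$, differing from yours only in labelling sites from $0$ rather than $1$.
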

\begin{proof}
We can use the following Hamiltonian on $2m$ qubits labelled $0,\dots,2m-1$:
\begin{equation}
-\Delta\, \tfrac{1-Z_{2m-1}}{2}
+\sum_{k=0}^{m-1} H\otimes \tfrac12\bigl(X_{2k}X_{2k+1}+Y_{2k}Y_{2k+1}\bigr)
+\sum_{k=0}^{m-2} \tfrac12\bigl(X_{2k+1}X_{2k+2}+Y_{2k+1}Y_{2k+2}\bigr)
\label{eq:local-form}
\end{equation}
Three observations turn this into the claim. First, every term
of~\eqref{eq:local-form} commutes with the excitation number
$\mathcal N:=\sum_{k}\tfrac{1-Z_k}{2}$, since $\tfrac12(X_aX_b+Y_aY_b)$ does and
$Z_{2m-1}$ does, so the one-excitation sector
$\mathcal S:=\{\mathcal N=1\}$ is invariant. Second,
$\ket{\text{site }i}\mapsto\ket{0\cdots01_i0\cdots0}$, $i=0,\dots,2m-1$, is a
unitary from $\mathbb C^{2m}$ onto $\mathcal S$, and on $\mathcal S$ one has
$\tfrac12(X_aX_b+Y_aY_b)=\ket{0_a1_b}\!\bra{1_a0_b}+\mathrm{h.c.}$, a hopping of
amplitude exactly $1$ between sites $a$ and $b$, while
$\tfrac{1-Z_{2m-1}}{2}=\ket{1}\!\bra{1}_{2m-1}$ is the projector onto the last
site. Under that unitary~\eqref{eq:local-form} is therefore exactly
\eqref{eq:even-gadget-Hamiltonian-matrix}: $x$-bonds on the pairs
$(2k,2k+1)$ carrying $H$, unit bonds on the pairs $(2k+1,2k+2)$, and $-\Delta$ on
the last site. Third, if $H$ is $\varkappa$-local then each summand
$H\otimes\tfrac12(X_{2j}X_{2j+1}+Y_{2j}Y_{2j+1})$ is $(\varkappa+2)$-local, and the
other terms are at most $2$-local, which is the locality claim.

For $m=1$ the matrix $E_2(x)=\bigl(\begin{smallmatrix}0&x\\x&-\Delta\end{smallmatrix}\bigr)$
is an operator on a single qubit, $E_2(x)=x\,X-\Delta\ket1\!\bra1$, so
$E_2(H)=H\otimes X-\Delta\,I\otimes\ket1\!\bra1$, which is $(\varkappa+1)$-local.
\end{proof}

\begin{remark}[Odd gadgets]\label{rem:odd}
Ending the chain on a unit bond instead, at an odd number of sites, and
replacing the boundary detuning by a term proportional to the input, gives a
companion family $O_{2n+1}(x)$ whose branch through the origin is an \emph{odd}
function of $x$, of leading order $2n+1$, and it supplies the triangular basis for
odd targets. The interior of the chain is unchanged, so everything in
Sections~\ref{sec:gadget}--\ref{sec:synthesis} transposes to it with two
modifications, which Appendix~\ref{app:odd} carries out.
\end{remark}

\subsection{The characteristic polynomial}\label{sec:chareq}

We now compute the characteristic polynomial of $E_{2m}(x)$ in closed form. The
detuning only touches the last site, so we first compute the determinants of the
chain without it, which obey a two-step recurrence, and add the boundary at the
end. That recurrence is generated by a $2\times2$ transfer matrix whose powers
are governed by Chebyshev polynomials, and this is what makes the eigenvalue
analytically computable.

Let $\Theta_k(\lambda)=\det(E_k^{\rm int}(x)-\lambda I)$ where $E_k^{\rm int}(x)$ 
is the $k\times k$ leading principal submatrix of $E_{2m}(x)$ (so no non-zero diagonal entries). 
By convention $\Theta_0=1$ and $\Theta_{-1}=0$.

Expanding the determinant along the last row gives two alternating relations ($x$-bond steps couple with off-diagonal $x$, unit-bond steps with $1$):
\begin{equation}
\begin{aligned}
  \Theta_{2j}&= -\lambda \Theta_{2j-1} - x^2 \Theta_{2j-2},
  \qquad j\ge1 \quad\text{($x$-bond)},\\
  \Theta_{2j-1}&= -\lambda \Theta_{2j-2} - \Theta_{2j-3},
  \qquad j\ge1 \quad\text{(unit bond)}.
\end{aligned}
\label{eq:recurrence}
\end{equation}
Combining successive steps:

\begin{lemma}[Transfer matrix]\label{lem:transfer}
\begin{equation}
  \begin{pmatrix}\Theta_{2j}\\\Theta_{2j-1}\end{pmatrix}
  = T
  \begin{pmatrix}\Theta_{2j-2}\\\Theta_{2j-3}\end{pmatrix},
  \quad j\ge1,\qquad
  T=\begin{pmatrix}\lambda^2-x^2 & \lambda \\ -\lambda & -1\end{pmatrix},
\label{eq:transfer}
\end{equation}
with $\Theta_0=1$, $\Theta_{-1}=0$. Hence
\begin{equation}
  \begin{pmatrix}\Theta_{2m}\\\Theta_{2m-1}\end{pmatrix}
  = T^m\begin{pmatrix}1\\0\end{pmatrix}.
\label{eq:Tm_int}
\end{equation}
\end{lemma}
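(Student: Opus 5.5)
The plan is to derive the transfer matrix directly from the two alternating relations in~\eqref{eq:recurrence}, composing a unit-bond step with the following $x$-bond step, and then to iterate. First we check~\eqref{eq:recurrence} itself, since the transfer matrix is only as good as it. For the leading principal submatrix $E_k^{\rm int}(x)-\lambda I$, the diagonal is $-\lambda$. Expanding the determinant along the last row gives the usual continuant identity $\Theta_k=-\lambda\Theta_{k-1}-b_{k-1}^2\Theta_{k-2}$, where $b_{k-1}$ is the coupling between sites $k-1$ and $k$. The bonds alternate starting with $x$, so the bond entering site $2j$ is an $x$-bond and the bond entering site $2j-1$ (for $j\ge2$) is a unit bond. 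For $j=1$ the unit-bond relation reads $\Theta_1=-\lambda\Theta_0-\Theta_{-1}=-\lambda$, which the convention $\Theta_{-1}=0$ makes correct. The restriction to the leading principal submatrix is what keeps the $-\Delta$ entry out of every $\Theta_k$ with $k\le 2m$.

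Next we substitute the unit-bond relation into the $x$-bond relation:
\begin{equation*}
\Theta_{2j}=-\lambda\bigl(-\lambda\Theta_{2j-2}-\Theta_{2j-3}\bigr)-x^2\Theta_{2j-2}
=(\lambda^2-x^2)\Theta_{2j-2}+\lambda\Theta_{2j-3}.
\end{equation*}
This is the first row of $T$. The second row is the unit-bond relation itself, $\Theta_{2j-1}=-\lambda\Theta_{2j-2}-\Theta_{2j-3}$, read as a linear map on the pair $(\Theta_{2j-2},\Theta_{2j-3})^{\top}$. Together these give~\eqref{eq:transfer} for every $j\ge1$.

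Finally, induction on $j$ from the initial vector $(\Theta_0,\Theta_{-1})^{\top}=(1,0)^{\top}$ gives $(\Theta_{2j},\Theta_{2j-1})^{\top}=T^{j}(1,0)^{\top}$, and taking $j=m$ gives~\eqref{eq:Tm_int}. No step is a real obstacle. The only point needing care is the index bookkeeping at $j=1$, that is, checking that the conventions $\Theta_0=1$ and $\Theta_{-1}=0$ make both relations hold at the first step, so that the induction can start at $j=1$ rather than at $j=2$.
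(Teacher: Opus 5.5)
Your proof is correct and follows the paper's own argument: substitute the unit-bond relation into the $x$-bond relation to get the first row of $T$, read the second row directly off the unit-bond relation, and iterate from $(\Theta_0,\Theta_{-1})=(1,0)$. One small slip is in your side remark: the $2m\times 2m$ leading principal submatrix of $E_{2m}(x)$ is the whole matrix and does contain $-\Delta$, so $\Theta_{2m}$ has to be read as the determinant of the undetuned, zero-diagonal chain. That is how you actually use it, and it is the object the paper needs later in $\chi_{2m}=\Theta_{2m}-\Delta\Theta_{2m-1}$.
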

\deferproof{app:transfer}
Computing $\Theta_{2m}$ thus reduces to computing $T^{m}$, and for a $2\times2$
matrix the powers are controlled by the trace and the determinant. The transfer 
matrix $T$ has $\det(T)=x^2$ and $\Tr(T)=\lambda^2-x^2-1=:\tau$. Its eigenvalues 
are the two roots of $\mu^2-\tau\mu+x^2=0$,
\begin{equation}
  \mu_{\pm} = \frac{\tau\pm\sqrt{\tau^2-4x^2}}{2},
  \qquad \mu_+\mu_-=x^2, \quad \mu_++\mu_-=\tau,
\label{eq:mu}
\end{equation}
which are distinct when $\tau^2\neq4x^2$. By the Cayley--Hamilton theorem 
$T^{2}=\tau T-x^{2}I$, so every power of $T$ is a
combination of $T$ and $I$, with coefficients that obey a scalar two-term
recurrence.
\begin{lemma}[Chebyshev identity]\label{lem:ch} Define $S_k$ by the linear recurrence
\begin{equation}
  S_k = \tau S_{k-1} - x^2 S_{k-2}, \quad S_0=0, S_1=1.
\label{eq:Sk_rec}
\end{equation}
Then $S_k=(\mu_+^k-\mu_-^k)/(\mu_+-\mu_-)$ and
\begin{equation}
  T^m = S_m\,T - x^2 S_{m-1}\,I.
\label{eq:CH}
\end{equation}
\end{lemma}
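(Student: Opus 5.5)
The plan is to prove both claims by induction on $k$ (respectively $m$), using only the recurrence~\eqref{eq:Sk_rec} and the Cayley--Hamilton relation $T^{2}=\tau T-x^{2}I$. I would treat the two statements separately, since the closed form is a statement about scalars and~\eqref{eq:CH} is a statement about matrices.

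\emph{Closed form.} Set $\tilde S_k:=(\mu_+^k-\mu_-^k)/(\mu_+-\mu_-)$, assuming first that $\mu_+\neq\mu_-$. One checks directly that $\tilde S_0=0$ and $\tilde S_1=1$. Because $\mu_\pm$ are the roots of $\mu^2-\tau\mu+x^2=0$, each sequence $\mu_\pm^k$ satisfies $\mu_\pm^k=\tau\mu_\pm^{k-1}-x^2\mu_\pm^{k-2}$. Hence $\tilde S_k$ satisfies the same two-term recurrence as $S_k$, and since both sequences have the same two initial values, they agree for all $k$. This uses only $\mu_++\mu_-=\tau$ and $\mu_+\mu_-=x^2$ from~\eqref{eq:mu}.

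The degenerate case $\tau^2=4x^2$ is the only delicate point. There I would read the closed form as the polynomial $\sum_{j=0}^{k-1}\mu_+^{j}\mu_-^{k-1-j}$, which is symmetric in $\mu_\pm$ and therefore a polynomial in $\tau$ and $x^2$. This polynomial equals the quotient wherever the roots are distinct. Its limit at coincident roots is $k\mu^{k-1}$, which equals $S_k$ by continuity, because $S_k$ is itself a polynomial in $(\tau,x^2)$.

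\emph{Matrix identity.} For~\eqref{eq:CH} I induct on $m$. At $m=1$ the right-hand side is $S_1T-x^2S_0I=T$, so the base case holds. For the step, assume $T^m=S_mT-x^2S_{m-1}I$ and multiply by $T$:
\[
T^{m+1}=S_mT^2-x^2S_{m-1}T=S_m(\tau T-x^2I)-x^2S_{m-1}T=(\tau S_m-x^2S_{m-1})T-x^2S_mI,
\]
and the coefficient of $T$ is $S_{m+1}$ by~\eqref{eq:Sk_rec}. The case $m=0$ also holds, since $S_0T-x^2S_{-1}I$ with $S_{-1}:=-1/x^2$ gives $I$, but it is not needed.

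I do not expect a real obstacle in this argument. The only care needed is the degenerate case of the closed form, and the continuity argument above handles it without affecting~\eqref{eq:CH}, which is a purely algebraic identity valid for all $\lambda$ and $x$.
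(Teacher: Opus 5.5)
Your proposal is correct and follows essentially the same route as the paper: the closed form comes from the characteristic roots $\mu_\pm$ and the initial values, and~\eqref{eq:CH} is proved by induction on $m$ using the Cayley--Hamilton relation $T^2=\tau T-x^2I$. The paper's extra base case $m=2$ is redundant, since its inductive step is one-step. Your treatment of the coincident-root case $\tau^2=4x^2$, which the paper leaves implicit, is a correct addition.
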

\deferproof{app:ch}
The whole dependence on the length of the chain is therefore carried by the
scalar sequence $S_m$. It remains to add the detuned boundary site, which only
modifies the last row of the determinant.

\begin{proposition}[Eigenvalue equation]\label{prop:eig_eq}
  The characteristic polynomial of $E_{2m}(x)$ is
  \begin{equation}
    \chi_{2m}(\lambda)
    := \det(E_{2m}(x)-\lambda I)
    = S_m\bigl(\lambda^2+\Delta\lambda-x^2\bigr) - x^2 S_{m-1}.
  \label{eq:char_poly}
  \end{equation}
  Setting $\chi_{2m}(\lambda)=0$ gives the reduced eigenvalue equation
  \begin{equation}
    \lambda^2+\Delta\lambda = x^2\,\bigl(\underbrace{1+\frac{S_{m-1}}{S_m}}_{=R_m(\lambda,x)}\bigr).
  \label{eq:eig_reduced}
  \end{equation}
\end{proposition}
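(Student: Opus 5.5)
The plan is to expand $\det(E_{2m}(x)-\lambda I)$ along its last row. The only entry of $E_{2m}(x)-\lambda I$ that differs from the interior chain $E^{\rm int}_{2m}(x)-\lambda I$ is the last diagonal entry, which becomes $-\lambda-\Delta$ instead of $-\lambda$. The last row has two nonzero entries, $x$ (coupling to site $2m-1$) and $-\lambda-\Delta$. The same tridiagonal cofactor expansion that produced the $x$-bond step of \eqref{eq:recurrence} therefore gives
\[
\chi_{2m}(\lambda) = (-\lambda-\Delta)\,\Theta_{2m-1} - x^2\,\Theta_{2m-2}.
\]
Equivalently, by linearity of the determinant in the last row, $\chi_{2m}=\Theta_{2m}-\Delta\Theta_{2m-1}$. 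Either form reduces everything to the pair $(\Theta_{2m-1},\Theta_{2m-2})$, or to $(\Theta_{2m},\Theta_{2m-1})$.

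Next I will express these through $S_m$ and $S_{m-1}$ using Lemmas~\ref{lem:transfer} and~\ref{lem:ch}. By \eqref{eq:Tm_int} and \eqref{eq:CH},
\[
\begin{pmatrix}\Theta_{2m}\\ \Theta_{2m-1}\end{pmatrix}=T^m\begin{pmatrix}1\\0\end{pmatrix} = S_m\begin{pmatrix}\lambda^2-x^2\\ -\lambda\end{pmatrix} - x^2S_{m-1}\begin{pmatrix}1\\0\end{pmatrix},
\]
so $\Theta_{2m}=S_m(\lambda^2-x^2)-x^2S_{m-1}$ and $\Theta_{2m-1}=-\lambda S_m$. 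Substituting into $\chi_{2m}=\Theta_{2m}-\Delta\Theta_{2m-1}$ gives
\[
\chi_{2m}(\lambda) = S_m(\lambda^2-x^2) - x^2S_{m-1} + \Delta\lambda S_m,
\]
which is \eqref{eq:char_poly}. As a sanity check, $m=1$ gives $S_1=1$ and $S_0=0$, hence $\chi_2=\lambda^2+\Delta\lambda-x^2$, which matches $\det\bigl(\begin{smallmatrix}-\lambda&x\\x&-\Delta-\lambda\end{smallmatrix}\bigr)$ directly.

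For \eqref{eq:eig_reduced}, I will divide $\chi_{2m}=0$ by $S_m$ and rearrange. This step needs $S_m(\lambda,x)\neq0$, so I will state the reduced equation as valid wherever $S_m\ne0$, where it is equivalent to $\chi_{2m}=0$.

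The main care point is the non-vanishing of $S_m$. At common zeros of $S_m$ and $\chi_{2m}$ we would need $x^2S_{m-1}=0$. Since $S_m$ and $S_{m-1}$ cannot vanish simultaneously when $x\neq0$ (by the recurrence and $S_1=1$), such points can occur only at $x=0$. I will therefore note the equation as an identity of rational functions, with equivalence off the zero set of $S_m$. The rest is routine bookkeeping of the sign conventions in the cofactor expansion, and I would check those on $m=1$ and $m=2$.
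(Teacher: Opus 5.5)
Your proof is correct and is essentially the paper's own argument. You expand along the last row to get $\chi_{2m}=(-\lambda-\Delta)\Theta_{2m-1}-x^{2}\Theta_{2m-2}=\Theta_{2m}-\Delta\Theta_{2m-1}$, then read $\Theta_{2m}$ and $\Theta_{2m-1}$ off $T^{m}(1,0)^{\mathsf T}=S_mT(1,0)^{\mathsf T}-x^{2}S_{m-1}(1,0)^{\mathsf T}$, which the paper packages as $\chi_{2m}=(1,-\Delta)\,T^{m}(1,0)^{\mathsf T}$.

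Your side remark is also correct: $S_m$ and $S_{m-1}$ have no common zero when $x\neq0$ (descend the recurrence to $S_1=1$), so dividing by $S_m$ loses no eigenvalue away from $x=0$. This is slightly sharper than the paper, which only certifies the division on $\mathcal D$ through Lemma~\ref{lem:poles}(ii).
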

\deferproof{app:eig_eq}
The reduced equation has $S_m$ as a denominator, so the zeros of $S_m$ are the
poles that the analysis has to avoid. The
recurrence~\eqref{eq:Sk_rec} is the Chebyshev recurrence, which makes these zeros
explicit. Since $\tau=\lambda^{2}-x^{2}-1$ depends on $\lambda$ only through
$\lambda^{2}$, we work throughout in the variable  

\begin{equation}
w:=\lambda^{2},
\qquad\text{so that}\qquad
\tau=w-x^{2}-1 .
\label{eq:wvar}
\end{equation}
\begin{lemma}[Structure of $S_m$]\label{lem:poles}
Fix $x\ne0$ and regard $S_k$ as a polynomial in $w$ and $x$ through~\eqref{eq:wvar}.
Write $\vartheta_j:=j\pi/m$. Then:
\begin{enumerate}[label=(\roman*)]
\item $S_m=x^{m-1}\,U_{m-1}\!\bigl(\tfrac{\tau}{2x}\bigr)$, where $U_n$ 
is the Chebyshev polynomial of the second kind. In particular $S_m$ is 
monic of degree $m-1$ in $w$, and
  \begin{equation}
  S_m=\prod_{j=1}^{m-1}(w-w_j),
  \qquad
  w_j = 1+x^{2}+2x\cos\vartheta_j=(x-\omega_j)(x-\bar\omega_j),
  \qquad \omega_j:=-e^{\,i \vartheta_j} .
  \label{eq:poles}
  \end{equation}
\item Every zero satisfies
      \begin{equation}
      |w_j| \ge (1-|x|)\,\max\bigl\{\,1-|x|,\ \sin\vartheta_j\,\bigr\},
      \label{eq:polebound2}
      \end{equation}
      and in particular $|w_j|\ge(1-|x|)^{2}$.
      Consequently $S_m$ has no zero in the domain
      \begin{equation}
      \mathcal D := \bigl\{(\lambda,x):\ |\lambda|<1-|x|\bigr\},
      \label{eq:Ddomain}
      \end{equation}
      so $R_m=1+S_{m-1}/S_m$ is defined and analytic there.
\item On $\mathcal D$ the ratio has an explicit positive partial-fraction
      expansion,
      \begin{equation}
      \frac{S_{m-1}}{S_m} = \sum_{j=1}^{m-1}\frac{2\kappa_j}{w-w_j},
      \qquad \kappa_j=\frac{\sin^{2}\vartheta_j}{m} > 0,
      \qquad \sum_{j=1}^{m-1}\kappa_j=\tfrac12 .
      \label{eq:partialfrac}
      \end{equation}
\end{enumerate}
\end{lemma}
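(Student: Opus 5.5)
The plan is to identify $S_k$ with a rescaled Chebyshev polynomial of the second kind and then read every claim off the trigonometric form $U_n(\cos\theta)=\sin((n+1)\theta)/\sin\theta$. The only step that needs real care is the residue computation in (iii).

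\emph{Step (i).} I would verify $S_k=x^{k-1}U_{k-1}\bigl(\tfrac{\tau}{2x}\bigr)$ by induction, using the recurrence $U_{n}(z)=2zU_{n-1}(z)-U_{n-2}(z)$ with $U_{-1}=0$ and $U_0=1$.
\begin{itemize}
\item Multiplying the Chebyshev recurrence by $x^{k-1}$ and substituting $z=\tau/(2x)$ gives exactly~\eqref{eq:Sk_rec}, and the initial values $S_0=0$, $S_1=1$ match.
\item The leading term of $U_{m-1}$ is $2^{m-1}z^{m-1}$, so $S_m=\tau^{m-1}+(\text{lower powers of }\tau)$. Since $\tau=w-x^2-1$, $S_m$ is monic of degree $m-1$ in $w$.
\item The zeros of $U_{m-1}$ are $z=\cos\vartheta_j$ for $j=1,\dots,m-1$. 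They give $\tau=2x\cos\vartheta_j$, that is, $w_j=1+x^2+2x\cos\vartheta_j$.
\item Factoring $x^2+2x\cos\vartheta_j+1=(x+e^{i\vartheta_j})(x+e^{-i\vartheta_j})$ gives~\eqref{eq:poles}.
\end{itemize}
Since $x\neq0$ and the $\cos\vartheta_j$ are distinct, the $w_j$ are distinct, so all poles are simple.

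\emph{Step (ii).} Because $|\omega_j|=1$, each factor satisfies $|x-\omega_j|\ge1-|x|$ and $|x-\bar\omega_j|\ge1-|x|$. By the triangle inequality,
\[
|x-\omega_j|+|x-\bar\omega_j|\ \ge\ |\omega_j-\bar\omega_j|\ =\ 2\sin\vartheta_j ,
\]
so at least one of the two factors is at least $\sin\vartheta_j$, while the other is at least $1-|x|$. Multiplying gives the bound with $\max\{1-|x|,\sin\vartheta_j\}$ in~\eqref{eq:polebound2}. On $\mathcal D$ we have $|w|=|\lambda|^2<(1-|x|)^2\le|w_j|$, so no zero of $S_m$ lies in $\mathcal D$ (here implicitly $|x|<1$). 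Hence $R_m$ is analytic on $\mathcal D$.

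\emph{Step (iii).} As a rational function of $w$, $S_{m-1}/S_m$ has a numerator of degree $m-2$ and a denominator that is monic of degree $m-1$ with simple zeros. It therefore equals $\sum_j \mathrm{Res}_j/(w-w_j)$, where $\mathrm{Res}_j=S_{m-1}(w_j)/\partial_wS_m(w_j)$. I would compute these residues as follows.
\begin{itemize}
\item Since $\partial_w=\tfrac{1}{2x}\partial_z$, we get $\partial_wS_m=\tfrac{x^{m-2}}{2}U'_{m-1}(z)$. The powers of $x$ then cancel, leaving $\mathrm{Res}_j=2U_{m-2}(z_j)/U'_{m-1}(z_j)$.
\item Write $z=\cos\theta$. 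Then $U_{m-2}(\cos\vartheta_j)=\sin((m-1)\vartheta_j)/\sin\vartheta_j=(-1)^{j+1}$.
\item Differentiating $\sin(m\theta)/\sin\theta$ at a zero, and using $dz/d\theta=-\sin\theta$, gives $U'_{m-1}(z_j)=-m(-1)^j/\sin^2\vartheta_j$.
\item Hence $\mathrm{Res}_j=2\sin^2\vartheta_j/m=2\kappa_j$, which is positive and independent of $x$.
\end{itemize}
The identity holds as rational functions, so in particular on $\mathcal D$. The sum rule follows from $\sum_{j=1}^{m-1}\sin^2(j\pi/m)=m/2$. As a cross-check, $S_{m-1}/S_m\sim1/w$ as $w\to\infty$ forces $\sum_j2\kappa_j=1$.

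I expect the main obstacle to be this residue computation, specifically tracking the signs $(-1)^j$ and the chain-rule factor $1/(2x)$ correctly. The asymptotic cross-check $\sum_j 2\kappa_j=1$ guards against an error there.
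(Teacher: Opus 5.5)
Your proposal is correct and follows the paper's proof essentially step for step: the rescaling to $U_{m-1}(\tau/2x)$, the triangle-inequality bound on the two factors $|x-\omega_j|$ and $|x-\bar\omega_j|$, and the residue computation via $U_{m-1}(\cos\theta)=\sin(m\theta)/\sin\theta$. The only difference is cosmetic: you take residues directly in $w$ using $\partial_w=\tfrac1{2x}\partial_z$, whereas the paper expands $U_{m-2}/U_{m-1}$ in $z$ and then substitutes $z-z_j=(w-w_j)/2x$, and your check $S_{m-1}/S_m\sim 1/w$ gives an independent confirmation of $\sum_j\kappa_j=\tfrac12$.
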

\deferproof{app:poles}
Lemma~\ref{lem:poles}(ii) shows in particular that the division
in~\eqref{eq:eig_reduced} is legitimate on all of $\mathcal D$, where the reduced
equation reads $\lambda(\lambda+\Delta)=x^{2}R_m(\lambda,x)$. For real arguments,
the partial-fraction form~\eqref{eq:partialfrac} bounds $R_m$ tightly enough to
bracket the target eigenvalue $\lambda_0^{(m)}$.

\begin{corollary}[The value at $\lambda=0$, and monotonicity]\label{cor:threefold}
Let $|x|<1$ and let $w=\lambda^{2}$ with $(\lambda,x)\in\mathcal D$.
\begin{enumerate}[label=(\roman*)]
\item \emph{(Value at $\lambda=0$.)} $w=0$ gives
      $\frac{S_{m-1}}{S_m}=-\sum_j\frac{2\kappa_j}{w_j}$, and evaluating the
      recurrence directly,
      \begin{equation}
      x^{2}R_m(0,x) = M(x) := \frac{x^{2m}(1-x^{2})}{1-x^{2m}} .
      \label{eq:Mvalue}
      \end{equation}
\item \emph{(Monotonicity.)} For real $x$ and $0\le w<(1-|x|)^{2}$ each
      term $\frac{2\kappa_j}{w-w_j}$ is negative and strictly decreasing in
      $w$, so $w\mapsto R_m$ is strictly decreasing, and
      \begin{equation}
      x^{2}R_m(\lambda,x) \le M(x)
      \qquad\text{with}\qquad
      M-x^{2}R_m  =  x^{2}\,w\sum_{j=1}^{m-1}\frac{2\kappa_j}{w_j\,(w_j-w)}  \ge 0 .
      \label{eq:monotone}
      \end{equation}
\end{enumerate}
\end{corollary}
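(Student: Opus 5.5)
Part (i) reduces to evaluating the recurrence~\eqref{eq:Sk_rec} at $w=0$, i.e.\ at $\tau=-(1+x^{2})$. The characteristic equation $\mu^{2}+(1+x^{2})\mu+x^{2}=0$ then factors as $(\mu+1)(\mu+x^{2})=0$, so $\mu_+=-x^{2}$ and $\mu_-=-1$; these are distinct for $|x|<1$. Lemma~\ref{lem:ch} then gives
\[
S_k=\frac{(-x^{2})^{k}-(-1)^{k}}{1-x^{2}}=(-1)^{k+1}\,\frac{1-x^{2k}}{1-x^{2}} .
\]
Hence
\[
\frac{S_{m-1}}{S_m}=-\frac{1-x^{2m-2}}{1-x^{2m}},
\qquad
1+\frac{S_{m-1}}{S_m}=\frac{x^{2m-2}(1-x^{2})}{1-x^{2m}} .
\]
Multiplying by $x^{2}$ gives~\eqref{eq:Mvalue}. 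Setting $w=0$ in~\eqref{eq:partialfrac} gives the stated sum form directly. This is legitimate because $(0,x)\in\mathcal D$ for $|x|<1$, where Lemma~\ref{lem:poles}(ii) guarantees $S_m\neq0$. The value at $x=0$, where $M=0$, is covered by continuity or read off directly.

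For (ii), I will take $x$ real with $|x|<1$. From~\eqref{eq:poles}, $w_j=1+x^{2}+2x\cos\vartheta_j$ is real, and Lemma~\ref{lem:poles}(ii) gives $w_j\ge(1-|x|)^{2}>0$. For $0\le w<(1-|x|)^{2}$ we therefore have $w-w_j<0$. Since $\kappa_j>0$, each term $2\kappa_j/(w-w_j)$ is negative. Its derivative in $w$ is $-2\kappa_j/(w-w_j)^{2}<0$, so each term is strictly decreasing, and so is the sum $R_m=1+\sum_j 2\kappa_j/(w-w_j)$. The edge case $m=1$ has an empty sum, so $R_1\equiv1$ is constant and the inequality holds with equality; I will note this rather than claim strictness.

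The identity in~\eqref{eq:monotone} comes from subtracting~\eqref{eq:partialfrac} at $w$ from its value at $0$, term by term:
\[
\frac{1}{-w_j}-\frac{1}{w-w_j}=\frac{-(w-w_j)-w_j}{w_j(w-w_j)}=\frac{w}{w_j(w_j-w)} .
\]
Multiplying by $2\kappa_j x^{2}$ and using (i) to identify $x^{2}R_m(0,x)=M(x)$ gives the displayed formula. Each summand is nonnegative because $w\ge0$, $w_j>0$ and $w_j-w>0$, which gives $x^{2}R_m\le M$.

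There is no real obstacle here. The only point needing care is the hypothesis: the monotonicity uses real $x$ and real $w=\lambda^{2}\ge0$, i.e.\ real $\lambda$ with $|\lambda|<1-|x|$, so that $w<(1-|x|)^{2}\le w_j$ keeps every denominator of one sign.
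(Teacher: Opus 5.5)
Your proof is correct and follows the paper's own route: evaluating $S_k$ at $w=0$, where the transfer eigenvalues are $-1$ and $-x^{2}$, gives~\eqref{eq:Mvalue}, and subtracting~\eqref{eq:partialfrac} term by term from its value at $w=0$ gives the identity in~\eqref{eq:monotone}, exactly as the paper does in the proof of Lemma~\ref{lem:Rflat}. One small precision: Lemma~\ref{lem:poles}(ii) bounds only $|w_j|$, so for real $x$ you should read the sign $w_j>0$ directly from $w_j=1+x^{2}+2x\cos\vartheta_j\ge(1-|x|)^{2}$. Your remark that strict monotonicity fails at $m=1$, where $R_1\equiv1$, is a fair correction to the statement as written.
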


Finally, the closed form~\eqref{eq:char_poly} contains $x$ only through $x^{2}$,
which gives the parity of the target eigenvalue $\lambda_0^{(m)}$ that we will use 
to restrict to $x\ge0$ and to show that the branch is even.

\begin{proposition}[Parity]\label{prop:parity}
The characteristic polynomial $\chi_{2m}$ depends on $x$ only through $x^{2}$.
Consequently the ordered eigenvalues $\lambda_1(x)\le\cdots\le\lambda_{2m}(x)$ of
$E_{2m}(x)$, $x$ real, are even functions of $x$, and so is any branch singled
out by a condition invariant under $x\mapsto-x$.
\end{proposition}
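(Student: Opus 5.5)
The first claim will follow directly from the closed form~\eqref{eq:char_poly} of Proposition~\ref{prop:eig_eq}. There
\[
\chi_{2m}(\lambda)=S_m\bigl(\lambda^2+\Delta\lambda-x^2\bigr)-x^2S_{m-1}.
\]
The factor $\lambda^2+\Delta\lambda-x^2$ and the prefactor $x^2$ visibly involve $x$ only through $x^2$. By~\eqref{eq:Sk_rec}, $S_k$ is generated from $S_0=0$, $S_1=1$ by a recurrence whose coefficients are $\tau=\lambda^2-x^2-1$ and $x^2$. An induction on $k$ therefore shows that each $S_k$ is a polynomial in $\lambda$ and $x^2$. This argument deliberately avoids the formula $S_k=(\mu_+^k-\mu_-^k)/(\mu_+-\mu_-)$, whose square roots would obscure the dependence on $x$. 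Hence $\chi_{2m}(\lambda;x)=\chi_{2m}(\lambda;-x)$ identically, as polynomials in $\lambda$.

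The second claim is about ordered eigenvalues. For real $x$, $E_{2m}(x)$ is real symmetric, so its spectrum is real. The multiset of roots of $\chi_{2m}(\cdot\,;x)$, counted with multiplicity, is then the same as that of $\chi_{2m}(\cdot\,;-x)$. Sorting a multiset gives a unique nondecreasing list, so $\lambda_i(x)=\lambda_i(-x)$ for each $i$. As an independent check one can use conjugation: $D=\operatorname{diag}(1,-1,-1,1,1,-1,\dots)$, with signs flipping across each $x$-bond and preserved across each unit bond, satisfies $DE_{2m}(x)D=E_{2m}(-x)$ and fixes the diagonal $-\Delta$. This check is not needed for the argument.

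For the last assertion, let $\mathcal C(\lambda,x)$ be a condition invariant under $x\mapsto-x$, for example ``$\lambda$ is the unique eigenvalue in a given disc'' or ``$\lambda$ lies in the bracket of Corollary~\ref{cor:threefold}''. Suppose it singles out, for each $x$, one eigenvalue $\lambda(x)$ of $E_{2m}(x)$. Since $E_{2m}(-x)$ has the same spectrum and the condition is invariant, $\lambda(x)$ is an eigenvalue of $E_{2m}(-x)$ satisfying the condition there. By uniqueness, $\lambda(-x)=\lambda(x)$.

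There is no real obstacle. The only care needed is to phrase ``depends only through $x^2$'' at the level of polynomial identities rather than via the $\mu_\pm$ representation, and to use uniqueness of the singled-out branch in the final step.
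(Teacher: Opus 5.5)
Your proof is correct and follows the paper's own argument. Each $S_k$ is a polynomial in $\tau$ and $x^{2}$ by the recurrence, so $\chi_{2m}(\lambda;x)=\chi_{2m}(\lambda;-x)$, the two matrices are isospectral, and evenness follows for the ordered eigenvalues and for any branch singled out invariantly. Your sign conjugation $D E_{2m}(x) D = E_{2m}(-x)$ is a valid, slightly stronger unitary-equivalence check that the paper does not use, and the bracket you cite is Proposition~\ref{prop:bracket} rather than Corollary~\ref{cor:threefold}.
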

\deferproof{app:parity}

\section{The target eigenvalue}\label{sec:branch}
This section supplies the second step of the argument sketched in
Section~\ref{sec:intro:gadget}: the eigenvalue $\lambda_0^{(m)}(x)$ of $E_{2m}(x)$
that vanishes at $x=0$ is a single well-defined analytic function of the input,
with every Taylor coefficient controlled, uniformly in the chain length. The
route is: bracket the eigenvalue on the real axis (Section~\ref{sec:bracket}),
read off the first Taylor coefficients from the bracket
(Section~\ref{sec:series}), bound the branch on a complex disc of radius $r$ by one application
of Rouch\'e's theorem (Section~\ref{sec:radius}), and solve the resulting
condition on the detuning in closed form to achieve any $r<1$ (Section~\ref{sec:summary}).

As in Theorem~A, we index the Taylor coefficients of the branch from its leading
order,
\[
\lambda_0^{(m)}(x)=\sum_{k\ge0}c^{(m)}_k\,x^{2m+2k},
\]
so that $c^{(m)}_k$ is the coefficient of $x^{2m+2k}$ and $k$ counts the orders
beyond the leading one. The rest of the paper uses three results of this
section. Theorem~\ref{thm:series} computes the first coefficients exactly:
$c^{(m)}_0=1/\Delta\neq0$ makes the synthesis of polynomials (Section~\ref{sec:synthesis})
triangular, and the run of zeros $c^{(m)}_k=0$ for $1<k<m$ makes it sparse. The
other two hold for a target input radius $r\in(0,1)$ once the detuning exceeds a
threshold $\Delta_\star(r)$, given in closed form in~\eqref{eq:summary}. For
every $x$ in the complex disc $\{|x|<r\}$ the branch is then a simple eigenvalue
of $E_{2m}(x)$, and the only one in a disc around the origin
(Theorem~\ref{thm:onestep}, and Proposition~\ref{prop:m1} for $m=1$), so its
eigenvector, used in Theorem~\ref{thm:realisation}, is well defined up to a
phase. Its coefficients then obey
\[
\bigl|c^{(m)}_k\bigr|\ \le\ \frac{C_r}{\Delta\,r^{2k}}
\quad\text{for every } m\ge1,\ k\ge0,
\qquad
C_r=\frac{2}{1-r^{2}} ,
\]
which is the main bound needed for the synthesis of polynomials in Section~\ref{sec:synthesis}. 

\subsection{A two-sided bracket on the real axis}\label{sec:bracket}
On the real axis the eigenvalue is now trapped between two explicit bounds.
Writing the
eigenvalue equation of Proposition~\ref{prop:eig_eq} as
\begin{equation}
\lambda\,(\lambda+\Delta) = x^{2}R_m(\lambda,x),
\label{eq:eig_Rm}
\end{equation}
Corollary~\ref{cor:threefold} says that the right-hand side equals
$M(x)$ at $\lambda=0$ and only decreases as $\lambda^{2}$ grows. That
alone gives the upper bound, and quantifying the decrease gives the lower
one.
\begin{proposition}[Two-sided bound]\label{prop:bracket}
Let $0<x<1$, put $W:=(1-x)^{2}$ and
$M=M(x)=\frac{x^{2m}(1-x^{2})}{1-x^{2m}}$, and suppose
$\bigl(\tfrac M\Delta\bigr)^{2}<W$. Then $E_{2m}(x)$ has an eigenvalue
$\lambda_0^{(m)}(x)$ with
\begin{equation}
\frac{M}{\Delta}\left(1-\frac{M}{\Delta^{2}}\Bigl(1+\frac{x^{2}}{W\bigl(W-(M/\Delta)^{2}\bigr)}\Bigr)\right) \le \lambda_0^{(m)}(x) \le \frac{M}{\Delta} .
\label{eq:bracket}
\end{equation}
In particular $\lambda_0^{(m)}(x)=\frac{M}{\Delta}\bigl(1+O(x^{2m})\bigr)$ as $x\to0$.
\end{proposition}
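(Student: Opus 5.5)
The approach is an intermediate-value argument applied to the reduced equation~\eqref{eq:eig_Rm} on the real interval $\lambda\in[0,\sqrt W)$, where $(\lambda,x)\in\mathcal D$. There Lemma~\ref{lem:poles}(ii) shows that $S_m\neq0$, so the zeros of $\chi_{2m}$ in this interval are exactly the zeros of
\[
F(\lambda):=\lambda(\lambda+\Delta)-x^{2}R_m(\lambda,x).
\]
This function is continuous on $[0,\sqrt W)$. At $\lambda=0$, Corollary~\ref{cor:threefold}(i) gives $F(0)=-M<0$. At $\lambda_+:=M/\Delta$, which lies in the interval because $(M/\Delta)^2<W$, Corollary~\ref{cor:threefold}(ii) gives $x^{2}R_m\le M$. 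Hence $F(\lambda_+)\ge \lambda_+\Delta+\lambda_+^2-M=\lambda_+^2>0$. So $F$ has a zero $\lambda_0^{(m)}(x)\in(0,M/\Delta)$, and this zero is an eigenvalue of $E_{2m}(x)$. This already gives the upper bound in~\eqref{eq:bracket}.

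For the lower bound, fix this root $\lambda=\lambda_0^{(m)}$ and put $w=\lambda^2\le(M/\Delta)^2<W$. Using the identity in~\eqref{eq:monotone}, the root satisfies
\[
\lambda\Delta=M-\lambda^2-x^2w\sum_j\frac{2\kappa_j}{w_j(w_j-w)}.
\]
I would bound each subtracted term from above.
\begin{itemize}
\item The first term satisfies $\lambda^2\le(M/\Delta)^2=M\cdot M/\Delta^2$.
\item For the sum, I use that for real $x\in(0,1)$ one has $w_j=1+x^2+2x\cos\vartheta_j\ge(1-x)^2=W$. Hence $w_j-w\ge W-w\ge W-(M/\Delta)^2$. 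Together with $\sum_j2\kappa_j=1$ from~\eqref{eq:partialfrac}, this gives
\[
x^2w\sum_j\frac{2\kappa_j}{w_j(w_j-w)}\le \frac{x^2w}{W\bigl(W-(M/\Delta)^2\bigr)}\le \frac{M^2}{\Delta^2}\cdot\frac{x^2}{W\bigl(W-(M/\Delta)^2\bigr)}.
\]
\end{itemize}
Dividing by $\Delta$ gives
\[
\lambda\ge\frac M\Delta-\frac{M^2}{\Delta^3}\Bigl(1+\frac{x^2}{W(W-(M/\Delta)^2)}\Bigr),
\]
which is exactly~\eqref{eq:bracket}.

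The asymptotic statement follows directly. Since $M=O(x^{2m})$ and $W\to1$ as $x\to0$, the bracket in~\eqref{eq:bracket} is $1-O(M)=1+O(x^{2m})$.

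The main obstacle I expect is making sure the real-axis facts are valid at the points where they are used. Specifically:
\begin{itemize}
\item The positivity $w_j\ge W$ must hold for all $j$; it does, since $\cos\vartheta_j\ge-1$.
\item The monotonicity of Corollary~\ref{cor:threefold}(ii) applies only for $0\le w<(1-x)^2$. This is guaranteed precisely by the hypothesis $(M/\Delta)^2<W$, which is why that hypothesis appears in the statement.
\end{itemize}
If the case $m=1$ arises, the sum over $j$ is empty and the bound holds trivially.
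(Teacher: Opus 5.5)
Your proof is correct and is essentially the paper's argument. The intermediate value theorem for $F(\lambda)=\lambda(\lambda+\Delta)-x^{2}R_m$ on $[0,M/\Delta]$ gives the upper bound. The identity in~\eqref{eq:monotone}, with $w_j\ge W$, $\sum_j2\kappa_j=1$ and $\lambda\le M/\Delta$ substituted back, gives the lower bound.

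The one step the paper includes and you omit is identifying your root with the branch $\lambda_0^{(m)}$ that vanishes at $x=0$. First, $F$ is strictly increasing on $[0,M/\Delta]$, because $\lambda(\lambda+\Delta)$ is strictly increasing there and $-x^{2}R_m$ is non-decreasing by Corollary~\ref{cor:threefold}(ii), so the root is unique in that interval. Second, since it lies in $[0,M/\Delta]$, it tends to $0$ with $x$, so near $x=0$ it is that branch. You should add this step. Calling your root $\lambda_0^{(m)}(x)$ needs it, and so does applying the bracket to the analytic branch, as the proof of Theorem~\ref{thm:series} does.
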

\deferproof{app:bracket}

The eigenvalue is also well separated from the rest of the spectrum of the chain.
Write $E_{2m}(x)=E_{2m}(0)+V$, where $V$ carries the $x$-bonds. These bonds are
disjoint, so $\|V\|=|x|$. For $\Delta>1$ the ordered eigenvalues of
$E_{2m}(0)$ are $-\Delta$, then $-1$ with multiplicity $m-1$, then $0$, then
$+1$ with multiplicity $m-1$, and Weyl's
inequality moves each of them by at most
$|x|$. For $x$ small enough, one eigenvalue of $E_{2m}(x)$ therefore lies
in $(-1+|x|,\,1-|x|)$, and since Propositions~\ref{prop:parity}
and~\ref{prop:bracket} give $0\le\lambda_0^{(m)}(x)\le M/\Delta<\tfrac14$ there,
that eigenvalue is the branch. Every other eigenvalue lies at distance at least
$1-|x|-\lambda_0^{(m)}(x)$ from it.

\subsection{The Taylor series}\label{sec:series}

The bracket of Section~\ref{sec:bracket} has relative width $O(x^{2m})$, so it
already pins down every coefficient of the branch below order $x^{4m}$. This
subsection reads them off, and the run of zeros it finds is what makes the
triangular basis sparse.

\begin{theorem}[Even-series structure]\label{thm:series}
The branch $\lambda_0^{(m)}$ is analytic on a ball around $x=0$ and admits there the
convergent expansion
\begin{equation}
\lambda_0^{(m)}(x)=\sum_{k\ge 0} c_k^{(m)} x^{2m+2k},
\qquad c_0^{(m)}=\frac1\Delta .
\end{equation}
If moreover $m\ge2$, then
\[
c_1^{(m)}=-\frac1\Delta,
\qquad
c_k^{(m)}=0 \quad\text{for } 1<k<m,
\qquad
c_m^{(m)}=\frac1\Delta-\frac1{\Delta^3}.
\]
\end{theorem}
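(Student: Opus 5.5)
The plan is to avoid the real-axis inequalities of Proposition~\ref{prop:bracket} for the coefficients and instead work directly with the reduced equation~\eqref{eq:eig_Rm} as an analytic identity. Set
\[
F(\lambda,x):=\lambda(\lambda+\Delta)-x^{2}R_m(\lambda,x).
\]
By Lemma~\ref{lem:poles}(ii), $R_m$ is analytic on $\mathcal D$, which contains a polydisc around $(0,0)$, so $F$ is analytic there.

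\emph{Existence and analyticity.} We have $F(0,0)=0$ and $\partial_\lambda F(0,0)=\Delta\neq0$. The analytic implicit function theorem therefore gives a unique analytic solution $\lambda(x)$ near $x=0$ with $\lambda(0)=0$. For small real $x$ this is an eigenvalue of $E_{2m}(x)$ with $|\lambda|<1-|x|$. By the Weyl-separation remark after Proposition~\ref{prop:bracket}, it coincides with the bracketed eigenvalue $\lambda_0^{(m)}$. Since $R_m$ depends on $x$ only through $x^{2}$ (Proposition~\ref{prop:parity}), $\lambda(-x)$ solves the same equation, and uniqueness gives that $\lambda$ is even. So $\lambda_0^{(m)}$ is a convergent power series in $x^{2}$.

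\emph{Leading order.} The bracket~\eqref{eq:bracket} gives $\lambda=O(x^{2m})$, hence $w=\lambda^{2}=O(x^{4m})$.

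\emph{Splitting off $M$.} By~\eqref{eq:monotone}, extended analytically in $w$ (the denominators $w_j$ and $w_j-w$ stay away from zero near the origin for $|x|<1$),
\[
x^{2}R_m(\lambda,x)=M(x)-x^{2}w\sum_{j=1}^{m-1}\frac{2\kappa_j}{w_j(w_j-w)}=M(x)+O\bigl(x^{4m+2}\bigr).
\]
The sum is analytic and bounded near $w=0$, since $w_j\to1$ as $x\to0$. Dividing the equation by $\lambda+\Delta$ then gives
\[
\lambda=\frac{M+O(x^{4m+2})}{\Delta+\lambda}=\frac{M}{\Delta}-\frac{M\lambda}{\Delta^{2}}+O(x^{4m+2})
=\frac{M}{\Delta}-\frac{M^{2}}{\Delta^{3}}+O\bigl(x^{4m+2}\bigr)+O\bigl(x^{6m}\bigr).
\]

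\emph{Reading off coefficients.} Expanding
\[
M=x^{2m}(1-x^{2})\sum_{j\ge0}x^{2mj}=x^{2m}-x^{2m+2}+x^{4m}-x^{4m+2}+\cdots
\]
and $M^{2}=x^{4m}+O(x^{4m+2})$, we obtain
\[
\lambda_0^{(m)}=\frac{x^{2m}}{\Delta}-\frac{x^{2m+2}}{\Delta}+\Bigl(\frac1\Delta-\frac1{\Delta^{3}}\Bigr)x^{4m}+O\bigl(x^{4m+2}\bigr).
\]
This yields $c_0^{(m)}=1/\Delta$ for all $m$. For $m\ge2$ the exponents satisfy $2m+2<4m$, so the $x^{2m+2}$ term is untouched by $M^{2}/\Delta^{3}$. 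This gives $c_1^{(m)}=-1/\Delta$, $c_k^{(m)}=0$ for $1<k<m$ (no term of $M$ or of the corrections lies strictly between $x^{2m+2}$ and $x^{4m}$), and $c_m^{(m)}=1/\Delta-1/\Delta^{3}$. For $m=1$ only $c_0$ is claimed, consistent with the collision of $x^{4}$ contributions.

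\emph{Main obstacle.} The step needing most care is justifying the error term $O(x^{2}w)=O(x^{4m+2})$ uniformly, i.e.\ that the partial-fraction remainder is an analytic function bounded near the origin rather than merely a real inequality. I would handle it via Lemma~\ref{lem:poles}(iii) and the bound $|w_j|\ge(1-|x|)^{2}$, together with a one-line check that the $O$-terms are genuine power series, so that the coefficient matching is legitimate.
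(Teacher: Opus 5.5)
Your proposal is correct and follows essentially the same route as the paper: the implicit function theorem plus parity give an even analytic germ, the bracket of Proposition~\ref{prop:bracket} gives $\lambda=O(x^{2m})$, the exact identity~\eqref{eq:monotone} gives $x^{2}R_m=M+O(x^{4m+2})$, and the coefficients are then read off the geometric expansion of $M$. The differences are cosmetic: you control the remainder as an analytic power series rather than through the real-axis inequality, and you expand $1/(\Delta+\lambda)$ instead of substituting $\lambda^{2}=x^{4m}/\Delta^{2}+O(x^{4m+2})$ into $\Delta\lambda=M-\lambda^{2}-\varepsilon$.
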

\deferproof{app:series}

\subsection{Coefficient bounds on the complex disc}\label{sec:radius}
The series of Theorem~\ref{thm:series} converges on some disc. The size of that disc controls every coefficient through Cauchy's estimate. We first state the Cauchy estimate in the form we can actually prove, then spend the rest of the section producing a radius to feed it.

\begin{lemma}[Cauchy's estimate]\label{thm:cauchy}
Suppose $\lambda_0^{(m)}$ is analytic on $\{|x|<r\}$ and put
\[
\Lambda_m(r):=\sup_{|x|<r}\bigl|\lambda_0^{(m)}(x)\bigr| .
\]
Then
\[
|c_k^{(m)}|  \le  \frac{\Lambda_m(r)}{r^{2m+2k}}\qquad (k\ge0).
\]
\end{lemma}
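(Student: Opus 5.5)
This is the standard Cauchy estimate, applied to the power series of $\lambda_0^{(m)}$ about the origin. The plan has three steps.

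First, identify the Taylor series. By hypothesis $\lambda_0^{(m)}$ is analytic on the open disc $\{|x|<r\}$. By Theorem~\ref{thm:series} its expansion about $0$ is $\sum_{k\ge0}c^{(m)}_k x^{2m+2k}$. Uniqueness of power-series coefficients then shows this series is the Taylor series of the analytic function on the whole disc, so $c^{(m)}_k$ is exactly the coefficient of $x^{2m+2k}$. No other powers of $x$ appear.

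Second, write the coefficient as a contour integral. Fix any $s$ with $0<s<r$ and use Cauchy's integral formula on the circle $|x|=s$:
\[
c^{(m)}_k=\frac{1}{2\pi i}\oint_{|x|=s}\frac{\lambda_0^{(m)}(x)}{x^{2m+2k+1}}\,dx .
\]
This circle lies inside the open disc, and $|\lambda_0^{(m)}|\le\Lambda_m(r)$ on it. The standard ML-estimate therefore gives
\[
|c^{(m)}_k|\le\frac{\Lambda_m(r)}{s^{2m+2k}} .
\]

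Third, let $s\uparrow r$; the right-hand side tends to $\Lambda_m(r)/r^{2m+2k}$, which is the claimed bound. If $\Lambda_m(r)=\infty$ the inequality holds trivially, so only the finite case needs this argument.

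The main point to watch is using circles strictly inside the disc. The function is analytic only on the open disc, and $\Lambda_m(r)$ is a supremum over that open disc. Working on $|x|=s<r$ and passing to the limit avoids assuming anything about continuity at $|x|=r$. Everything else is routine.
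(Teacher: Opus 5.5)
Your proof is correct and is essentially the paper's argument: Cauchy's inequality on circles $|x|=s<r$, followed by the limit $s\to r$. The paper first passes to $h(x)=\lambda_0^{(m)}(x)/x^{2m}$ and invokes the maximum principle, a detour your direct use of the coefficient of $x^{2m+2k}$ in Cauchy's formula shows to be unnecessary.
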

\begin{proof}
By Theorem~\ref{thm:series} the first $m$ Taylor coefficients of
$\lambda_0^{(m)}$ vanish, so
$h(x):=\lambda_0^{(m)}(x)/x^{2m}$ extends analytically to
$\{|x|<r\}$ with $h(x)=\sum_{k\ge0}c_k^{(m)}x^{2k}$. For $r'<r$ the
maximum principle gives
$\sup_{|x|\le r'}|h|=\sup_{|x|=r'}|h|\le \Lambda_m(r)/r'^{\,2m}$, and
Cauchy's inequality applied to $h$ on $|x|=r'$ gives
$|c_k^{(m)}|\le \Lambda_m(r)\,r'^{\,-2m-2k}$. Letting $r'\to r$ gives the claim.
\end{proof}

Everything now rests on one estimate: how large the eigenvalue can be on the complex
disc $|x|\le r$. We obtain it from the eigenvalue equation
$\lambda(\lambda+\Delta)=x^{2}R_m(\lambda,x)$ by one application of Rouch\'e's
theorem~\cite[Ch.~5]{ahlfors1979complex}, which says that if $f$ and $g$ are
analytic inside a contour $\partial K$ and $|g|<|f|$ on $\partial K$, then $f$
and $f+g$ have the same number of zeros inside $K$. All that is needed first is
to know how $x^{2}R_m$ depends on $\lambda$.

Lemma~\ref{lem:poles} already tells us everything we need about $R_m$ on the
complex disc: it is analytic on $\mathcal D=\{|\lambda|<1-|x|\}$, its value at
$\lambda=0$ is $M(x)/x^{2}$, and its $\lambda$-dependence is a sum of simple
poles at the explicitly known $w_j$. Subtracting the partial-fraction expansion~\eqref{eq:partialfrac} from its value
at $\lambda=0$ therefore bounds $x^{2}R_m-M(x)$ term by term. The only danger is
a pole $w_j$ close to the domain, which by~\eqref{eq:polebound2} can happen only
when $\sin\vartheta_j$ is of order $1-r$ or smaller, and then the residue
$\kappa_j=\sin^{2}\vartheta_j/m$ is small too. This compensation keeps the
prefactor in~\eqref{eq:lipschitz} of order one as $r\to1$. 
\begin{lemma}[$x^{2}R_m$ equals $M(x)$ up to $O(\lambda^{2})$]\label{lem:Rflat}
Let $0<|x|\le r<1$ and $|\lambda|<1-r$, and let
$M(x)=\dfrac{x^{2m}(1-x^{2})}{1-x^{2m}}$. Then
\begin{equation}
\bigl|\,x^{2}R_m(\lambda,x)-M(x)\,\bigr|
 \le \frac{2\,r^{2}\,|\lambda|^{2}}{(1-r)^{2}-|\lambda|^{2}}
\label{eq:lipschitz}
\end{equation}
for every $m\ge1$. Consequently, with
\begin{equation}
a_m:=\frac{(1+r^{2})\,r^{2m}}{1-r^{2m}} \ge \sup_{|x|\le r}|M(x)| ,
\label{eq:ab}
\end{equation}
we have $|x^{2}R_m|\le a_m+\frac{2r^{2}|\lambda|^{2}}{(1-r)^{2}-|\lambda|^{2}}$ there.
\end{lemma}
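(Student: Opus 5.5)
The plan is to start from the partial-fraction expansion \eqref{eq:partialfrac} of Lemma~\ref{lem:poles}(iii) and subtract its value at $w=0$, exactly as in Corollary~\ref{cor:threefold}, but now keeping complex moduli. The case $m=1$ is trivial: there $S_0=0$, so $R_1\equiv1$ and $M(x)=x^2$, and the left side of \eqref{eq:lipschitz} vanishes. For $m\ge2$, put $w=\lambda^2$. The hypotheses $|x|\le r$ and $|\lambda|<1-r$ give $(\lambda,x)\in\mathcal D$, so \eqref{eq:partialfrac} holds both at $w$ and at $0$. By Corollary~\ref{cor:threefold}(i), the value at $w=0$ is $M(x)/x^2$; that identity is an algebraic identity in $x$, so it extends to complex $x$ with $|x|<1$. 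Combining $\frac1{w-w_j}+\frac1{w_j}=\frac{w}{w_j(w-w_j)}$ then gives
\[
x^{2}R_m(\lambda,x)-M(x)=x^{2}\,w\sum_{j=1}^{m-1}\frac{2\kappa_j}{w_j\,(w-w_j)} ,
\]
which is the complex form of \eqref{eq:monotone}. The whole estimate reduces to bounding this sum uniformly in $m$.

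The main obstacle is to sum the terms without losing powers of $(1-r)^{-1}$. Using only the crude bound $|w_j|\ge(1-r)^2$ from Lemma~\ref{lem:poles}(ii) would leave an extra factor $(1-r)^{-2}$ in front. The way around it is to exploit the compensation described before the lemma: a pole close to the domain has a small residue. Concretely, I will first use $|w_j|\ge(1-r)^2$ only to write
\[
|w_j|-|w|\ \ge\ |w_j|\Bigl(1-\frac{|w|}{(1-r)^2}\Bigr)=\frac{|w_j|}{(1-r)^2}\bigl((1-r)^2-|w|\bigr),
\]
so that each term satisfies
\[
\frac{\kappa_j}{|w_j|\,|w-w_j|}\ \le\ \frac{\kappa_j\,(1-r)^2}{|w_j|^2\,\bigl((1-r)^2-|w|\bigr)} .
\]
I then insert the sharper bound \eqref{eq:polebound2} squared, $|w_j|^2\ge(1-r)^2\max\{1-r,\sin\vartheta_j\}^2\ge(1-r)^2\sin^2\vartheta_j$. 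Together with $\kappa_j=\sin^2\vartheta_j/m$, the factors $\sin^2\vartheta_j$ and $(1-r)^2$ cancel, and each term is at most $\frac{1}{m\,((1-r)^2-|w|)}$.

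Summing the $m-1$ terms gives
\[
\sum_{j=1}^{m-1}\frac{2\kappa_j}{|w_j|\,|w-w_j|}\ \le\ \frac{2}{(1-r)^2-|w|} .
\]
Multiplying by $|x|^2|w|\le r^2|\lambda|^2$ yields \eqref{eq:lipschitz}, with the constant independent of $m$.

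For the consequence, I will bound $|M(x)|\le |x|^{2m}(1+|x|^2)/(1-|x|^{2m})$, using the triangle inequality in the numerator and the reverse triangle inequality in the denominator. This expression is increasing in $|x|$, so its supremum over $|x|\le r$ is $a_m$. The final inequality then follows from the triangle inequality $|x^2R_m|\le|M|+|x^2R_m-M|$.
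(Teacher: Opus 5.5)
Your proposal is correct and follows the paper's proof. It uses the same exact identity obtained by subtracting the partial-fraction expansion at $w=0$, the same per-term bound $2/\bigl(m((1-r)^{2}-|w|)\bigr)$ summed over the $m-1$ terms, and the same bound on $|M(x)|$. The one difference is in the per-term bound: you get it in a single chain, using $|w_j|\ge(1-r)^{2}$ to extract the factor $(1-r)^{2}-|w|$ and then $|w_j|^{2}\ge(1-r)^{2}\sin^{2}\vartheta_j$, whereas the paper splits into the cases $\sin\vartheta_j\ge1-r$ and $\sin\vartheta_j<1-r$. Your version is slightly tidier but rests on the same two inequalities from \eqref{eq:polebound2}.
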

\deferproof{app:Rflat}

Lemma~\ref{lem:Rflat} is the complex counterpart of~\eqref{eq:monotone}: the
identity behind both holds off the real axis too, where it loses its sign but
keeps a bound on its modulus. It says that on a small disc $|\lambda|\le\rho$ the
right-hand side of $\lambda(\lambda+\Delta)=x^{2}R_m$ is nearly constant, of
modulus at most $a_m+b(\rho)\rho^{2}$, with $b(\rho)$ as in~\eqref{eq:onestep_hyp} below. For large $\Delta$ the linear term
$\Delta\lambda$ then dominates the rest of the equation on the circle
$|\lambda|=\rho$, and Rouch\'e's theorem says that the equation has as many roots
inside as $\Delta\lambda$, namely one. On the real axis the bracket of
Section~\ref{sec:bracket} gives $\lambda_0^{(m)}\approx M/\Delta$, and
$|M|\le a_m$ on the whole disc, so the root lives at scale $a_m/\Delta$ and we
take a circle twice as large, $\rho=2a_m/\Delta$. Domination then reduces to the
two conditions~\eqref{eq:onestep_hyp}, and Lemma~\ref{thm:cauchy} turns the
resulting bound $|\lambda_0^{(m)}|\le\rho$ into a bound on every coefficient.
\begin{theorem}[Coefficient bound, application of Rouch\'e]\label{thm:onestep} 
Let $\Delta>0$, $0<r<1$ and $a_m$ as in~\eqref{eq:ab}. Set $\rho:=2a_m/\Delta$ and assume
\begin{equation}
\rho < 1-r
\qquad\text{and}\qquad
\rho^{2}\Bigl(1+b(\rho)\Bigr) \le a_m,
\qquad
b(\rho)=\frac{2r^{2}}{(1-r)^{2}-\rho^{2}} .
\label{eq:onestep_hyp}
\end{equation}
Then for every $|x|<r$ the characteristic polynomial of $E_{2m}(x)$ has exactly
one root in $\{|\lambda|<\rho\}$, and it is simple. That root is
$\lambda_0^{(m)}(x)$, which is therefore analytic on $\{|x|<r\}$, and
\begin{equation}
\sup_{|x|<r}\bigl|\lambda_0^{(m)}(x)\bigr| \le \rho = \frac{2a_m}{\Delta},
\qquad\text{and}\qquad
\bigl|c_k^{(m)}\bigr| \le \frac{C_{r,m}}{\Delta\,r^{2k}},
\qquad
C_{r,m}:=\frac{2(1+r^{2})}{1-r^{2m}} .
\label{eq:onestep_ck}
\end{equation}
\end{theorem}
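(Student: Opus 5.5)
The plan is Rouché's theorem in $\lambda$ for each fixed $x$, then analyticity in $x$ through the argument-principle integral, and finally Lemma~\ref{thm:cauchy}.

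\emph{Setting up the comparison.} Fix $x$ with $0<|x|<r$; the case $x=0$ is immediate, since the roots are then $0$, $-\Delta$ and $\pm1$, and by~\eqref{eq:onestep_hyp} only $0$ lies inside $|\lambda|<\rho<1-r$. Since $\rho<1-r\le 1-|x|$, the closed disc $\{|\lambda|\le\rho\}$ lies in $\mathcal D$. By Lemma~\ref{lem:poles}(ii), $S_m$ does not vanish there, so $\chi_{2m}=S_m\cdot F$ with
\[
F(\lambda):=\lambda^{2}+\Delta\lambda-x^{2}R_m(\lambda,x),
\]
and $F$ is analytic on a neighbourhood of that disc. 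Hence the roots of $\chi_{2m}$ in the disc are exactly those of $F$, with multiplicities. Take $f(\lambda)=\Delta\lambda$ and $g(\lambda)=\lambda^{2}-x^{2}R_m$. On $|\lambda|=\rho$ we have $|f|=\Delta\rho=2a_m$. Lemma~\ref{lem:Rflat} gives
\[
|g|\le\rho^{2}+a_m+b(\rho)\rho^{2}=a_m+\rho^{2}\bigl(1+b(\rho)\bigr)\le 2a_m .
\]
This only yields $|g|\le|f|$, not the strict inequality Rouché needs. To get strictness, use the open hypothesis $|x|<r$. Since $|M(x)|$ is bounded by a quantity strictly smaller than $a_m$ whenever $|x|<r$, Lemma~\ref{lem:Rflat} still applies with $r$ in the $b$-term, and $|M(x)|<a_m$ makes the inequality strict. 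I expect this strictness, that is, checking $|M(x)|<a_m$ strictly for $|x|<r$ with $a_m$ as in~\eqref{eq:ab}, to be the fiddly point. The cleanest route is
\[
|M(x)|\le\frac{|x|^{2m}(1+|x|^{2})}{1-|x|^{2m}},
\]
whose right side is strictly increasing in $|x|$. Rouché then gives exactly one zero of $F$, counted with multiplicity, in $|\lambda|<\rho$, and it is therefore simple for $\chi_{2m}$.

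\emph{Identifying the root and proving analyticity.} For small real $x>0$, Proposition~\ref{prop:bracket} places $\lambda_0^{(m)}(x)$ in $[0,M/\Delta]\subset\{|\lambda|<\rho\}$, since $M\le a_m$. So the unique root coincides with the branch there, and I will define $\lambda_0^{(m)}(x)$ on the whole disc as this root. Analyticity in $x$ follows from the residue formula
\[
\lambda_0^{(m)}(x)=\frac{1}{2\pi i}\oint_{|\lambda|=\rho}\frac{\lambda\,\partial_\lambda\chi_{2m}(\lambda,x)}{\chi_{2m}(\lambda,x)}\,d\lambda .
\]
Here $\chi_{2m}$ is polynomial in $(\lambda,x)$ and nonvanishing on the contour for all $|x|<r$, since $|F|\ge|f|-|g|>0$ there and $S_m\neq0$. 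So the integral is holomorphic in $x$, and by the identity theorem it agrees with the series of Theorem~\ref{thm:series} near $0$. The bound $\sup_{|x|<r}|\lambda_0^{(m)}|\le\rho$ is immediate from the construction.

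\emph{Coefficient bounds.} Apply Lemma~\ref{thm:cauchy} with $\Lambda_m(r)\le 2a_m/\Delta$:
\[
|c^{(m)}_k|\le\frac{2a_m}{\Delta\,r^{2m+2k}}=\frac{2(1+r^{2})}{(1-r^{2m})\,\Delta\,r^{2k}} .
\]
The factor $r^{2m}$ in $a_m$ cancels the factor $r^{-2m}$ from the Cauchy estimate, which is what makes the bound uniform in $m$ and gives exactly $C_{r,m}$.
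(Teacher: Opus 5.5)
Your proposal is correct and follows the paper's proof essentially step by step: Rouch\'e on $|\lambda|=\rho$ with $f=\Delta\lambda$ and $g=\lambda^{2}-x^{2}R_m$, strict domination obtained from $|M(x)|<a_m$ via the strictly increasing majorant $|x|^{2m}(1+|x|^{2})/(1-|x|^{2m})$, transfer to $\chi_{2m}=S_mF$ because $S_m\neq0$, analyticity through the argument-principle integral, identification with the germ near $x=0$, and Lemma~\ref{thm:cauchy}, where $r^{2m}$ cancels to give $C_{r,m}$. Your deviations from the paper are harmless and do not change the argument: you treat $x=0$ separately, you integrate $\lambda\,\partial_\lambda\chi_{2m}/\chi_{2m}$ rather than $\lambda\,\partial_\lambda F/F$, and you identify the root via the real-axis bracket and the identity theorem rather than by continuity of the germ at $0$.
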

\deferproof{app:onestep}
Theorem~\ref{thm:onestep} is weakest at $m=1$, where $a_1$ is largest. That case, however, can be treated separately: the chain has a single $x$-bond, $S_1=1$ and $S_0=0$, so $R_1=1$ and the eigenvalue equation is a quadratic.

\begin{proposition}[The case $m=1$ is exact]\label{prop:m1}
For every $x$,
\begin{equation}
\lambda_0^{(1)}(x)=\frac{-\Delta+\sqrt{\Delta^{2}+4x^{2}}}{2}
=\frac{2x^{2}}{\Delta+\sqrt{\Delta^{2}+4x^{2}}} ,
\label{eq:m1}
\end{equation}
whose only branch points are $x=\pm i\Delta/2$. Hence $\lambda_0^{(1)}$
is analytic on $|x|<\Delta/2$, and for every $r\le\Delta/2$
\begin{equation}
\sup_{|x|<r}\bigl|\lambda_0^{(1)}(x)\bigr|\le\frac{2r^{2}}{\Delta},
\qquad\text{so}\qquad
\bigl|c_k^{(1)}\bigr|\le\frac{2}{\Delta\,r^{2k}} .
\label{eq:m1_ck}
\end{equation}
\end{proposition}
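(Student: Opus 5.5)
For $m=1$ we have $S_1=1$ and $S_0=0$, so $R_1\equiv1$. Proposition~\ref{prop:eig_eq} then reduces to the quadratic $\chi_2(\lambda)=\lambda^2+\Delta\lambda-x^2=0$; equivalently, $\det\bigl(\begin{smallmatrix}-\lambda&x\\x&-\Delta-\lambda\end{smallmatrix}\bigr)$ expands directly to this. Its roots are $\lambda_\pm=\tfrac12\bigl(-\Delta\pm\sqrt{\Delta^2+4x^2}\bigr)$. Taking the principal square root, the $+$ root is the one vanishing at $x=0$, so it is $\lambda_0^{(1)}$. Rationalising the numerator gives the second form $2x^2/(\Delta+\sqrt{\Delta^2+4x^2})$, and on the real axis this shows $\lambda_0^{(1)}(x)\ge0$.

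For analyticity, write $\sqrt{\Delta^2+4x^2}=\Delta\sqrt{1+u}$ with $u=4x^2/\Delta^2$. The principal branch of $\sqrt{1+u}$ is analytic on $|u|<1$, i.e.\ on $|x|<\Delta/2$. The discriminant $\Delta^2+4x^2$ vanishes only at $x=\pm i\Delta/2$, and these are genuine branch points because the two roots collide there. Hence $\lambda_0^{(1)}$ is analytic on the open disc $|x|<\Delta/2$, and for real $x$ it is defined everywhere.

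For the supremum bound I will use the rationalised form. For $|x|<r\le\Delta/2$ we have $|u|<1$, and the principal root satisfies $\operatorname{Re}\sqrt{1+u}>0$. Then
\[
\bigl|\Delta+\sqrt{\Delta^2+4x^2}\bigr|=\Delta\,\bigl|1+\sqrt{1+u}\bigr|\ge\Delta\bigl(1+\operatorname{Re}\sqrt{1+u}\bigr)\ge\Delta ,
\]
so $|\lambda_0^{(1)}(x)|\le 2|x|^2/\Delta<2r^2/\Delta$. The coefficient bound follows by applying Lemma~\ref{thm:cauchy} with $m=1$ and $\Lambda_1(r)\le2r^2/\Delta$: $|c_k^{(1)}|\le(2r^2/\Delta)/r^{2+2k}=2/(\Delta r^{2k})$. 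Lemma~\ref{thm:cauchy} needs the vanishing of the $x^0$ coefficient, and this is immediate here since $\lambda_0^{(1)}(0)=0$ and the function is even.

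The only delicate point is the branch choice: I must make sure that the principal square root composed with $1+u$ gives a single-valued analytic function on the whole disc, coinciding with the branch through $0$. This holds because $1+u$ stays in the right half-plane $\{\operatorname{Re}>0\}$ for $|u|<1$, away from the cut of the principal root. The same fact is what yields the lower bound $\Delta$ on the denominator. I expect this to be the main (and mild) obstacle; the rest is direct.
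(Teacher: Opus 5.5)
Your proposal is correct and follows essentially the same route as the paper. Both read off the quadratic $\lambda^{2}+\Delta\lambda-x^{2}=0$ from $S_1=1$, $S_0=0$, and use the rationalised form. Both then observe that $\Delta^{2}+4x^{2}$ stays in the right half-plane for $|x|<r\le\Delta/2$, so the denominator has modulus at least $\Delta$. The coefficient bound comes from the Cauchy estimate with $m=1$.
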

\deferproof{app:m1}

\subsection{The working detuning \texorpdfstring{$\Delta_\star(r)$}{Delta*(r)}}\label{sec:summary}
For $m\ge2$ we return to Theorem~\ref{thm:onestep}. Its
hypothesis~\eqref{eq:onestep_hyp} constrains $\rho$, and because
$b(\rho)$ is a single simple pole in $\rho^{2}$ that constraint turns out to be 
one quadratic inequality that can be solved.
\begin{lemma}[The admissible contour radii]\label{lem:rho_interval}
Fix $0<r<1$ and $m\ge1$, let $a_m$ be as in~\eqref{eq:ab}, and set
\begin{equation}
\mathsf p:=(1-r)^{2}+2r^{2}+a_m,
\qquad
\mathsf q:=a_m(1-r)^{2},
\qquad
s_-:=\frac{\mathsf p-\sqrt{\mathsf p^{2}-4\mathsf q}}{2}=\frac{2\mathsf q}{\mathsf p+\sqrt{\mathsf p^{2}-4\mathsf q}} .
\label{eq:sminus}
\end{equation}
Then $\mathsf p^{2}>4\mathsf q$, $0<s_-<(1-r)^{2}$, and for $\rho\in(0,1-r)$
\begin{equation}
\rho^{2}\bigl(1+b(\rho)\bigr)\le a_m
\qquad\Longleftrightarrow\qquad
\rho\le\sqrt{s_-} .
\label{eq:rho_interval}
\end{equation}
\end{lemma}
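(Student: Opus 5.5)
The plan is to substitute $s:=\rho^{2}\in(0,(1-r)^{2})$ and clear the single denominator of $b(\rho)$. Since $(1-r)^{2}-s>0$ on this range, multiplying the inequality $s\bigl(1+\tfrac{2r^{2}}{(1-r)^{2}-s}\bigr)\le a_m$ by $(1-r)^{2}-s$ preserves its direction. This gives
\[
s\bigl((1-r)^{2}-s\bigr)+2r^{2}s\le a_m\bigl((1-r)^{2}-s\bigr),
\]
which rearranges to $f(s):=s^{2}-\mathsf p\,s+\mathsf q\ge0$, with $\mathsf p,\mathsf q$ exactly as in~\eqref{eq:sminus}. The equivalence~\eqref{eq:rho_interval} is therefore a statement about where the quadratic $f$ is nonnegative on the interval $(0,(1-r)^{2})$.

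The next step locates the roots of $f$. I would evaluate $f$ at the two endpoints. First, $f(0)=\mathsf q=a_m(1-r)^{2}>0$, using $a_m>0$ from~\eqref{eq:ab}. Second, at $s=(1-r)^{2}$ we get $f\bigl((1-r)^{2}\bigr)=(1-r)^{4}-\bigl((1-r)^{2}+2r^{2}+a_m\bigr)(1-r)^{2}+a_m(1-r)^{2}=-2r^{2}(1-r)^{2}<0$.

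A monic quadratic that is positive at $0$ and negative at $(1-r)^{2}$ has two distinct real roots, so $\mathsf p^{2}>4\mathsf q$. The smaller root $s_-$ lies in $(0,(1-r)^{2})$, and the larger root $s_+$ exceeds $(1-r)^{2}$. The two expressions for $s_-$ in~\eqref{eq:sminus} agree by rationalising: multiply numerator and denominator by $\mathsf p+\sqrt{\mathsf p^{2}-4\mathsf q}$. The second form also makes $s_->0$ manifest.

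Finally, on $(0,(1-r)^{2})$ we have $f\ge0$ exactly when $s\le s_-$, because $s_+$ lies outside the interval. Since $\rho>0$, the condition $s=\rho^{2}\le s_-$ is the same as $\rho\le\sqrt{s_-}$, which proves~\eqref{eq:rho_interval}.

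There is no real obstacle here. The only points to check are the sign of the multiplier $(1-r)^{2}-\rho^{2}$, which is guaranteed by the hypothesis $\rho<1-r$, and the endpoint computation $f((1-r)^{2})=-2r^{2}(1-r)^{2}$, where the $a_m$ terms cancel.
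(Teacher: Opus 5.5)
Your proof is correct and follows the paper's argument: you substitute $s=\rho^{2}$, clear the positive denominator to obtain $s^{2}-\mathsf p\,s+\mathsf q\ge0$, and use the endpoint value $-2r^{2}(1-r)^{2}<0$ to place $(1-r)^{2}$ strictly between the two roots. The one small difference is that you deduce $\mathsf p^{2}>4\mathsf q$ from that negative value and $s_->0$ from $f(0)=\mathsf q>0$, whereas the paper bounds the discriminant below by $\bigl((1-r)^{2}-a_m\bigr)^{2}$ directly; your version is slightly shorter.
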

\deferproof{app:rho}
This constraint on $\rho$ becomes a constraint on the detuning $\Delta$: for any fixed
target radius $r$ there is a detuning beyond which Theorem~\ref{thm:onestep} applies.
\begin{corollary}[Minimum $\Delta$ for a target radius]\label{cor:Dmin} Fix $m\ge1$ and a target $r\in(0,1)$. Theorem~\ref{thm:onestep} applies at radius $r$ as soon as
\begin{equation}
\Delta\ge\Delta_{\min}(m,r):=\frac{2a_m}{\sqrt{s_-}}
=\frac{\sqrt{2\,a_m\Bigl(\mathsf p+\sqrt{\mathsf p^{2}-4\mathsf q}\Bigr)}}{1-r} ,
\label{eq:Dmin}
\end{equation}
and this is exact for the choice $\rho=2a_m/\Delta$.\footnote{That choice is
the one that minimises $\Delta$. Writing $\rho=\mathsf t\,a_m/\Delta$ turns the
Rouch\'e condition into
$\Delta^{2}>\frac{\mathsf t^{2}}{\mathsf t-1}a_m\bigl(1+b(\rho)\bigr)$, and
$\mathsf t^{2}/(\mathsf t-1)$ is smallest at $\mathsf t=2$. Once $\Delta$ exceeds
$\Delta_{\min}$ one may take $\mathsf t$ nearer $1$, improving the constant
$C_{r,m}$ of~\eqref{eq:onestep_ck}, which enters the final error estimate only as
a prefactor, so we do not pursue it.} Since $\sqrt{\mathsf p^{2}-4\mathsf q}\le\mathsf p$, the simpler condition
\begin{equation}
\Delta\ge\frac{2}{1-r}\sqrt{a_m\bigl((1-r)^{2}+2r^{2}+a_m\bigr)}
\label{eq:Dmin_simple}
\end{equation}
suffices. Moreover $a_m$ is strictly decreasing in $m$, and $\Delta_{\min}(\cdot,r)$ with it, so $\Delta\ge\Delta_{\min}(m_0,r)$ certifies every $m\ge m_0$ at once: the limiting case is always the shortest chain in the basis. 
\end{corollary}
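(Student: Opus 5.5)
Corollary~\ref{cor:Dmin} follows from Lemma~\ref{lem:rho_interval} applied to the specific contour radius $\rho=2a_m/\Delta$. Theorem~\ref{thm:onestep} asks for two things: $\rho<1-r$ and $\rho^{2}(1+b(\rho))\le a_m$. Lemma~\ref{lem:rho_interval} says that, for $\rho\in(0,1-r)$, the second condition is equivalent to $\rho\le\sqrt{s_-}$. It also gives $s_-<(1-r)^{2}$, so $\sqrt{s_-}<1-r$. Hence the single inequality $\rho\le\sqrt{s_-}$ gives both hypotheses at once. With $\rho=2a_m/\Delta$ this reads $\Delta\ge 2a_m/\sqrt{s_-}=\Delta_{\min}(m,r)$. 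Because the equivalence holds in both directions, the bound is exact for this choice of $\rho$.

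The closed form in~\eqref{eq:Dmin} is pure algebra. Insert the second expression for $s_-$ in~\eqref{eq:sminus}, namely $s_-=2\mathsf q/(\mathsf p+\sqrt{\mathsf p^{2}-4\mathsf q})$ with $\mathsf q=a_m(1-r)^{2}$. This gives
\[
\frac{2a_m}{\sqrt{s_-}}=\frac{2a_m\sqrt{\mathsf p+\sqrt{\mathsf p^{2}-4\mathsf q}}}{\sqrt{2a_m}\,(1-r)},
\]
which simplifies to~\eqref{eq:Dmin}. For the simpler sufficient condition~\eqref{eq:Dmin_simple}, use $\sqrt{\mathsf p^{2}-4\mathsf q}\le\mathsf p$, valid since $\mathsf q>0$. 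It gives $\Delta_{\min}\le\sqrt{4a_m\mathsf p}/(1-r)$, and expanding $\mathsf p$ yields exactly~\eqref{eq:Dmin_simple}.

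For the monotonicity claim, first write $a_m=(1+r^{2})\,r^{2m}/(1-r^{2m})=(1+r^{2})/(r^{-2m}-1)$. Since $r<1$, $r^{-2m}$ increases with $m$, so $a_m$ is strictly decreasing in $m$. It remains to show that $\Delta_{\min}$ is increasing in $a_m$ at fixed $r$. I would write
\[
\Delta_{\min}^{2}(1-r)^{2}=2a_m\bigl(\mathsf p+\sqrt{\mathsf p^{2}-4\mathsf q}\bigr).
\]
The factor $\mathsf p=c+a_m$, with $c=(1-r)^{2}+2r^{2}$, increases with $a_m$, but $\mathsf q=a_m(1-r)^{2}$ increases too, so the sign of the derivative of the square root is not immediate. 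This is the step I would handle carefully. Substituting gives
\[
\mathsf p^{2}-4\mathsf q=(a_m+c)^{2}-4a_m(1-r)^{2}=\bigl(a_m+c-2(1-r)^{2}\bigr)^{2}+4(1-r)^{2}\bigl(c-(1-r)^{2}\bigr)
\]
(checking the expansion). Here $c-(1-r)^{2}=2r^{2}>0$. So the radicand is a square plus a positive constant. The quantity inside that square, $a_m+2r^{2}-(1-r)^{2}$, can have either sign, so I would not try to show the radicand is monotone. Instead I would differentiate $f(a)=a\bigl(a+c+\sqrt{(a+c)^{2}-4a(1-r)^{2}}\bigr)$ directly. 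Its derivative equals $\mathsf p+\sqrt{\cdot}$ plus $a\bigl(1+(\mathsf p-2(1-r)^{2})/\sqrt{\cdot}\bigr)$. The only possible negative contribution is $a\cdot(\mathsf p-2(1-r)^{2})/\sqrt{\cdot}$ when $\mathsf p<2(1-r)^{2}$. In that case $|\mathsf p-2(1-r)^{2}|\le(1-r)^{2}\le\mathsf p$ and $\sqrt{\cdot}\ge\sqrt{\mathsf p^{2}-4\mathsf q}$. I expect a short estimate to show this term is dominated by $\mathsf p+a$, so that $f'>0$.

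A cleaner alternative avoids the derivative altogether. Since $\rho^{2}(1+b(\rho))$ is increasing in $\rho$, the largest admissible $\rho$ in Lemma~\ref{lem:rho_interval}, namely $\sqrt{s_-}$, is increasing in $a_m$. Then
\[
\Delta_{\min}^{2}=\frac{4a_m^{2}}{s_-}=\frac{4a_m^{2}}{\rho_*^{2}},
\]
where $\rho_*$ solves $\rho_*^{2}(1+b(\rho_*))=a_m$. Since $1+b(\rho_*)\ge1$ and $b$ is increasing, $a_m/\rho_*^{2}=1+b(\rho_*)$ is nondecreasing in $a_m$. Therefore $\Delta_{\min}^{2}=4a_m\bigl(1+b(\rho_*)\bigr)$ is the product of a strictly increasing factor and a nondecreasing positive factor, hence strictly increasing in $a_m$. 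I would present this second argument. Combined with $a_m$ decreasing in $m$, it shows $\Delta_{\min}(m,r)\le\Delta_{\min}(m_0,r)$ for all $m\ge m_0$. So a single detuning certifies every longer chain, and the shortest chain is the binding case.
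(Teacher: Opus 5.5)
Your proof is correct. The first two parts are exactly the paper's argument: Lemma~\ref{lem:rho_interval} reduces both hypotheses of Theorem~\ref{thm:onestep} to $\rho\le\sqrt{s_-}$, and the rest is the algebra together with $\sqrt{\mathsf p^{2}-4\mathsf q}\le\mathsf p$.

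For the monotonicity you take a somewhat different route. The paper fixes $\Delta$ and substitutes $\rho=2a_m/\Delta$ into~\eqref{eq:onestep_hyp}. It then notes that both resulting conditions on $a_m$ survive when $a_m$ decreases: one is $2a_m/\Delta<1-r$, the other says that a product of positive factors, each increasing in $a_m$, is at most one. So a detuning admissible at $m_0$ is admissible at every $m\ge m_0$. That $\Delta_{\min}(\cdot,r)$ decreases is then only implicit, through exactness of the threshold.

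You instead prove the monotonicity of the threshold itself. You write $\Delta_{\min}^{2}=4a_m\bigl(1+b(\rho_*)\bigr)$, where $\rho_*=\sqrt{s_-}$ is the unique $\rho\in(0,1-r)$ with $\rho^{2}\bigl(1+b(\rho)\bigr)=a_m$. This gives strict decrease of $\Delta_{\min}$ in $m$ directly, which is the literal claim. The paper's version is more economical: it needs no inverse function and lands directly on the statement that one $\Delta$ certifies all longer chains. Both rest on the same fact, that $\rho\mapsto\rho^{2}\bigl(1+b(\rho)\bigr)$ is increasing.

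To make your step airtight, say explicitly that this map is continuous and strictly increasing from $(0,1-r)$ onto $(0,\infty)$. Then the largest admissible $\rho$ attains equality, and it increases with $a_m$. The derivative computation you sketch first is unfinished and unnecessary, so drop it.
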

\deferproof{app:dmin}

This subsection delivers exactly one requirement. Fix a target input radius $r\in(0,1)$ and set
\begin{equation}
\Delta_\star(r) := \max\bigl\{\,2r,\ \Delta_{\min}(2,r)\,\bigr\},
\qquad
C_r := \frac{2}{1-r^{2}} .
\label{eq:summary}
\end{equation}
Then $\Delta\ge\Delta_\star(r)$ implies
\begin{equation}
\bigl|c_k^{(m)}\bigr| \le \frac{C_r}{\Delta\,r^{2k}}
\qquad\text{for every } m\ge1,\ k\ge0 .
\label{eq:summary_ck}
\end{equation}

Two cases. For $m=1$, Proposition~\ref{prop:m1} needs only $\Delta>2r$ and gives the constant $2\le C_r$. For $m\ge2$, Corollary~\ref{cor:Dmin} applies at every such $m$ once $\Delta\ge\Delta_{\min}(2,r)$, and the constant of~\eqref{eq:onestep_ck} obeys
\[
\frac{2(1+r^{2})}{1-r^{2m}} \le \frac{2(1+r^{2})}{1-r^{4}} = \frac{2}{1-r^{2}} = C_r ,
\]
the extreme case being again $m=2$. So the shortest chains fix both the detuning and the constant, $m=2$ is the
single worst case, and longer gadgets are strictly easier. (If the target polynomial has no term below degree $2q$ with $q\ge3$, the branch $\lambda_0^{(2)}$ is never used and $\Delta_{\min}(2,r)$ may be replaced throughout by the smaller
$\Delta_{\min}(q,r)$.)

We assume throughout that $\Delta>1$, which the separation of the branch in
Section~\ref{sec:bracket} uses. For
$r>\tfrac12$, the only regime used below, this costs nothing, since
$\Delta\ge\Delta_\star(r)$ then already implies $\Delta>1$.

Each of the two constraints in~\eqref{eq:summary} binds on part of the range of
$r$. The shortest chain needs $\Delta>2r$, because its branch
$\lambda_0^{(1)}$ has branch points at $x=\pm i\Delta/2$
(Proposition~\ref{prop:m1}), and this requirement is linear in $r$. The chains
with $m\ge2$ need $\Delta\ge\Delta_{\min}(2,r)$, which is quadratic in $r$ for
small radii: as $r\to0$ one has $a_2\sim r^{4}$ and
$\Delta_{\min}(2,r)\sim2r^{2}$. The two curves cross at $r_c\simeq0.4935$.
Below $r_c$ the branch point of the $m=1$ chain sets $\Delta_\star=2r$, and above
it the $m=2$ chain sets $\Delta_\star=\Delta_{\min}(2,r)$.

In the second regime the requirement grows like $(1-r)^{-2}$ as $r\to1$. Indeed
$a_2=r^{4}/(1-r^{2})\sim\frac1{2(1-r)}$, so in~\eqref{eq:sminus}
$\mathsf p\sim\frac1{2(1-r)}$ and $\mathsf q\sim\frac{1-r}{2}$, hence
$s_-\sim\mathsf q/\mathsf p\sim(1-r)^{2}$ and
$\Delta_{\min}(2,r)=2a_2/\sqrt{s_-}\sim(1-r)^{-2}$, that is,
\begin{equation}
\Delta_\star(r)\sim\frac1{(1-r)^{2}}\quad(r\to1),
\qquad\text{equivalently}\qquad
1-r\sim\Delta^{-1/2} .
\label{eq:sqrtlaw}
\end{equation}
Asymptotically, halving the distance from the target radius to $1$ costs a
factor four in detuning.

\section{Synthesis of an even polynomial}\label{sec:synthesis}

Section~\ref{sec:branch} controlled each chain separately: the eigenvalue
$\lambda_0^{(m)}$ starts at $x^{2m}$, and its coefficients obey a bound uniform in
$m$. This section combines chains of different lengths and completes the argument
sketched in Section~\ref{sec:intro:gadget} in two steps. First, since the leading
orders are distinct, the eigenvalues form a triangular family, and the weights
reproducing a target even polynomial follow by back-substitution
(Section~\ref{sec:triangular}). The work is to show that these weights do not grow
faster than the truncation tail shrinks, so that the synthesis converges for every
input norm below $r$ (Section~\ref{sec:convergence}). Finally,
Theorem~\ref{thm:realisation} turns the scalar combination into a single
Hamiltonian acting on the input.

\subsection{The triangular system}\label{sec:triangular}
Let a target even polynomial of degree $2\ell$ be expressed as
\begin{equation}
P(x) = \sum_{k=q}^\ell p_{2k} x^{2k}, \qquad p_{2q} \neq 0.
\end{equation}
The eigenvalues vanish at $x=0$, so a constant term is not synthesised: it is the
energy shift $p_0I$, added directly to the Hamiltonian, and we take $q\ge1$
throughout.
We seek synthesis coefficients $\alpha_q, \dots, \alpha_L$ (with $L \ge \ell$) such that the linear combination
\begin{equation}
P_L^e(x) := \sum_{j=q}^L \alpha_j \lambda_0^{(j)}(x)
\end{equation}
matches $P(x)$ up to degree $2L$. Expanding $P_L^e(x)$ into power series gives
\begin{equation}
P_L^e(x) = \sum_{j=q}^L \alpha_j \sum_{n=0}^\infty c_n^{(j)} x^{2j+2n} = \sum_{k=q}^\infty d_k x^{2k},
\end{equation}
where for each index $k \ge q$, the composite coefficient $d_k$ is given by
\begin{equation}\label{eq:dk_def}
d_k = \sum_{j=q}^{\min(k, L)} \alpha_j c_{k-j}^{(j)}.
\end{equation}
To match $P(x)$ up to degree $2\ell$ and eliminate the higher orders up to degree
$2L$, we impose
\begin{equation}
d_k = p_{2k}\quad (q\le k\le\ell),
\qquad
d_k = 0\quad (\ell<k\le L).
\label{eq:matching}
\end{equation}
The coefficients $d_k$ with $k>L$ are left free. They form the truncation tail,
which Section~\ref{sec:convergence} bounds.

We show that this linear system for the weights is triangular and hence uniquely solvable by back-substitution.
\begin{proposition}[Triangular system and back-substitution]\label{prop:triangular_solvability}
Let $\Delta>0$. The conditions~\eqref{eq:matching} have a unique solution
$\alpha_q,\dots,\alpha_L$, given by the recurrence
\begin{equation}\label{eq:alpha_rec}
\alpha_k = \Delta \Bigl( p_{2k} - \sum_{j=q}^{k-1} \alpha_j c_{k-j}^{(j)} \Bigr),
\qquad k = q, \dots, L,
\end{equation}
where $p_{2k} := 0$ for $k > \ell$.
\end{proposition}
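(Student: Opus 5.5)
The conditions~\eqref{eq:matching} form a square linear system in the $L-q+1$ unknowns $\alpha_q,\dots,\alpha_L$, one equation for each index $k=q,\dots,L$. The plan is to show that its matrix is lower triangular with diagonal entries all equal to $1/\Delta$. Solvability, uniqueness and the explicit recurrence then follow by forward substitution.

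By~\eqref{eq:dk_def}, for $q\le k\le L$ we have $\min(k,L)=k$, so the $k$-th equation reads
\[
\sum_{j=q}^{k}\alpha_j\,c^{(j)}_{k-j}=p_{2k},
\]
with the convention $p_{2k}=0$ for $k>\ell$. The equation indexed by $k$ therefore involves only $\alpha_q,\dots,\alpha_k$. This is exactly the statement that $\lambda_0^{(j)}$ has no term below $x^{2j}$, which is built into the indexing of Theorem~\ref{thm:series}. So the system is triangular. The coefficient of $\alpha_k$ in the $k$-th equation is $c^{(k)}_0$, and Theorem~\ref{thm:series} gives $c^{(k)}_0=1/\Delta$ for every $k\ge1$, which is nonzero since $\Delta>0$. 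Hence the determinant is $\Delta^{-(L-q+1)}\neq0$, and the solution exists and is unique.

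To obtain~\eqref{eq:alpha_rec}, take the $j=k$ term out of the sum and multiply through by $\Delta$. Proceeding by induction on $k$, the right-hand side of~\eqref{eq:alpha_rec} depends only on $\alpha_q,\dots,\alpha_{k-1}$, which are already determined. For $k=q$ the sum is empty and $\alpha_q=\Delta p_{2q}$.

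There is no real obstacle here. The only point needing care is that the coefficients $c^{(j)}_n$ are well defined for every $j\le L$. This requires each $\lambda_0^{(j)}$ to be analytic near $x=0$, and Theorem~\ref{thm:series} provides this with no radius condition beyond $\Delta>0$; the real-axis construction in Proposition~\ref{prop:bracket} holds for small $x$.
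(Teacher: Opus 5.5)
Your proposal is correct and follows the paper's own proof: the $k$-th matching condition involves only $\alpha_q,\dots,\alpha_k$, with $\alpha_k$ entering only through $c^{(k)}_0=1/\Delta\neq0$. The system is therefore lower triangular with nonzero diagonal, so it is uniquely solvable, and solving the $k$-th equation for $\alpha_k$ gives~\eqref{eq:alpha_rec}.
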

\begin{proof}
For $q\le k\le L$, \eqref{eq:dk_def} reads $d_k=\sum_{j=q}^{k}\alpha_jc^{(j)}_{k-j}$.
The $k$-th condition in~\eqref{eq:matching} therefore involves only
$\alpha_q,\dots,\alpha_k$, and involves $\alpha_k$ only through
$\alpha_kc^{(k)}_0=\alpha_k/\Delta$ by Theorem~\ref{thm:series}. The conditions
form a lower-triangular linear system with diagonal entries $1/\Delta\neq0$, so
they have a unique solution, and solving the $k$-th one for $\alpha_k$
gives~\eqref{eq:alpha_rec}.
\end{proof}

Theorem~\ref{thm:series} gives more than the diagonal. For $j\ge2$ it fixes
$c^{(j)}_1=-1/\Delta$ and $c^{(j)}_n=0$ for $1<n<j$, so most terms
of~\eqref{eq:alpha_rec} vanish and one is explicit. This sparsity is what keeps
the weights under control in Section~\ref{sec:convergence}.

\begin{proposition}[Sparsity of the back-substitution]\label{prop:sparsity}
In the recurrence~\eqref{eq:alpha_rec} the coefficient $c_{k-j}^{(j)}$ vanishes for every $j\ge2$ with $\tfrac k2<j<k-1$. Consequently, for $k\ge\max\{3,q+1\}$,
\begin{equation}
\alpha_k = \Delta\,p_{2k}+\alpha_{k-1} -\Delta\!\!\sum_{q\le j\le\lfloor k/2\rfloor}\!\!\alpha_j\,c_{k-j}^{(j)} ,
\label{eq:alpha_sparse}
\end{equation}
while for $k=2$ (which occurs only when $q=1$) the unreduced recurrence gives
\begin{equation}
\alpha_2 = \Delta\bigl(p_4-\alpha_1c_1^{(1)}\bigr) =\Delta p_4+\frac{\alpha_1}{\Delta^{2}} .
\label{eq:alpha_k2}
\end{equation}
\end{proposition}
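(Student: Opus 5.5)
The plan is a direct reading of the recurrence~\eqref{eq:alpha_rec} against the vanishing pattern of Theorem~\ref{thm:series}. Fix $k$ and a summation index $j\in\{q,\dots,k-1\}$, and put $n:=k-j\ge1$. Theorem~\ref{thm:series} says that for $j\ge2$ the coefficient $c^{(j)}_n$ vanishes whenever $1<n<j$. Translating, $n>1$ means $j<k-1$, and $n<j$ means $k-j<j$, that is $j>k/2$. This gives the first claim: $c^{(j)}_{k-j}=0$ for $j\ge2$ with $k/2<j<k-1$.

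Next we split the sum in~\eqref{eq:alpha_rec} into three ranges: $q\le j\le\lfloor k/2\rfloor$, then $\lfloor k/2\rfloor<j<k-1$, then $j=k-1$. The middle range contributes nothing by the first claim, provided every $j$ in it satisfies $j\ge2$. This holds because $j>k/2\ge 3/2$ when $k\ge3$. The last term $j=k-1$ is present because $k-1\ge q$. For it, $n=1$, and we need $c^{(k-1)}_1=-1/\Delta$, which Theorem~\ref{thm:series} gives when $k-1\ge2$, that is when $k\ge3$. So the last term equals $-\Delta\,\alpha_{k-1}\cdot(-1/\Delta)=\alpha_{k-1}$, and~\eqref{eq:alpha_sparse} follows.

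One bookkeeping point needs checking: that $k-1>\lfloor k/2\rfloor$, so the term $j=k-1$ is not already counted in the first range. This holds for $k\ge3$. At $k=3$ the first range is $\{1\}$ and the last term is $j=2$, so there is no overlap and the middle range is empty. The condition $k\ge\max\{3,q+1\}$ covers exactly these needs. If $q>\lfloor k/2\rfloor$, the first-range sum is empty, which is consistent.

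For $k=2$ we must have $q=1$, since $k\ge q+1$ is needed for a nontrivial sum. Then~\eqref{eq:alpha_rec} has the single term $j=1$, $n=1$. Here Theorem~\ref{thm:series} does not give $c^{(1)}_1$, because it requires $m\ge2$. Instead we read it from Proposition~\ref{prop:m1}: expanding $\frac{-\Delta+\sqrt{\Delta^2+4x^2}}{2}=\frac{x^2}{\Delta}-\frac{x^4}{\Delta^3}+\cdots$ gives $c^{(1)}_1=-1/\Delta^3$. Substituting yields $\alpha_2=\Delta p_4+\alpha_1/\Delta^2$, which is~\eqref{eq:alpha_k2}.

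The only real obstacle is the index bookkeeping at the boundaries, namely $j=1$ being excluded from the vanishing statement and the range endpoints at small $k$. The $m=1$ coefficient is handled by the explicit square root.
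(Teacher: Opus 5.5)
Your proposal is correct and follows the paper's proof essentially step for step. You translate the vanishing range $1<k-j<j$ of Theorem~\ref{thm:series} into $k/2<j<k-1$, isolate the $j=k-1$ term via $c^{(k-1)}_1=-1/\Delta$ (valid once $k\ge3$) with the same non-overlap check $k-1>\lfloor k/2\rfloor$, and handle $k=2$ through $c^{(1)}_1=-1/\Delta^{3}$ from the exact $m=1$ branch; your check that the discarded middle range contains only $j\ge2$ matches the paper's remark that $j=1$ always lies in the retained range.
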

\deferproof{app:sparsity}

\begin{remark}[Computing the coefficients]\label{rem:compute}
The numbers $c^{(j)}_n$ are not available in closed form, but they are the Taylor
coefficients of an algebraic function. In the variable $x^{2}$ the eigenvalue is
the unique solution in $x^2\mathbb R[[x^2]]$ of
$S_m\bigl(\hat\lambda^{2}+\Delta\hat\lambda-x^2\bigr)-x^2S_{m-1}=0$, and Newton's
method in a ring of truncated power series~\cite{gathen2013modern}
doubles the number of correct coefficients at each step. Eliminating the chain
site by site from the far end writes that equation as a finite continued
fraction whose convergents are the $S_j$, which is the classical expansion of
the Green's function of a Jacobi matrix~\cite{teschl2000jacobi}. Evaluating $S_m$ through
the Binet form $S_m=(\mu_+^{m}-\mu_-^{m})/(\mu_+-\mu_-)$ of Lemma~\ref{lem:ch}
costs $O(1)$ series operations independently of $m$, so all the data the
back-substitution~\eqref{eq:alpha_rec} needs up to order $L$ cost
$O\bigl(L\,\mathsf M(L)\bigr)$ arithmetic operations, where $\mathsf M(p)$ is the
cost of multiplying two series truncated at order
$p$~\cite{brent2010arithmetic}. That is $O(L^{2}\log L)$ with fast
multiplication, and optimal up to the logarithm, the output having
$\Theta(L^{2})$ coefficients. The arithmetic must be exact or of high precision.
\end{remark}

\subsection{Convergence of the synthesis}\label{sec:convergence}
The back-substitution always has a solution, but it is useful only if the tail it
leaves behind, $\mathcal E_L=\sum_{k>L}d_kx^{2k}$, is small. Two effects compete
there. By~\eqref{eq:dk_def} each $d_k$ combines eigenvalue coefficients
$c^{(j)}_{k-j}$, which are small, weighted by the synthesis coefficients
$\alpha_j$, which the recurrence~\eqref{eq:alpha_rec} may amplify. We settle this
competition along the chain
\[
\text{bound } |c_k^{(j)}|
\longrightarrow
\text{bound } |\alpha_j|
\longrightarrow
\text{bound } \|\mathcal E_L(H)\|=\|P^e_L(H)-P(H)\| ,
\]
which answers the two questions the construction is for: how many chains are
needed, and for which input norms $\|H\|$ the error vanishes at all. The answer to
the second is every $\|H\|<r$, with no loss beyond the analyticity radius of
Section~\ref{sec:branch}. The only estimate on the gadget that enters is the
coefficient bound of Section~\ref{sec:branch}, restated for the chains used here,
\begin{equation}\label{eq:cauchy_bound_j}
|c_k^{(j)}| \le \frac{C_r}{\Delta\, r^{2k}} \qquad \forall j \ge q,\ k \ge 0 ,
\end{equation}
which holds as soon as $\Delta\ge\Delta_\star(r)$ by~\eqref{eq:summary_ck}, with
$C_r=2/(1-r^{2})$.

\paragraph{Growth of the weights.} The tail does not consume the weights one by one but
through the rescaled partial sums $\sum_{j\le L}r^{2j}|\alpha_j|$, since for $k>L$
\eqref{eq:dk_def} and~\eqref{eq:cauchy_bound_j} give
\[
|d_k|\le\frac{C_r}{\Delta\,r^{2k}}\sum_{j=q}^{L}r^{2j}|\alpha_j| .
\]
Bounding these sums through the full recurrence~\eqref{eq:alpha_rec} is too crude. It gives only
$r^{2k}|\alpha_k|\le r^{2k}\Delta|p_{2k}|+C_r\sum_{j<k}r^{2j}|\alpha_j|$, hence
growth like $(1+C_r)^{k}$ and convergence only for $\|H\|<r/\sqrt{1+C_r}$, a loss
of at least a factor $\sqrt3$ since $C_r\ge2$. The sparse form~\eqref{eq:alpha_sparse} avoids this loss: each $\alpha_k$ now
depends on its predecessor with coefficient exactly one, and otherwise only on
the first half of the weights, $j\le\lfloor k/2\rfloor$. As a result the partial
sums grow only quasi-polynomially in $L$, slowly enough to be beaten by the
geometric decay of the tail.

\begin{proposition}[Quasi-polynomial growth of the partial sums]\label{prop:alpha_growth}
Put $M_P:=\max_{k\in\{q,\dots,\ell\}}|p_{2k}|$ and, for $k\ge q$,
\[
A_k:=r^{2k}|\alpha_k| ,
\qquad
\mathcal A_k:=\sum_{j=q}^{k}A_j ,
\]
with the convention $\mathcal A_k:=0$ for $k<q$. Define
\begin{equation}
\Phi_k:=1\quad(k\le q),
\qquad
\Phi_k:=\frac{1+C_r\,(k-q)}{1-r^{2}}\Phi_{\lfloor k/2\rfloor}\quad(k>q).
\label{eq:Phi}
\end{equation}
Directly from~\eqref{eq:Phi}, $\Phi$ is non-decreasing with $\Phi_k\ge1$ for every $k$. Then
\begin{equation}
\mathcal A_L \le \frac{r^{2q}\,\Delta M_P}{1-r^{2}} \Phi_L
\qquad (q\le L).
\label{eq:Asum_bound}
\end{equation}
Moreover,
\begin{equation}
\ln\Phi_L \le \frac{\ln^{2}\bigl(\frac{\sqrt2\,C_r}{1-r^{2}}\,L\bigr)}{2\ln2} ,
\label{eq:Phi_growth}
\end{equation}
so the growth is quasi-polynomial and $\Phi_L^{1/L}\to1$.
\end{proposition}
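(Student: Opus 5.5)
The plan is to turn the sparse recurrence~\eqref{eq:alpha_sparse} into a recursive inequality for the rescaled partial sums $\mathcal A_k$, and then close it by strong induction against $\Phi_k$. Multiply~\eqref{eq:alpha_sparse} by $r^{2k}$ and take moduli. The coefficient bound~\eqref{eq:cauchy_bound_j} gives $\Delta|c^{(j)}_{k-j}|\,r^{2k}\le C_r\,r^{2j}$. Hence for $k\ge\max\{3,q+1\}$
\[
A_k\le r^{2k}\Delta|p_{2k}|+r^{2}A_{k-1}+C_r\,\mathcal A_{\lfloor k/2\rfloor}.
\]
Two edge cases need checking. For $k=2$ with $q=1$, \eqref{eq:alpha_k2} is the unreduced recurrence; bounding $|c^{(1)}_1|\le C_r/(\Delta r^{2})$ gives $A_2\le r^4\Delta|p_4|+C_r\mathcal A_1$, which has the same form with $\lfloor 2/2\rfloor=1$. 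For $k=q+1\ge3$ the half-sum is empty and $c^{(q)}_1=-1/\Delta$, so $A_{q+1}\le r^{2q+2}\Delta|p_{2q+2}|+r^{2}A_q$. The base case is $A_q=r^{2q}\Delta|p_{2q}|$.

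Next I sum the inequality over $k=q,\dots,L$. Writing $B:=r^{2q}\Delta M_P/(1-r^{2})$, the first terms sum to at most $\Delta M_P\sum_{k\ge q}r^{2k}=B$ (for $k>\ell$ we have $p_{2k}=0$). The second terms sum to at most $r^{2}\mathcal A_{L-1}\le r^{2}\mathcal A_L$. Since $\mathcal A$ is non-decreasing, the third terms sum to at most $C_r(L-q)\,\mathcal A_{\lfloor L/2\rfloor}$. Absorbing $r^{2}\mathcal A_L$ into the left-hand side yields
\[
(1-r^{2})\,\mathcal A_L\le B+C_r(L-q)\,\mathcal A_{\lfloor L/2\rfloor}.
\]
I then prove $\mathcal A_L\le B\,\Phi_L$ by strong induction on $L$. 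The case $L=q$ holds since $A_q\le(1-r^{2})B\le B$. For $\lfloor L/2\rfloor<q$ the convention $\mathcal A=0$ applies; otherwise the inductive hypothesis holds at $\lfloor L/2\rfloor$. Using $\Phi\ge1$ to write $B\le B\,\Phi_{\lfloor L/2\rfloor}$, the displayed inequality becomes $\mathcal A_L\le B\,\frac{1+C_r(L-q)}{1-r^{2}}\Phi_{\lfloor L/2\rfloor}=B\,\Phi_L$. This is~\eqref{eq:Asum_bound}.

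For~\eqref{eq:Phi_growth} I unroll the definition along the dyadic chain $L,\lfloor L/2\rfloor,\lfloor L/4\rfloor,\dots$ until the index drops to $q$. This gives $\ln\Phi_L\le\sum_{i=0}^{t}\ln\frac{1+C_rL2^{-i}}{1-r^{2}}$, where $t\le\log_2 L$. Since $C_r\ge2$ and the indices involved satisfy $L2^{-i}\ge1$, I bound $1+C_rL2^{-i}\le\tfrac32C_rL2^{-i}$. Each term is then at most $\ln(cL)-i\ln2$ with $c=\frac{3C_r}{2(1-r^{2})}$, and $c\le\frac{\sqrt2\,C_r}{1-r^{2}}$ since $\tfrac32\le\sqrt2\cdot1.07$, though the constant may need slight tuning. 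Writing $u=\ln(c'L)$, the sum of the positive terms of an arithmetic progression starting at $u$ with step $\ln2$ is at most $u^{2}/(2\ln2)+u/2$. The $\sqrt2$ in $c'$ is chosen precisely so that the half-step correction $u/2$ is absorbed: the sum with $u$ replaced by $u+\tfrac12\ln2$ gives exactly $(u+\tfrac12\ln 2)^2/(2\ln2)\ge u^{2}/(2\ln 2)+u/2$.

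I expect this last bookkeeping to be the main obstacle: matching the exact constant $\sqrt2$, and treating the factors where $L2^{-i}$ is small and the bound $1+C_rn\le\text{const}\cdot C_rn$ is lossy. If the constant does not close cleanly, I would first try bounding $\sum_{i}(u-i\ln2)_+$ by the integral $\int_{-1/2}^{\infty}(u-s\ln2)_+\,ds$. The qualitative conclusion $\Phi_L^{1/L}\to1$ follows from any bound of the form $O(\ln^{2}L)$.
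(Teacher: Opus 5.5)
Your proof of \eqref{eq:Asum_bound} is correct and follows the paper step for step. It uses the same pointwise inequality $A_k\le r^{2k}\Delta M_P+r^{2}A_{k-1}+C_r\mathcal A_{\lfloor k/2\rfloor}$ for $k>q$, with $k=2$ taken from \eqref{eq:alpha_k2}. It then sums to $(1-r^{2})\mathcal A_L\le B+C_r(L-q)\mathcal A_{\lfloor L/2\rfloor}$ and closes with the same strong induction using $\Phi\ge1$.

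The gap is in \eqref{eq:Phi_growth}, exactly at the constant you flagged, and neither of your proposed repairs closes it.

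\textbf{Where the constant is lost.} When you unroll, you replace $1+C_r(k_i-q)$ by $1+C_rL2^{-i}$ and discard the $-q$. That forces the per-factor bound $\tfrac32\,\frac{C_r}{1-r^{2}}\,L2^{-i}$. But the $\sqrt2$ in the statement is used up entirely by completing the square. With $\mathsf n$ factors, the $i$-th bounded by $e^{u}2^{-i}$, one gets $\ln\Phi_L\le \mathsf n u-\tfrac{\ln2}{2}\mathsf n(\mathsf n-1)=-\tfrac{\ln2}{2}\mathsf n^{2}+\mathsf n\bigl(u+\tfrac12\ln2\bigr)$. Its maximum over real $\mathsf n$ is $(u+\tfrac12\ln2)^{2}/(2\ln2)=\ln^{2}(\sqrt2\,e^{u})/(2\ln2)$. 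With your $e^{u}=\frac{3C_r}{2(1-r^{2})}L$, this gives only $\ln^{2}\bigl(\tfrac{3\sqrt2}{2}\,\tfrac{C_r}{1-r^{2}}\,L\bigr)/(2\ln2)$, which is strictly weaker than the claim.

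\textbf{Why your repairs fail.}
\begin{itemize}
\item The inequality $\frac{3C_r}{2(1-r^{2})}\le\frac{\sqrt2\,C_r}{1-r^{2}}$ that you invoke is false, since $\tfrac32>\sqrt2$. Even if it held, you would be spending the same $\sqrt2$ twice.
\item The integral fallback $\int_{-1/2}^{\infty}(u-s\ln2)_+\,ds$ evaluates to the same $(u+\tfrac12\ln2)^{2}/(2\ln2)$, so it changes nothing.
\item Your intermediate claim $\sum_{i}(u-i\ln2)_+\le u^{2}/(2\ln2)+u/2$ is also false. At $u=\tfrac32\ln2$ the sum is $2\ln2$, against $\tfrac{15}{8}\ln2$. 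The bound $(u+\tfrac12\ln2)^{2}/(2\ln2)$ that you state next is the correct one.
\end{itemize}

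\textbf{The missing line.} Keep the $-q$ and use $1+C_r(k-q)=C_rk-(C_rq-1)\le C_rk$, which holds because $C_rq\ge C_r\ge2$. Every factor is then at most $\frac{C_r}{1-r^{2}}L2^{-i}$, so $e^{u}=\frac{C_r}{1-r^{2}}L$. The completed square then gives \eqref{eq:Phi_growth} exactly, which is how the paper proceeds. Your qualitative conclusion $\Phi_L^{1/L}\to1$ was never at risk.
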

\deferproof{app:alpha}

\paragraph{The tail.} Inserting~\eqref{eq:cauchy_bound_j} and~\eqref{eq:Asum_bound}
into the tail gives the synthesis error on a scalar input.

\begin{proposition}[Tail estimate]\label{prop:tail_estimate}
Let $x_0<r$ and $\gamma:=x_0^{2}/r^{2}<1$. The truncation remainder $\mathcal E_L(x):=P_L^e(x)-P(x)=\sum_{k>L}d_kx^{2k}$ satisfies, for every $|x|\le x_0$,
\begin{equation}
|\mathcal E_L(x)|  \le  \frac{C_r\,M_P\,r^{2q}\,\Phi_L}{(1-r^{2})(1-\gamma)} \gamma^{\,L+1} .
\label{eq:tail_final}
\end{equation}
\end{proposition}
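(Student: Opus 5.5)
The plan is to bound the tail coefficient by coefficient and then sum a geometric series. First note that $\mathcal E_L$ has only terms $x^{2k}$ with $k>L$. Indeed, by Theorem~\ref{thm:onestep} and Proposition~\ref{prop:m1} each $\lambda_0^{(j)}$ is analytic on $\{|x|<r\}$ with the expansion of Theorem~\ref{thm:series}. Hence $P^e_L$ is analytic there, and its Taylor coefficients are the $d_k$ of~\eqref{eq:dk_def}. The matching conditions~\eqref{eq:matching} make $d_k=p_{2k}$ for $k\le\ell$ and $d_k=0$ for $\ell<k\le L$. So $\mathcal E_L(x)=\sum_{k>L}d_kx^{2k}$, and this series converges on $|x|<r$, since it is a finite combination of convergent series.

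For $k>L$ we have $\min(k,L)=L$, so $d_k=\sum_{j=q}^{L}\alpha_jc^{(j)}_{k-j}$. Inserting~\eqref{eq:cauchy_bound_j} gives
\[
|d_k|\le \sum_{j=q}^{L}|\alpha_j|\,\frac{C_r}{\Delta\,r^{2(k-j)}}
=\frac{C_r}{\Delta\,r^{2k}}\sum_{j=q}^{L}r^{2j}|\alpha_j|
=\frac{C_r}{\Delta\,r^{2k}}\,\mathcal A_L .
\]
Proposition~\ref{prop:alpha_growth}, equation~\eqref{eq:Asum_bound}, then yields
\[
|d_k|\le \frac{C_r\,M_P\,r^{2q}\,\Phi_L}{(1-r^{2})\,r^{2k}} .
\]
The factor $\Delta$ from~\eqref{eq:Asum_bound} cancels the $1/\Delta$ from the coefficient bound, which is why no $\Delta$ appears in~\eqref{eq:tail_final}.

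Finally, for $|x|\le x_0$ we have $|x|^{2k}/r^{2k}\le\gamma^{k}$. Therefore
\[
|\mathcal E_L(x)|\le\frac{C_r\,M_P\,r^{2q}\,\Phi_L}{1-r^{2}}\sum_{k\ge L+1}\gamma^{k}
=\frac{C_r\,M_P\,r^{2q}\,\Phi_L}{(1-r^{2})(1-\gamma)}\,\gamma^{L+1},
\]
which is~\eqref{eq:tail_final}.

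The argument is routine once the inputs are in place. The only point needing care is the first step: we must justify that the formal rearrangement into $\sum_k d_kx^{2k}$ is legitimate on the whole disc $|x|\le x_0<r$, not just near the origin. This follows from the uniform analyticity on $\{|x|<r\}$ supplied by $\Delta\ge\Delta_\star(r)$, together with absolute convergence, which the coefficient bounds themselves provide. We also rely on the bound~\eqref{eq:cauchy_bound_j} holding for every $j\ge q$ including $j=1$, which is covered by~\eqref{eq:summary_ck}.
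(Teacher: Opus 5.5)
Your proposal is correct and is essentially the paper's proof. The paper applies the triangle inequality, sums the geometric series over $k$ separately for each chain $j$, and then collects $\sum_j r^{2j}|\alpha_j|=\mathcal A_L$. You first bound each $d_k$ by $\frac{C_r}{\Delta r^{2k}}\mathcal A_L$ (a bound the paper also records in its main text) and then sum over $k$; this is the same computation with the two sums taken in the other order, and it yields the same constant and the same cancellation of $\Delta$.
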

\deferproof{app:tail}

Because $\gamma<1$ is fixed while $\Phi_L$ grows only quasi-polynomially, the geometric factor wins: the remainder falls off geometrically in the number of gadgets, at a rate set by nothing but the margin between the input norm and the analyticity radius.

\begin{corollary}[Geometric convergence]\label{cor:L_needed}
Let $H$ be Hermitian with $0<\|H\|<r$ and put $\gamma:=\|H\|^{2}/r^{2}<1$. Then, for every $L\ge q$,
\begin{equation}
\bigl\|P(H)-P^{e}_L(H)\bigr\|
 \le 
\frac{C_r\,M_P\,r^{2q}}{(1-r^{2})(1-\gamma)} \Phi_L \gamma^{\,L+1} ,
\label{eq:tail_op}
\end{equation}
and consequently
\begin{equation}
\limsup_{L\to\infty} \bigl\|P(H)-P^{e}_L(H)\bigr\|^{1/L}
 \le \gamma = \frac{\|H\|^{2}}{r^{2}} .
\label{eq:geom_rate}
\end{equation}
Reaching accuracy $\varepsilon$ therefore costs about
$\ln(M_P/\varepsilon)\big/\ln(1/\gamma)$ gadgets, and the longest chain used has
$2L$ sites. Since $L$ must also be at least $\ell$ for the coefficients of $P$ to
be matched at all, a target of degree $2\ell$ approximated to accuracy
$\varepsilon$ is reached with $L=O\bigl(\max\{\ell,\ln(M_P/\varepsilon)\}\bigr)$
chains, the constant depending only on $r$ and $\gamma$, hence
$O\bigl(\ell^{2}+\ln^{2}(M_P/\varepsilon)\bigr)$ ancilla qubits.
\end{corollary}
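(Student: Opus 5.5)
The plan is to pass from the scalar tail estimate of Proposition~\ref{prop:tail_estimate} to the operator norm by the spectral theorem, and then read off the rate and the gadget count from the explicit bound.

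\textbf{Step 1: reduction to a scalar supremum.} Since $H$ is Hermitian with $\|H\|<r$, its spectrum lies in $[-\|H\|,\|H\|]$. Each branch $\lambda_0^{(j)}$ is analytic on $\{|x|<r\}$ by Theorem~\ref{thm:onestep} and Proposition~\ref{prop:m1}, so $P^e_L$ is analytic there and $P^e_L(H)$ is defined by functional calculus. Functional calculus then gives
\[
\|P(H)-P^e_L(H)\|=\max_{x\in\spec H}|\mathcal E_L(x)| .
\]

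\textbf{Step 2: apply the tail estimate.} Take $x_0:=\|H\|$ in Proposition~\ref{prop:tail_estimate}, so that $\gamma=\|H\|^2/r^2$. Every spectral point satisfies $|x|\le x_0$, and \eqref{eq:tail_final} yields \eqref{eq:tail_op} verbatim. One point needs care. The tail series $\sum_{k>L}d_kx^{2k}$ must equal $P^e_L-P$ at these points. This holds because the Taylor series of $P^e_L$ converges on $|x|<r$ by analyticity, and the matching conditions \eqref{eq:matching} remove all orders up to $2L$.

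\textbf{Step 3: the asymptotic rate.} Take $L$-th roots in \eqref{eq:tail_op}. The prefactor $C_rM_Pr^{2q}/((1-r^2)(1-\gamma))$ does not depend on $L$, so its $L$-th root tends to $1$. By \eqref{eq:Phi_growth}, $\ln\Phi_L=O(\ln^2L)$, hence $\Phi_L^{1/L}\to1$. Finally $\gamma^{(L+1)/L}\to\gamma$, which gives \eqref{eq:geom_rate}.

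\textbf{Step 4: the gadget count.} Set the right-hand side of \eqref{eq:tail_op} at most $\varepsilon$. Taking logarithms, it suffices that
\[
L\ln(1/\gamma)\ \ge\ \ln(M_P/\varepsilon)+\ln\Phi_L+c_{r,\gamma} .
\]
The term $\ln\Phi_L$ is the only obstacle, since it grows with $L$. Because it is $O(\ln^2L)=o(L)$, it can be absorbed into at most half of the left-hand side once $L\ge L_0(r,\gamma)$. So $L=O(\max\{\ell,\ln(M_P/\varepsilon)\})$ suffices, with the constant depending only on $r$ and $\gamma$; the requirement $L\ge\ell$ comes from needing to match the coefficients of $P$.

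\textbf{Step 5: counting ancillas.} The chains $E_{2j}$ for $j=q,\dots,L$ use $2j$ qubits each. The total is $\sum_{j\le L}2j=O(L^2)$, which is $O(\ell^2+\ln^2(M_P/\varepsilon))$.
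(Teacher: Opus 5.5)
Your proposal is correct and follows essentially the paper's proof: the spectral theorem reduces the norm to a maximum over $\spec H$, Proposition~\ref{prop:tail_estimate} is applied with $x_0=\|H\|$, $L$-th roots are taken using $\Phi_L^{1/L}\to1$, and $\ln\Phi_L=o(L)$ is absorbed into the logarithmic condition to get the gadget count. Your check that the tail series equals $P^e_L-P$ on the spectrum, and your use of $L\ln(1/\gamma)$ in place of $(L+1)\ln(1/\gamma)$, are harmless refinements.
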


\begin{proof}
$\mathcal E_L=P^{e}_L-P$ is analytic on $|x|<r$ and $\spec H\subset[-\|H\|,\|H\|]$, so the spectral theorem gives $\|\mathcal E_L(H)\|=\max_{x\in\spec H}|\mathcal E_L(x)|$. Then~\eqref{eq:tail_op} is Proposition~\ref{prop:tail_estimate} applied with $x_0=\|H\|$. Taking $L$-th roots there, the constant prefactor tends to $1$, so does $\Phi_L^{1/L}$ by~\eqref{eq:Phi_growth}, and $\gamma^{(L+1)/L}\to\gamma$. Finally, using $r^{2q}\le1$, the right-hand side of~\eqref{eq:tail_op} is at most
$\varepsilon$ as soon as
$(L+1)\ln(1/\gamma)\ge\ln\frac{M_P}{\varepsilon}+\ln\frac{C_r}{(1-r^{2})(1-\gamma)}+\ln\Phi_L$,
and $\ln\Phi_L=o(L)$ by~\eqref{eq:Phi_growth}, which gives the stated $L$, with
$\ln(1/\gamma)=2\ln(r/\|H\|)$.
\end{proof}

\subsection{The realised Hamiltonian}\label{sec:realisation}

We now attach all chains to the same input $H$, each on its own set of ancillas
and with weight $\alpha_j$ (Figure~\ref{fig:realisation}). Because the chains are
disjoint they commute, so their branch eigenvalues add, and the sum carries
$P^{e}_L$ exactly.

\begin{figure}[ht]
\centering
\begin{tikzpicture}[
  every node/.style={font=\small},
  site/.style={circle,draw,thick,fill=white,inner sep=0pt,minimum size=4.5mm},
  sink/.style={circle,draw,thick,fill=black!15,inner sep=0pt,minimum size=4.5mm},
  xb/.style={very thick},
  ub/.style={thin}]
  \def\s{0.72}
  \node[draw,thick,rounded corners,minimum width=20mm,minimum height=24mm,fill=black!4]
       (sys) at (0,0) {};
  \node at (0,0.72) {$\mathcal H$};
  \node at (0,0.12) {$\ket{v_x}$};
  \node[font=\scriptsize] at (0,-0.48) {$H\ket{v_x}=x\ket{v_x}$};
  \foreach \i in {1,2,3}{\node[site] (p\i) at (2.3+\s*\i,1.1) {};}
  \node[sink] (p4) at (2.3+\s*4,1.1) {};
  \draw[xb] (p1)--(p2); \draw[ub] (p2)--(p3); \draw[xb] (p3)--(p4);
  \node[anchor=south,font=\small] at (2.3+\s*1.5,1.28) {$\alpha_q$};
  \node[anchor=west,font=\scriptsize] at (2.3+\s*4+0.32,1.1)
       {$\ket{\xi_q(x)}$,\ \ $\lambda_0^{(q)}(x)$};
  \node at (2.3+\s*2.5,0.22) {$\vdots$};
  \foreach \i in {1,...,5}{\node[site] (r\i) at (2.3+\s*\i,-0.9) {};}
  \node[sink] (r6) at (2.3+\s*6,-0.9) {};
  \draw[xb] (r1)--(r2); \draw[ub] (r2)--(r3); \draw[xb] (r3)--(r4);
  \draw[ub] (r4)--(r5); \draw[xb] (r5)--(r6);
  \node[anchor=south,font=\small] at (2.3+\s*1.5,-0.72) {$\alpha_L$};
  \node[anchor=west,font=\scriptsize] at (2.3+\s*6+0.32,-0.9)
       {$\ket{\xi_L(x)}$,\ \ $\lambda_0^{(L)}(x)$};
  \draw[->,thick] (sys.east) to[out=30,in=180] (p1.west);
  \draw[->,thick] (sys.east) to[out=-30,in=180] (r1.west);
  \node[font=\scriptsize,anchor=south,align=center] at (1.55,1.35)
       {thick bonds\\carry $H$};
  \draw[decorate,decoration={brace,mirror,amplitude=5pt}]
       (2.15,-1.55) -- (2.3+\s*6+0.35,-1.55)
       node[midway,below=6pt,font=\small]
       {$L-q+1$ disjoint chains, one excitation each, $O(L^{2})$ ancilla qubits};
\end{tikzpicture}

\vspace{-2mm}
\[
W:\ \ket{v_x}\ \longmapsto\ \ket{v_x}\otimes\bigotimes_{j=q}^{L}\ket{\xi_j(x)} ,
\qquad\qquad
\Sigma\,W \;=\; W\,P^{e}_L(H) .
\]
\vspace{-4mm}
\caption{The construction of Theorem~\ref{thm:realisation}. Every chain is attached to the same system register through its
$x$-bonds, so on an eigenvector $\ket{v_x}$ it reduces to the matrix $E_{2j}(x)$
of Figure~\ref{fig:chain} and contributes $\lambda_0^{(j)}(x)$. The chains
occupy disjoint sets of ancillas, so the summands of
$\Sigma=\sum_j\alpha_jE_{2j}(H)$ commute, their branch eigenvalues add, and $W$
intertwines $\Sigma$ with $P^{e}_L(H)$ exactly, which is~\eqref{eq:intertwine}.
The image of $W$ is invariant under $\Sigma$.}
\label{fig:realisation}
\end{figure}

\begin{theorem}[Realisation]\label{thm:realisation}
Fix $r\in(0,1)$ and $\Delta\ge\Delta_\star(r)$, let $H$ be Hermitian with
eigenpairs $(x,\ket{v_x})$ and $\|H\|<r$, let the $\alpha_j$ solve the
triangular system of Proposition~\ref{prop:triangular_solvability}, and put
\begin{equation}
\Sigma\;:=\;\sum_{j=q}^{L}\alpha_j\,E_{2j}(H)
\qquad\text{on}\qquad
\mathcal H\otimes\bigotimes_{j=q}^{L}\mathcal C_j ,
\label{eq:Sigma}
\end{equation}
each summand padded by the identity on the other chains, where
$\mathcal C_j\cong\mathbb C^{2j}$ is the one-excitation sector of chain $j$.
Then:
\begin{enumerate}[label=(\roman*)]
\item the summands commute pairwise, $[E_{2j}(H),E_{2j'}(H)]=0$, and $\Sigma$ is
      $(\varkappa+2)$-local whenever $H$ is $\varkappa$-local,
\item writing $\ket{\xi_j(x)}\in\mathcal C_j$ for the unit eigenvector of
      $E_{2j}(x)$ belonging to $\lambda_0^{(j)}(x)$, well defined up to a phase
      because that eigenvalue is simple, the map
      \begin{equation}
      W:\ \ket{v_x}\ \longmapsto\ \ket{v_x}\otimes\bigotimes_{j=q}^{L}\ket{\xi_j(x)}
      \label{eq:isometry}
      \end{equation}
      extends to an isometry
      $W:\mathcal H\to\mathcal H\otimes\bigotimes_j\mathcal C_j$ satisfying the
      exact intertwining relation
      \begin{equation}
      \Sigma\,W \;=\; W\,P_L^{e}(H) .
      \label{eq:intertwine}
      \end{equation}
\end{enumerate}
Consequently $e^{-i\Sigma t}W=W e^{-iP_L^{e}(H)t}$ for every $t\in\mathbb R$, and
hence
$\bigl\|e^{-i\Sigma t}W-We^{-iP(H)t}\bigr\|\le |t|\,\|P(H)-P_L^{e}(H)\|$, a
quantity controlled by Corollary~\ref{cor:L_needed}.
\end{theorem}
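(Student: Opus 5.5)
The argument works eigenvector by eigenvector of $H$. Since $H$ is Hermitian on a finite-dimensional space, I fix an orthonormal eigenbasis $\{\ket{v_x}\}$ of $H$, choosing an orthonormal basis inside each degenerate eigenspace. Definition~\eqref{eq:even-gadget-Hamiltonian-matrix} shows that $E_{2j}(H)$ acts on the system register only through $H$, tensored with fixed ancilla operators. Hence for every eigenvector and every $\ket\phi\in\mathcal C_j$,
\[
E_{2j}(H)\bigl(\ket{v_x}\otimes\ket\phi\bigr)=\ket{v_x}\otimes E_{2j}(x)\ket\phi .
\]
So each subspace $\ket{v_x}\otimes\bigotimes_j\mathcal C_j$ is invariant under every summand, and on it the summands act as $I\otimes\cdots\otimes E_{2j}(x)\otimes\cdots\otimes I$, each on its own tensor factor.

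\textbf{Part (i).} Commutativity follows by checking on a basis. On vectors $\ket{v_x}\otimes\ket{\phi_q}\otimes\cdots\otimes\ket{\phi_L}$, the operators $E_{2j}(H)$ and $E_{2j'}(H)$ with $j\ne j'$ act as $E_{2j}(x)$ and $E_{2j'}(x)$ on different tensor factors. They therefore commute on each invariant block, and these blocks span the whole space. For locality, Lemma~\ref{lem:local} realises each $E_{2j}(H)$ as a $(\varkappa+2)$-local Hamiltonian on its own ancilla qubits. A real linear combination of such terms, on disjoint ancilla sets, is still $(\varkappa+2)$-local. The one-excitation sectors $\mathcal C_j$ are jointly invariant, because each number operator $\mathcal N_j$ commutes with every term.

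\textbf{Part (ii).} Because $\|H\|<r$, every eigenvalue $x$ lies in $(-r,r)$. Theorem~\ref{thm:onestep} (for $j\ge2$) and Proposition~\ref{prop:m1} (for $j=1$, using $\Delta\ge\Delta_\star(r)\ge2r$) then show that $\lambda_0^{(j)}(x)$ is a simple eigenvalue of $E_{2j}(x)$. So $\ket{\xi_j(x)}$ is defined up to a phase, which I fix arbitrarily. I define $W$ on the basis by~\eqref{eq:isometry} and extend linearly. Images of distinct basis vectors are orthonormal: they have orthogonal system parts, and each product of unit vectors has norm $1$. Hence $W$ is an isometry.

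For the intertwining relation, on a basis vector I get
\[
\Sigma W\ket{v_x}=\sum_j\alpha_j\,\ket{v_x}\otimes\cdots\otimes E_{2j}(x)\ket{\xi_j(x)}\otimes\cdots=\Bigl(\sum_j\alpha_j\lambda_0^{(j)}(x)\Bigr)W\ket{v_x}=P^e_L(x)\,W\ket{v_x}.
\]
This equals $WP^e_L(H)\ket{v_x}$, since $P^e_L(H)$ is defined by the spectral calculus. Linearity then gives~\eqref{eq:intertwine}. One subtlety is the degenerate case: the map $x\mapsto\ket{\xi_j(x)}$ depends only on the eigenvalue, so $W$ does not depend on the choice of basis inside an eigenspace beyond the arbitrary phases.

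\textbf{Consequences.} From $\Sigma W=WP^e_L(H)$, induction gives $\Sigma^nW=W\,P^e_L(H)^n$. Summing the exponential series (everything is finite-dimensional) yields $e^{-i\Sigma t}W=We^{-iP^e_L(H)t}$. Then, since $W$ is an isometry,
\[
\bigl\|e^{-i\Sigma t}W-We^{-iP(H)t}\bigr\|=\bigl\|e^{-iP^e_L(H)t}-e^{-iP(H)t}\bigr\| .
\]
The operators $P^e_L(H)$ and $P(H)$ commute, because both are functions of $H$. So this difference equals $e^{-iP(H)t}\bigl(e^{-i(P^e_L-P)(H)t}-I\bigr)$, whose norm is at most $|t|\,\|P^e_L(H)-P(H)\|$ by $|e^{i\theta}-1|\le|\theta|$ applied eigenvalue-wise.

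No step is a real obstacle. The only care needed is in the simplicity and well-definedness of the branch at every spectral point, including the $m=1$ chain when $q=1$; this is exactly why $\Delta_\star(r)$ includes the term $2r$.
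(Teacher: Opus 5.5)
Your proof is correct and follows the paper's argument: you reduce on each $\ket{v_x}\otimes\bigotimes_j\mathcal C_j$ to the matrices $E_{2j}(x)$, take simplicity from Theorem~\ref{thm:onestep} and Proposition~\ref{prop:m1}, and finish with the commuting-exponential bound; the only cosmetic difference is that you check $[E_{2j}(H),E_{2j'}(H)]=0$ block by block, where the paper expands the commutator of $H\otimes\mathsf A_j+I\otimes\mathsf B_j$. One correction to your closing remark: on the real axis the $m=1$ root is simple for every $\Delta>0$, since the two eigenvalues of $E_2(x)$ differ by $\sqrt{\Delta^{2}+4x^{2}}$, so this theorem does not need the term $2r$ in $\Delta_\star(r)$; that term is there for the analyticity of $\lambda_0^{(1)}$ on the complex disc, hence for the coefficient bound behind Corollary~\ref{cor:L_needed}.
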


\begin{proof}
(i) By~\eqref{eq:local-form}, $E_{2j}(H)=H\otimes \mathsf A_j+I\otimes \mathsf B_j$
with $\mathsf A_j$ the sum of the $x$-bond hopping terms and $\mathsf B_j$ the
unit bonds together with the detuned site, both of them supported on chain $j$
alone. For $j\ne j'$ the four commutators
$[\mathsf A_j,\mathsf A_{j'}]$, $[\mathsf A_j,\mathsf B_{j'}]$,
$[\mathsf B_j,\mathsf A_{j'}]$, $[\mathsf B_j,\mathsf B_{j'}]$ vanish because
the operators act on disjoint tensor factors, and expanding
\[
[H\otimes \mathsf A_j+I\otimes \mathsf B_j,\ H\otimes \mathsf A_{j'}+I\otimes \mathsf B_{j'}]
= H^{2}\otimes[\mathsf A_j,\mathsf A_{j'}]
+H\otimes\bigl([\mathsf A_j,\mathsf B_{j'}]+[\mathsf B_j,\mathsf A_{j'}]\bigr)
+I\otimes[\mathsf B_j,\mathsf B_{j'}]
\]
gives $0$. Every term of~\eqref{eq:local-form} commutes with the excitation
number of its chain, so each $\mathcal C_j$ is invariant, and locality is
Lemma~\ref{lem:local}.

\emph{Simplicity of the branch.} Fix $j$ and $|x|<r$. On $\mathcal D$ one has
$\chi_{2j}=S_j\,F$ with $F(\lambda)=\lambda^{2}+\Delta\lambda-x^{2}R_j$ and
$S_j\neq0$ by Lemma~\ref{lem:poles}(ii), so the eigenvalues of $E_{2j}(x)$ lying
in $\mathcal D$ are exactly the zeros of $F$ there, with the same multiplicities.
For $j\ge2$, Theorem~\ref{thm:onestep} produces \emph{precisely one} zero of $F$
in $|\lambda|<\rho\subset\mathcal D$, so $\lambda_0^{(j)}(x)$ is a simple
eigenvalue of $E_{2j}(x)$. For $j=1$ the eigenvalue equation is the quadratic of
Proposition~\ref{prop:m1}, whose two roots differ by $\sqrt{\Delta^{2}+4x^{2}}>0$.
In both cases $\ket{\xi_j(x)}$ is determined up to a phase.

(ii) The vectors $\ket{v_x}$ are orthonormal and each $\ket{\xi_j(x)}$ is a unit
vector, so $W$ is an isometry. On
$\ket{v_x}\otimes\bigotimes_{j'}\ket{\xi_{j'}(x)}$ the operator
$H\otimes \mathsf A_j+I\otimes \mathsf B_j$ acts as
$x\mathsf A_j+\mathsf B_j=E_{2j}(x)$ on the factor $\mathcal C_j$ and as the
identity on the others, so that vector is an eigenvector of $E_{2j}(H)$ with
eigenvalue $\lambda_0^{(j)}(x)$, and of $\Sigma$ with eigenvalue
$\sum_j\alpha_j\lambda_0^{(j)}(x)=P_L^{e}(x)$. Since
$P_L^{e}(H)\ket{v_x}=P_L^{e}(x)\ket{v_x}$, both sides of~\eqref{eq:intertwine}
agree on every $\ket{v_x}$, hence everywhere. The evolution statement follows by
functional calculus, and the error estimate from
$\|e^{-iAt}-e^{-iBt}\|\le|t|\,\|A-B\|$ for commuting Hermitian $A,B$, here
$A=P_L^{e}(H)$ and $B=P(H)$, both functions of $H$.
\end{proof}

The construction has some costs that the per-chain estimates of
Section~\ref{sec:branch} do not see: the number of ancillas and the energy scale
$\|\Sigma\|$.
\begin{itemize}
  \item \emph{Ancillas.} The construction uses $\sum_{j=q}^{L}2j=O(L^{2})$ ancilla
  qubits carrying $L-q+1$ excitations, one per chain. With $L$ as in
  Corollary~\ref{cor:L_needed} this is
  $O\bigl(\ell^{2}+\ln^{2}(M_P/\varepsilon)\bigr)$ for a target of degree $2\ell$
  reached to accuracy $\varepsilon$.
  \item \emph{Energy scale.} Each chain has $\|E_{2j}(H)\|\le\Delta+2$, so
  $\|\Sigma\|\le(\Delta+2)\Lambda_L$, with $\Lambda_L:=\sum_{j=q}^{L}|\alpha_j|$
  the total weight of the synthesis.
\end{itemize}

Together, Theorem~\ref{thm:realisation} and Corollary~\ref{cor:L_needed} prove
Theorem~B, the odd part being treated in Appendix~\ref{app:odd}: every even
polynomial of an input with $\|H\|<r$ is approximated, with an error decaying
geometrically in the number of chains, by the encoded action of a single
time-independent $(\varkappa+2)$-local Hamiltonian. The price grows with the
target, through its degree and $\ln(M_P/\varepsilon)$ in the number of chains, and
through the weights $\Lambda_L$ in the energy scale. The next section turns to an
application, analog search, whose target is a filter isolating a marked state.

\section{Application: a filter for a marked state}\label{sec:filter}

Analog search~\cite{farhi1998analog,farhi2000adiabatic,RolandLocalAQC}
interpolates $H(s)=(1-s)H_0-s\ket m\!\bra m$ from a transverse field $H_0$. Its minimum
gap closes as $2^{-n/2}$~\cite[Sec.~4.2]{farhi2000adiabatic}, and with a rank-one driver the gap and
the optimal schedule are known in closed form~\cite{RolandLocalAQC,Braida2025unstructured},
but $\ket m\!\bra m$ is a rank-one projector on $2^{n}$ dimensions and no device
implements it. A device does implement a local $H_1$ whose ground state is
$\ket m$, and more generally one in which $\ket m$ is an isolated eigenvector.
Fix $\sigma>0$ and take
\begin{equation}
\spec(H_1)\subset[-\sigma,-\delta]\cup\{0\}\cup[\delta,\sigma] ,\qquad 0<\delta\le\sigma ,
\label{eq:H1spec}
\end{equation}
with $\ket m$ the unique $0$-eigenvector. For example, an Ising Hamiltonian
shifted by its ground energy and rescaled has a nonnegative spectrum of this form,
with $\ket m$ its ground state. Nothing below requires the spectrum to be
one-sided. The projector is then a
function of $H_1$: if $R(0)=1$ and $|R(x)|\le\eta$ for $\delta\le|x|\le\sigma$,
the spectral theorem gives
$\|R(H_1)-\ket m\!\bra m\|\le\eta$. This section builds such an $R$ as the encoded
action of a local Hamiltonian, and then puts it into the search.

\subsection{Why not the polynomial directly}\label{sec:filter:target}

The natural target for filtering is the Chebyshev filter, flat and equal to $1$ at the marked energy
and exponentially small on the rest of the spectrum. Among polynomials of a given degree equal to $1$ at the marked energy, it has
the smallest maximum on the rest of the spectrum~\cite{lin2020filtering}. 
This subsection fixes it and
its two parameters, the passband edge $d$ and the half-degree $\ell$.

\begin{definition}[Chebyshev filter]\label{def:filter}
For $0<d<\sigma$ and $\ell\ge1$ set
\begin{equation}
\phi(x):=-1+\frac{2\,(x^{2}-d^{2})}{\sigma^{2}-d^{2}} ,
\qquad
R_\ell(x):=\frac{T_\ell\bigl(\phi(x)\bigr)}{T_\ell\bigl(\phi(0)\bigr)} ,
\label{eq:filter}
\end{equation}
with $T_\ell$ the Chebyshev polynomial of the first
kind~\cite{mason2003chebyshev,trefethen2013atap}. Thus $\phi(0)=-\frac{\sigma^{2}+d^{2}}{\sigma^{2}-d^{2}}<-1$, $R_\ell$ is even of degree $2\ell$, and $R_\ell(0)=1$.
\end{definition}

\begin{lemma}\label{lem:filter}
Put $u:=d/\sigma\in(0,1)$. Then $|R_\ell(x)|\le1$ for $|x|\le\sigma$, and
\begin{equation}
|R_\ell(x)|\le\frac{1}{\cosh\bigl(2\ell \arctanh u\bigr)}\le2\mathrm e^{-2\ell\tfrac{d}{\sigma}} \qquad (d\le|x|\le\sigma).
\label{eq:stopband}
\end{equation}
Consequently, if $d\le\delta$ then $\bigl\|R_\ell(H_1)-\ket m\!\bra m\bigr\|\le2\mathrm e^{-2\ell u}$.
\end{lemma}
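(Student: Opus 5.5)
The plan is to reduce everything to the classical bounds $|T_\ell(t)|\le1$ on $[-1,1]$ and $T_\ell(t)=\cosh(\ell\arccosh|t|)$ up to sign for $|t|\ge1$. The argument has three steps: locate $\phi$ on the stopband, evaluate the denominator $|T_\ell(\phi(0))|$ in closed form, and convert the resulting ratio into an exponential bound.

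\emph{Range of $\phi$.} On $|x|\le\sigma$ the map $\phi$ is increasing in $x^2$, so for $d\le|x|\le\sigma$ we have $\phi(x)\in[-1,1]$, and hence $|T_\ell(\phi(x))|\le1$. For $|x|<d$ we have $\phi(0)\le\phi(x)<-1$. Since $|T_\ell(t)|=\cosh(\ell\arccosh|t|)$ is increasing in $|t|$ for $|t|\ge1$, it follows that $|T_\ell(\phi(x))|\le|T_\ell(\phi(0))|$ there. This gives $|R_\ell|\le1$ on the whole of $|x|\le\sigma$.

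\emph{The denominator.} Put $|\phi(0)|=\frac{1+u^2}{1-u^2}$. Write $u=\tanh s$ with $s=\arctanh u$. Then $\frac{1+\tanh^2 s}{1-\tanh^2 s}=\cosh 2s$, so $\arccosh|\phi(0)|=2\arctanh u$ and $|T_\ell(\phi(0))|=\cosh(2\ell\arctanh u)$. Combined with the numerator bound on the stopband, this is the first inequality in~\eqref{eq:stopband}.

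\emph{The exponential form and the operator statement.} Use $\cosh y\ge e^{y}/2$ together with $\arctanh u\ge u$, which follows from the Taylor series with positive coefficients. This gives $1/\cosh(2\ell\arctanh u)\le2e^{-2\ell u}=2e^{-2\ell d/\sigma}$. For the operator statement, apply the spectral theorem to the even polynomial $R_\ell$ on $\spec H_1$. Three facts combine: $R_\ell(0)=1$ on $\ket m$; the nonzero spectrum lies in $[\delta,\sigma]$ in absolute value, which is contained in $[d,\sigma]$ because $d\le\delta$; and $R_\ell(H_1)-\ket m\!\bra m$ is then diagonal in the eigenbasis of $H_1$, with entries of modulus at most $2e^{-2\ell u}$.

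I expect no real obstacle. The only step needing care is the hyperbolic identity $\arccosh\frac{1+u^2}{1-u^2}=2\arctanh u$, which will be checked via the double-angle formula as above.
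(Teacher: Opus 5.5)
Your proposal is correct and follows the paper's proof essentially step for step. Both arguments use the range of $\phi$ together with the monotonicity of $|T_\ell|$ on $|t|\ge1$ to get $|R_\ell|\le1$, the identity $\cosh(2\arctanh u)=\frac{1+u^{2}}{1-u^{2}}$ for the denominator (your substitution $u=\tanh s$ simply spells it out), $\cosh y\ge\tfrac12e^{y}$ with $\arctanh u\ge u$ for the exponential form, and the spectral decomposition of $R_\ell(H_1)-\ket m\!\bra m$ for the operator bound.
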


\begin{proof}
$\phi$ is increasing in $x^{2}$ and maps $|x|\in[d,\sigma]$ onto $[-1,1]$, so $\phi(x)\in[\phi(0),1]$ for $|x|\le\sigma$. On $[-1,1]$ one has $|T_\ell|\le1$, while on $(-\infty,-1]$, $|T_\ell(y)|=\cosh(\ell\arccosh|y|)$ is increasing in $|y|$. Since $|\phi(x)|\le|\phi(0)|$ there, in both regimes $|T_\ell(\phi(x))|\le|T_\ell(\phi(0))|$, which is $|R_\ell|\le1$. For $d\le|x|\le\sigma$ the numerator is at most $1$, giving the first bound in~\eqref{eq:stopband} once we identify
\[
\arccosh|\phi(0)|=\arccosh\frac{1+u^{2}}{1-u^{2}}=2\arctanh u ,
\]
which is the identity $\cosh(2\arctanh u)=\frac{1+u^{2}}{1-u^{2}}$. The second bound follows from $\cosh y\ge\tfrac12e^{y}$ and $\arctanh u\ge u$. Finally $R_\ell(H_1)-\ket m\!\bra m=\sum_{\delta\le|x|\le\sigma}R_\ell(x)\ket x\!\bra x$ and $R_\ell(0)=1$, so its norm is $\max_{\delta\le|x|\le\sigma}|R_\ell(x)|$.
\end{proof}

The branches vanish at $x=0$, so what the gadgets must synthesise is $R_\ell-1$, the constant being a free shift of energy. Section~\ref{sec:synthesis} can do this in one combination of gadgets, since the triangular system of Proposition~\ref{prop:triangular_solvability} is solvable for any target, but at a cost that is intrinsic to the polynomial.

\begin{lemma}[The coefficients of $R_\ell$ are exponentially large]\label{lem:MPbig}
Write $R_\ell(x)=\sum_{j=0}^{\ell}p_{2j}x^{2j}$. Then
\begin{equation}
M_P\ge|p_{2\ell}|=\frac{2^{\ell-1}}{|T_\ell(\phi(0))|} \left(\frac{2}{\sigma^{2}-d^{2}}\right)^{\!\ell} \ge\frac12\left(\frac{4\,e^{-2\arctanh u}}{\sigma^{2}-d^{2}}\right)^{\!\ell} .
\label{eq:MPbig}
\end{equation}
In particular $M_P$ grows exponentially in $\ell$ whenever
$\sigma^{2}-d^{2}<4\mathrm e^{-2\arctanh u}$. Since
$\mathrm e^{-2\arctanh u}=\frac{1-u}{1+u}=\frac{\sigma-d}{\sigma+d}$, that condition
is exactly $\sigma+d<2$, so it holds whenever $\sigma\le1$.
\end{lemma}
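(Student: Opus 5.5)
The plan is to read off the top coefficient of $R_\ell$ directly from the leading coefficient of $T_\ell$ and the leading coefficient of $\phi$, and then to bound the normalising denominator $|T_\ell(\phi(0))|$ from above. Since $T_\ell(y)=2^{\ell-1}y^{\ell}+(\text{lower order})$ for $\ell\ge1$, and $\phi(x)=\frac{2}{\sigma^2-d^2}x^2+(\text{const})$ is a polynomial in $x^2$ with leading coefficient $\frac{2}{\sigma^2-d^2}$, the composition $T_\ell(\phi(x))$ has degree $2\ell$ in $x$. Its coefficient of $x^{2\ell}$ is $2^{\ell-1}\bigl(\frac{2}{\sigma^2-d^2}\bigr)^{\ell}$. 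Dividing by the constant $T_\ell(\phi(0))$ gives the stated equality for $|p_{2\ell}|$. The bound $M_P\ge|p_{2\ell}|$ is then immediate from the definition of $M_P$ as the largest coefficient in modulus.

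For the lower bound, I will use the identity already established in the proof of Lemma~\ref{lem:filter}, namely $\arccosh|\phi(0)|=2\arctanh u$. Since $|T_\ell(y)|=\cosh(\ell\arccosh|y|)$ for $|y|\ge1$, this gives
\[
|T_\ell(\phi(0))|=\cosh(2\ell\arctanh u)\le e^{2\ell\arctanh u}.
\]
Substituting this into the exact expression yields
\[
|p_{2\ell}|\ge 2^{\ell-1}\Bigl(\tfrac{2}{\sigma^2-d^2}\Bigr)^{\ell}e^{-2\ell\arctanh u}=\tfrac12\Bigl(\tfrac{4e^{-2\arctanh u}}{\sigma^2-d^2}\Bigr)^{\ell}.
\]

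The remaining claims are elementary algebra.
\begin{itemize}
\item Exponential growth in $\ell$ holds exactly when the base exceeds $1$, that is, when $\sigma^2-d^2<4e^{-2\arctanh u}$.
\item Using $\arctanh u=\frac12\ln\frac{1+u}{1-u}$ gives $e^{-2\arctanh u}=\frac{1-u}{1+u}=\frac{\sigma-d}{\sigma+d}$.
\item With this, the condition reads $(\sigma-d)(\sigma+d)<4\frac{\sigma-d}{\sigma+d}$. Since $\sigma>d$, we may divide by $\sigma-d>0$ to get $(\sigma+d)^2<4$, i.e.\ $\sigma+d<2$.
\item If $\sigma\le1$, then $d<\sigma\le1$, so $\sigma+d<2$ holds.
\end{itemize}

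There is no real obstacle here. The only points needing care are the leading coefficient $2^{\ell-1}$ of $T_\ell$, which is valid for $\ell\ge1$, and the sign handling in $|T_\ell(\phi(0))|$ when $\phi(0)<-1$. The latter is covered by the parity relation $|T_\ell(-y)|=|T_\ell(y)|$.
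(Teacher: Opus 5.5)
Your proposal is correct and is essentially the paper's proof: you read off $p_{2\ell}$ from the leading coefficients $2^{\ell-1}$ of $T_\ell$ and $2/(\sigma^{2}-d^{2})$ of $\phi$ in $x^{2}$, then bound $|T_\ell(\phi(0))|=\cosh(2\ell\arctanh u)\le e^{2\ell\arctanh u}$. Your check of the closing algebra, $e^{-2\arctanh u}=\frac{\sigma-d}{\sigma+d}$ and the reduction of the condition to $\sigma+d<2$, spells out what the paper states in the lemma without proof.
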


\begin{proof}
$T_\ell$ has leading coefficient $2^{\ell-1}$ and $\phi$ has leading coefficient $2/(\sigma^{2}-d^{2})$ in $x^{2}$, which gives $p_{2\ell}$ exactly. For the second inequality, $|T_\ell(\phi(0))|=\cosh(2\ell\arctanh u)\le \mathrm e^{2\ell\arctanh u}$.
\end{proof}

A one-pass synthesis of $R_\ell$ therefore needs $L=\Theta(\ell)$ chains,
$\Theta(\ell^{2})$ ancillas and weights exponential in $\ell$. The doubling identity
$T_{2\ell}=2T_\ell^{2}-1$ suggests composing instead: synthesise $y\mapsto y^{2}$ once,
apply it $\log_2\ell$ times, and rescale between stages so that every input stays in
the certified disc. Appendix~\ref{app:recursion} carries this out for $\sigma<r$,
and it works. The filter reaches any accuracy $\eta$ with
$L=O(\log\frac \ell\eta)$ chains per stage and $O(\log^{3}\ell)$ ancillas in all
(Proposition~\ref{prop:filter}). Its cost is the energy scale. Every stage
multiplies the norm by the total weight $\Lambda_L\ge\Delta$ of the synthesis, so
$\|\Sigma\|$ grows as $\ell^{\log_2(\Lambda_L/\sigma)}$ (Proposition~\ref{prop:energy}),
quasi-polynomially in the degree. The next subsection shows that the synthesis,
and with it this cost, can be dropped.

\subsection{One exact block, iterated}\label{sec:single}

The recursive doubling procedure does not need the square. It needs a map that folds an interval onto
itself, as $y\mapsto2y^{2}/\sigma-\sigma$ folds $[-\sigma,\sigma]$, and a point
outside the interval that escapes under iteration. The folding makes the stopband (the range $d\le|x|\le\sigma$)
oscillate and the escape makes the marked state stand out. The shortest chain
provides such a map exactly. By Proposition~\ref{prop:m1} its branch is
\begin{equation}
\lambda(y):=\lambda_0^{(1)}(y)=\tfrac12\Bigl(\sqrt{\Delta^{2}+4y^{2}}-\Delta\Bigr)
\label{eq:lam1}
\end{equation}
for every real $y$, not only on a disc, and for $m=1$ the ancilla register
of~\eqref{eq:even-gadget-Hamiltonian-matrix} is a single qubit,
\begin{equation}
E_2(A)=A\otimes X-\Delta\,I\otimes\ket1\!\bra1 .
\label{eq:E2}
\end{equation}
On an eigenvector of $A$ with eigenvalue $a$ it acts as the $2\times2$ matrix
$E_2(a)$, whose eigenvalues are $\lambda(a)$, with unit eigenvector
$\ket{\xi(a)}$, and $-\Delta-\lambda(a)$.

\begin{definition}[Single-block stages]\label{def:single}
For $\Delta>0$ and $0<d<\sigma$ put
\begin{equation}
\theta:=\frac{2\sigma}{\lambda(\sigma)} ,
\qquad
\theta_{\rm in}:=\frac{2\sigma}{\lambda(\sigma)-\lambda(d)} ,
\qquad
Q(y):=\sigma-\theta\,\lambda(y) ,
\label{eq:Qmap}
\end{equation}
$f_0(x):=\sigma-\theta_{\rm in}\bigl(\lambda(x)-\lambda(d)\bigr)$ and
$f_{i+1}:=Q\circ f_i$, and, with one fresh ancilla qubit per stage,
\[
\Sigma_0:=\bigl(\sigma+\theta_{\rm in}\lambda(d)\bigr)I-\theta_{\rm in}\,E_2(H_1) ,
\qquad
\Sigma_{i+1}:=\sigma I-\theta\,E_2(\Sigma_i) .
\]
\end{definition}

The constants are chosen so that the stopband is trapped and the marked point is
not. First, $Q$ is even and decreasing in $|y|$, with $Q(0)=\sigma$, and $\theta$ is
fixed by $Q(\pm\sigma)=-\sigma$. So $Q$ maps each half of $[-\sigma,\sigma]$ onto the
whole interval: it folds the interval onto itself, as the Chebyshev map does, and a
value that lies in $[-\sigma,\sigma]$ stays there at every later stage. Second,
$\theta_{\rm in}$ is fixed so that the entry map sends the passband edge $d$ to
$\sigma$ and the band edge $\sigma$ to $-\sigma$. The whole stopband
$d\le|x|\le\sigma$ therefore enters the interval, while the marked point $x=0$
lands just outside, at
\[
f_0(0)=\sigma+u_0 ,\qquad u_0:=\theta_{\rm in}\lambda(d)>0 .
\]
Third, outside the interval $Q$ no longer folds, it expands. A point at distance
$u$ beyond $\pm\sigma$ is sent to distance
$\theta\bigl(\lambda(\sigma+u)-\lambda(\sigma)\bigr)$ beyond $-\sigma$. This is at
least $\theta\lambda'(\sigma)\,u$ because $\lambda$ is convex, and less than
$\theta\,u$ because $\lambda'<1$. At each stage the marked point therefore moves
away from the interval by a factor between
\begin{equation}
\varrho:=\theta\lambda'(\sigma)=2\Bigl(1+\frac{\Delta}{\sqrt{\Delta^{2}+4\sigma^{2}}}\Bigr)\in(2,4)
\qquad\text{and}\qquad
\theta=\frac{\Delta+\sqrt{\Delta^{2}+4\sigma^{2}}}{\sigma}\;>\;\varrho .
\label{eq:rho}
\end{equation}
After $k$ stages the stopband is still within $\sigma$ while the marked value is at
least $\sigma+\varrho^{k}u_0$, which is what the following proposition makes
precise.

\begin{proposition}[A filter from one exact block]\label{prop:single}
Let $H_1$ be $\varkappa$-local and satisfy~\eqref{eq:H1spec}, let $0<d\le\delta$
with $d<\sigma$, let $\Delta>0$ and $k\ge0$.
\begin{enumerate}[label=(\roman*)]
\item $\Sigma_k$ is $(\varkappa+k+1)$-local on $k+1$ ancilla qubits, and for every
eigenpair $H_1\ket{v_x}=x\ket{v_x}$
\[
\Sigma_k\bigl(\ket{v_x}\otimes\ket{\Xi_k(x)}\bigr)=f_k(x)\,\ket{v_x}\otimes\ket{\Xi_k(x)} ,
\qquad
\ket{\Xi_k(x)}:=\bigotimes_{i=0}^{k}\bigl|\xi\bigl(f_{i-1}(x)\bigr)\bigr\rangle ,
\quad f_{-1}(x):=x .
\]
\item $|f_k(0)|\ge\sigma+\varrho^{k}u_0$ with $u_0>2d^{2}/\sigma$, and $f_k(0)<0$ for
$k\ge1$.
\item $\widehat R_k:=f_k/f_k(0)$, the encoded action of $\Sigma_k/f_k(0)$, satisfies
\begin{equation}
\bigl\|\widehat R_k(H_1)-\ket m\!\bra m\bigr\|\;\le\;\frac{\sigma}{|f_k(0)|}\;\le\;\frac{\sigma}{u_0}\,\varrho^{-k} ,
\label{eq:single_stop}
\end{equation}
with equality in the first inequality when $\sigma=\|H_1\|$. For every $\eta>0$ the
error is therefore at most $\eta$ once $k\ge\log_\varrho\frac{\sigma^{2}}{2d^{2}\eta}$.
\item There is a constant $C$, depending only on $\Delta/\sigma$, such that
$\|\Sigma_k\|\le C\,(\sigma/d)^{2\log_\varrho\theta}\,|f_k(0)|$ for every $k\ge0$ and
every $d\le\sigma/2$.
\end{enumerate}
\end{proposition}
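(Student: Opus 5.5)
I would prove the four items in order, each resting on the exact two-site branch of Proposition~\ref{prop:m1}, which is real-analytic on all of $\mathbb R$, so no radius condition enters anywhere.

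For (i) the argument is an induction on $k$. $\Sigma_0$ is an affine function of $E_2(H_1)=H_1\otimes X-\Delta I\otimes\ket1\!\bra1$. It is therefore $(\varkappa+1)$-local on one ancilla. On $\ket{v_x}\otimes\ket{\xi(x)}$ it acts as $\sigma+\theta_{\rm in}\lambda(d)-\theta_{\rm in}\lambda(x)=f_0(x)$.

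For the inductive step, suppose $\Sigma_i$ has eigenvector $\ket{v_x}\otimes\ket{\Xi_i(x)}$ with eigenvalue $f_i(x)$. Then $E_2(\Sigma_i)=\Sigma_i\otimes X-\Delta I\otimes\ket1\!\bra1$ acts on the span of that vector tensored with the fresh qubit as the $2\times2$ matrix $E_2(f_i(x))$. Its eigenvector $\ket{\xi(f_i(x))}$ has eigenvalue $\lambda(f_i(x))$, so $\Sigma_{i+1}$ has eigenvalue $\sigma-\theta\lambda(f_i(x))=f_{i+1}(x)$. Tensoring $\Sigma_i$ with $X$ adds one to the locality. This gives $\varkappa+k+1$ after $k$ steps, on $k+1$ ancillas.

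For (ii) I would use the escape argument sketched before the statement. By convexity of $\lambda$, for $u\ge0$ we have $\lambda(\sigma+u)-\lambda(\sigma)\ge\lambda'(\sigma)u$. Hence $|Q(\pm(\sigma+u))|=\sigma+\theta(\lambda(\sigma+u)-\lambda(\sigma))\ge\sigma+\varrho u$, with $Q$ even. Starting from $f_0(0)=\sigma+u_0$ and iterating gives $|f_k(0)|\ge\sigma+\varrho^k u_0$. For $k\ge1$ the value $Q(y)$ with $|y|>\sigma$ is below $-\sigma$, so $f_k(0)<0$.

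For the bound $u_0>2d^2/\sigma$, write $u_0=2\sigma\lambda(d)/(\lambda(\sigma)-\lambda(d))$. By Proposition~\ref{prop:m1}, $\lambda(y)=2y^2/(\Delta+\sqrt{\Delta^2+4y^2})$. The ratio $\lambda(d)/(\lambda(\sigma)-\lambda(d))$ is then at least $\lambda(d)/\lambda(\sigma)$. This in turn exceeds $d^2/\sigma^2$, because $\lambda(y)/y^2$ is decreasing in $|y|$ and $d<\sigma$.

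For (iii), on $\ket m$ ($x=0$) we have $\widehat R_k=1$. For every other eigenvalue $\delta\le|x|\le\sigma$, so $d\le|x|\le\sigma$. The entry map sends $|x|\in[d,\sigma]$ into $[-\sigma,\sigma]$, and $Q$ preserves $[-\sigma,\sigma]$. Therefore $|f_k(x)|\le\sigma$, and the spectral theorem gives the first inequality in~\eqref{eq:single_stop}. Equality holds when $\sigma=\|H_1\|$: an eigenvalue at $|x|=\sigma$ is sent to $-\sigma$, and $Q(-\sigma)=-\sigma$ is a fixed point, so $|f_k|=\sigma$ there. The second inequality and the threshold on $k$ follow from (ii).

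For (iv), which I expect to be the main obstacle, I would bound $\|\Sigma_k\|$ by $\max(\sigma,|f_k(0)|)$ times a constant. The difficulty is that $\|\Sigma_k\|$ is the norm on the whole space, not only on the encoded subspace. I would bound it recursively. Set $N_i:=\|\Sigma_i\|$. Since $\|E_2(A)\|\le\lambda(\|A\|)+\Delta$, because the spectrum of $E_2(a)$ is $\{\lambda(a),-\Delta-\lambda(a)\}$, we get $N_{i+1}\le\sigma+\theta(\Delta+\lambda(N_i))$.

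For large arguments $\lambda(N)\le N$, so $N_i$ grows at most like $\theta^i$ times a constant. The encoded value $|f_i(0)|$ grows at least like $\varrho^i u_0$. The gap between the growth rates $\theta$ and $\varrho$ must be absorbed by the number of stages needed before $|f_i(0)|$ reaches the linear regime. Since $\lambda'(y)\to1$ as $y\to\infty$, once $|f_i(0)|\gg\Delta$ the escape rate is essentially $\theta$ as well. So after a burn-in of $i_0\approx\log_\varrho(\sigma/u_0)=O(\log_\varrho(\sigma^2/d^2))$ stages, the ratio $N_i/|f_i(0)|$ stays bounded by a product $\prod_i(1+O(\Delta/|f_i(0)|))$. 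That product converges geometrically and gives a constant depending only on $\Delta/\sigma$. During the burn-in the ratio can grow by at most $(\theta/\varrho)^{i_0}$, and $\theta^{i_0}=(\sigma/d)^{2\log_\varrho\theta}$ up to constants. The care needed is in making the burn-in and the convergence of the product precise, with constants uniform in $k$ and in $d\le\sigma/2$.
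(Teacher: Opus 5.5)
Your parts (i)--(iii) follow the paper's argument essentially step for step, and they are correct. This covers the induction over stages, the convexity bound $u_{i+1}\ge\varrho u_i$ for $u_i:=|f_i(0)|-\sigma$, the estimate $u_0>2\sigma\lambda(d)/\lambda(\sigma)>2d^{2}/\sigma$ from the monotonicity of $\lambda(y)/y^{2}$, and the folding of $[-\sigma,\sigma]$ with $-\sigma$ fixed.

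Part (iv) uses the paper's mechanism: a slow phase at rate $\varrho$ until the marked point reaches the linear regime of $\lambda$, then a fast phase at rate close to $\theta$. However, the burn-in accounting you write down is false. The norm grows by more than $\theta$ at every stage, since $\|\Sigma_{i+1}\|\ge\sigma+\theta\bigl(\Delta+\lambda(\|\Sigma_i\|)\bigr)$ and $\lambda(N)\ge N-\Delta/2$. Meanwhile $|f_i(0)|=\sigma+u_i$ stays close to $\sigma$ as long as $u_i\ll\sigma$. So the ratio $\|\Sigma_i\|/|f_i(0)|$ grows by a factor close to $\theta$ per burn-in stage, not $\theta/\varrho$. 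The factor $(\theta/\varrho)^{i_0}$ is the correct penalty only for $\|\Sigma_i\|/u_i$. That ratio starts at $\|\Sigma_0\|/u_0$, of order $\sigma^{2}/d^{2}$, and this initial value supplies the missing $\varrho^{i_0}$. Taken literally, your sentence would prove $\|\Sigma_k\|/|f_k(0)|=O\bigl((\sigma/d)^{2\log_\varrho\theta-2}\bigr)$. That is false: at fixed $\Delta/\sigma$ and large $k$ the ratio is bounded below by a constant times $(\sigma/d)^{2\log_\varrho\theta}$, so the exponent in (iv) is sharp. The $\theta^{i_0}$ you write next is the right answer, but it does not follow from the clause before it.

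The paper avoids the issue by comparing $\|\Sigma_k\|\le\theta^{k}E$, with $E:=2\sigma+2\Delta+u_0+\theta_{\rm in}(\sigma+\Delta)$, against $u_k$ rather than $|f_k(0)|$. It also supplies the estimates your sketch leaves open:
\begin{itemize}
\item the linear-regime bound $u_{i+1}\ge\theta(u_i-\Delta/2)$, from $\lambda(y)\ge y-\Delta/2$ and $\lambda(\sigma)\le\sigma$;
\item a burn-in length $n$ fixed by $\varrho^{n}u_0\ge\Delta$. Measuring it against $\sigma$, as you do, is off by $O(\log_\varrho(\Delta/\sigma))$ stages; this only changes $C$, but it is $\Delta$ that marks where $\lambda$ becomes linear;
\item the observation $u_{n+m}\ge2^{m}\Delta$, which keeps the fast-phase corrections $\prod_{m\ge0}(1-2^{-m-1})$ above $1/4$;
\item a bound on $E$ by $\sigma$ times a function of $\Delta/\sigma$, uniform in $d$. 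This is where $d\le\sigma/2$ enters, through $\theta_{\rm in}\le2\sigma/(\lambda(\sigma)-\lambda(\sigma/2))$.
\end{itemize}
Together these give $|f_k(0)|\ge\tfrac14(\varrho/\theta)^{n}\theta^{k}u_0$ for every $k$. Hence $\|\Sigma_k\|/|f_k(0)|\le(4\theta/\varrho)\,E\big/\bigl(u_0\min\{1,u_0/\Delta\}^{\log_\varrho\theta-1}\bigr)$, which $u_0>2d^{2}/\sigma$ turns into $C(\sigma/d)^{2\log_\varrho\theta}$. Your exact identity $\|E_2(A)\|=\Delta+\lambda(\|A\|)$ is sharper than the paper's $\|A\|+\Delta$, but it does not change the exponent.
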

\deferproof{app:single}

Proposition~\ref{prop:single} suggests a practical rule: use the transformation
that the gadget performs natively, instead of synthesising a known one. Compared
with the synthesised square of Appendix~\ref{app:recursion} to construct the Chebyshev filter, 
the single block
improves every resource (Table~\ref{tab:tradeoff}). Each stage costs one ancilla
qubit and one unit of locality, where a stage of the recursion is a synthesis on
$O\bigl((k+\log\frac1\eta)^{2}\bigr)$ qubits (Proposition~\ref{prop:filter}(iii)).
There is no synthesis error to control and no radius condition on the input,
since~\eqref{eq:lam1} holds for every real input. Rescaling $H_1$, $d$ and
$\Delta$ by a common factor rescales $\Sigma_k$ and leaves $\widehat R_k$
unchanged, so the construction depends only on $d/\sigma$ and $\Delta/\sigma$.
The scale $\sigma$ is then a free unit, required only to bound the spectrum of
$H_1$ as in~\eqref{eq:H1spec}, and since every cost grows with $\sigma/d$ the best
choice is $\sigma=\|H_1\|$. Finally the normalised norm does not grow with the
number of stages. By part~(iv) it depends on the passband alone, as
$O\bigl((\sigma/d)^{2\log_\varrho\theta}\bigr)$, where the recursion for Chebyshev synthesis 
pays a factor $\Lambda_L/\sigma$ per stage.

The price is the stopband amplitude, which decays geometrically in $k$ rather than as
$\mathrm e^{-2^{k+1}d/\sigma}$, because $\lambda$ is quadratic only for inputs
below $\Delta$ and linear above. By part~(iii), accuracy $\eta$ needs
\[
k=\frac{2\log_2(\sigma/d)+\log_2(1/\eta)}{\log_2\varrho}+O(1)
\]
stages, against $k=\log_2\bigl(\frac\sigma d\log\frac1\eta\bigr)+O(1)$ for the
Chebyshev filter by~\eqref{eq:Lneed}. The detuning tunes the block between two
classical maps. As $\Delta/\sigma\to0$, $Q$ tends to the tent map
$y\mapsto\sigma-2|y|$ and $\theta,\varrho\to2$. As $\Delta/\sigma\to\infty$ it tends
to the Chebyshev map and $\varrho\to4$, so the stage count in $\sigma/d$, the
parameter that makes filtering hard, becomes that of the Chebyshev filter: the
coefficient $2/\log_2\varrho$ is already $1.007$ at $\Delta=10\sigma$.
Figure~\ref{fig:single}(a) shows these regimes at the same depth. A larger
detuning costs energy, however: the exponent $2\log_\varrho\theta$ of part~(iv)
grows from $4.37$ at $\Delta=10\sigma$ to $7.64$ at $\Delta=100\sigma$. For example,
at $d=\sigma/10$, $\Delta=4\sigma$ and $\eta=10^{-2}$, six stages suffice, that is
$7$ ancilla qubits, at a normalised norm of $3.4\cdot10^{4}$, where the synthesised
square at similar working points ($\Delta=10, \, \sigma=0.5$) needs a normalised norm above
$10^{11}$.


A last option keeps the locality constant, by replacing the iteration with one
long chain of Section~\ref{sec:gadget}. Feed the entry stage alone into that
chain: take $E_{2m}(c\,\Sigma_0)$, with a detuning $\Delta'\ge\Delta_{\min}(m,r)$ as
in Corollary~\ref{cor:Dmin} and $c:=r'/(\sigma+u_0)$ for some $r'<r$, so that its
encoded inputs $cf_0(x)$ lie in $[-r',r']$, the marked one at $r'$. It is
$(\varkappa+3)$-local whatever the accuracy, and it uses $2m+1$ ancilla qubits.
Since $\lambda_0^{(m)}(y)$ starts as $y^{2m}/\Delta'$, the bracket of
Proposition~\ref{prop:bracket} puts the stopband below the marked value by a
factor $(1+u_0/\sigma)^{-2m}$, up to a factor polynomial in $m$, so accuracy
$\eta$ needs $m$ of order $\frac{\sigma}{u_0}\ln\frac1\eta$. Everything then
depends on the separation $u_0$ that the entry stage produces. It is of order $d$
when the entry detuning is at most $d$, where $\lambda$ is close to $|x|$, and of
order $d^{2}/\sigma$ once that detuning exceeds $\sigma$, where $\lambda$ is
quadratic and flattens the passband. The number of ancillas is thus linear in
$\sigma/d$ in the first case and quadratic in the second, where the iterated
block needs $O\bigl(\log\frac{\sigma}{d\eta}\bigr)$. This is where a power is a
weaker filter than an iteration: at $2m=2^{k}$ the iterated block has moved the
marked point by at least $\varrho^{k}u_0=(2m)^{\log_2\varrho}u_0$, close to
$(2m)^{2}u_0$, where the chain gains only about $2m\,u_0$. The marked value is at
most $r'^{2m}/\Delta'$, so the normalised norm grows as $r'^{-2m}$ unless $1-r'$
is of order $1/m$, and Corollary~\ref{cor:Dmin} then asks for a detuning
$\Delta'$ of order $m$. Figure~\ref{fig:single}(b) shows this option at
$2m=2^{k}=16$ for two entry detunings.

\begin{figure}[ht]
\centering
\includegraphics[width=\textwidth]{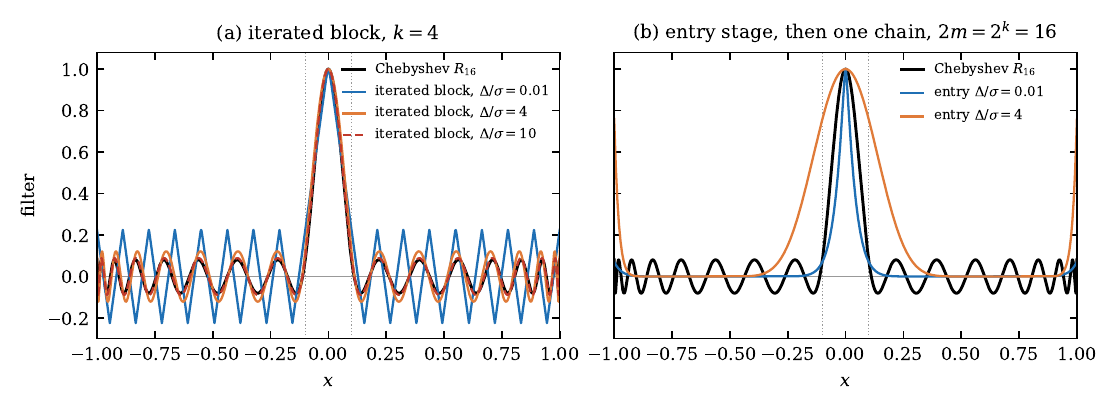}
\caption{Filters at $\sigma=1$ and $d=0.1$ (dotted verticals), each normalised to
$1$ at the marked energy $x=0$. In both panels the black curve is the Chebyshev
filter $R_{16}$ of Definition~\ref{def:filter}, which the synthesised square of
Appendix~\ref{app:recursion} approximates. (a) The iterated block
$\widehat R_4$ of Proposition~\ref{prop:single}, at depth $k=4$, for the three
detunings $\Delta/\sigma$ given in the legend. The smallest one is close to the
tent map, and the block approaches $R_{16}$ as $\Delta/\sigma$ grows. (b) The
last option of this section, the entry stage $\Sigma_0$ followed by one chain
$E_{2m}$ with $2m=2^{k}=16$ sites. The legend gives the detuning $\Delta/\sigma$
of the entry stage. The chain has detuning $\Delta'=10$, and its input is rescaled
so that the marked point sits at $r'=0.8$. The option filters with the small
entry detuning, where the entry map is close to $|x|$, and not with the large one,
where it is quadratic.}
\label{fig:single}
\end{figure}

\subsection{Adiabatic search}\label{sec:search}

We now put the filter in place of the projector in analog search. Take for $H_1$
the Hamming distance to $m$ multiplied by $\sigma/n$, which is $1$-local with
$\delta=\sigma/n$, set $d=\delta$ and $H_0=-\frac1n\sum_iX_i$, and consider
\begin{equation}
H(s)=(1-s)\,H_0\otimes I+s\,\frac{\Sigma_k}{|f_k(0)|} ,\qquad s\in[0,1] ,
\label{eq:Hsearch}
\end{equation}
on $n$ system qubits and $k+1$ ancilla qubits. Here $\sigma/d=n$, so at constant
accuracy $\eta$ Proposition~\ref{prop:single}(iii) gives $k=O(\log n)$ stages: the
Hamiltonian uses $O(\log n)$ ancilla qubits and is $O(\log n)$-local. Since
$f_k(0)<0$ for $k\ge1$, its second term acts on the encoded subspace as
$-\widehat R_k(H_1)$, the filter in place of $-\ket m\!\bra m$. Conjugating by Hadamard gates exchanges the two roles: $H_1$ becomes a transverse field whose unique zero-energy state is the uniform superposition, and $H_0$ a diagonal cost function. Read this way, \eqref{eq:Hsearch} is the unstructured adiabatic optimisation of~\cite{Braida2025unstructured} with its rank-one driver replaced by the gadget, for the cost function given by the Hamming weight. For another cost function the filter is unchanged, since it depends on the driver alone, but the spectrum is no longer the one simulated here.

Proposition~\ref{prop:single} does not by itself say that the search works. The
driver $H_0\otimes I$ does not leave the encoded subspace invariant, because the
ancilla factor $\ket{\Xi_k(x)}$ depends on the eigenvalue $x$ of $H_1$ while $H_0$
mixes eigenvalues. Compressed to the encoded subspace, \eqref{eq:Hsearch} is the
operator $(1-s)\,H_0\circ G-s\,\widehat R_k(H_1)$ on the system register, written
in the eigenbasis of $H_1$, where $\circ$ is the entrywise product and
$G_{yx}=\langle\Xi_k(y)|\Xi_k(x)\rangle$ damps each entry of the driver. The rest
of the driver couples the subspace to its complement. We therefore
diagonalise~\eqref{eq:Hsearch} numerically as it stands, with every ancilla, in
the sector invariant under permutations of the system qubits.

Figure~\ref{fig:search} shows the result. The compressed operator reproduces the
spectrum of analog search (a). In the full Hamiltonian, the gap of the encoded 
branch at the anticrossing follows that of analog search, which closes as
$2^{-n/2}$, to within $15\%$ for $n\le20$ (b). Near the anticrossing the
non-encoded levels stay at a distance of order $s\theta\Delta/|f_k(0)|$ from that
branch, which must exceed the driver, and this is why the detuning is taken
large.

The price is the energy scale. A device with a fixed energy budget runs
$\Sigma_k/\|\Sigma_k\|$ in place of $\Sigma_k/|f_k(0)|$, which stretches every
evolution time by $\|\Sigma_k\|/|f_k(0)|$. By Proposition~\ref{prop:single}(iv)
that factor is $O\bigl(n^{2\log_\varrho\theta}\bigr)$, that is $O(n^{4.37})$ at
$\Delta=10\sigma$, and this is proved. The spectrum and the gap are observed, on
one family and for $n\le20$, and we do not run the search itself. If its run time
is set by the gap at the anticrossing, as for analog search, the energy-time
exceeds that of analog search, which is $\Theta(2^{n/2})$, by at most this
polynomial factor. A proof that the interpolation works for every $n$ remains
open, and Section~\ref{sec:discussion} says what it would need.

\begin{figure}[ht]
\centering
\includegraphics[width=1\textwidth]{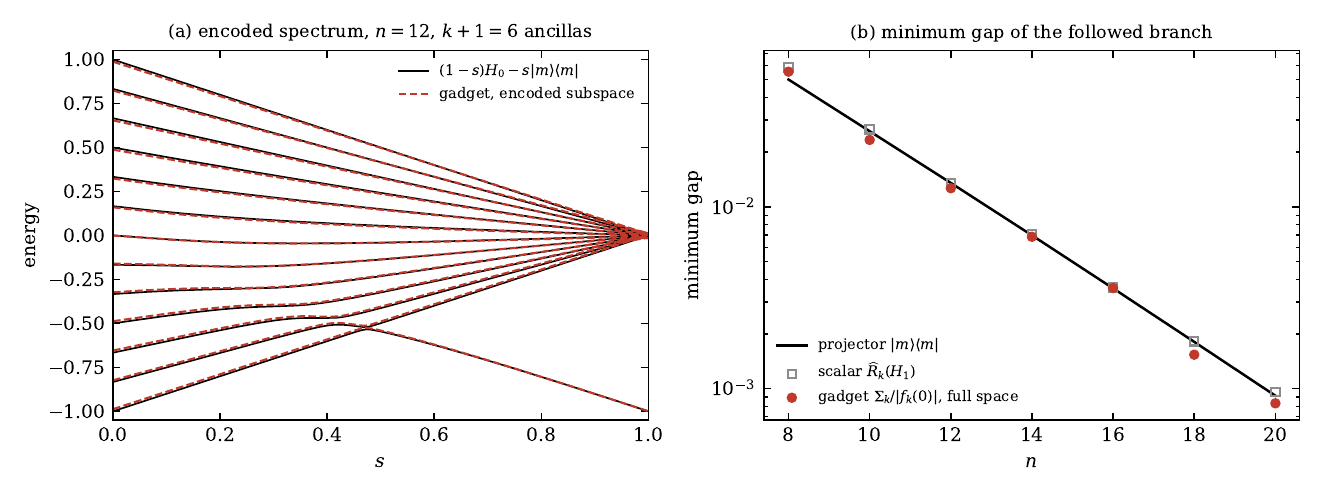}
\caption{Analog search with the gadget Hamiltonian,
$H(s)=(1-s)H_0\otimes I+s\,\Sigma_k/|f_k(0)|$ as in~\eqref{eq:Hsearch}, simulated
with every ancilla qubit, at $\sigma=0.5$, $\Delta=10\sigma$, $d=\delta=\sigma/n$
and stopband at most $5\cdot10^{-2}$. (a) At $n=12$ and $k=5$, the spectrum
of~\eqref{eq:Hsearch} compressed to the encoded subspace, an operator on the
system register (dashed), against that of $(1-s)H_0-s\ket m\!\bra m$ (solid).
(b) Gap at the anticrossing against $n$, for the projector, for the scalar
filter $\widehat R_k(H_1)$ substituted on the system register, and for the
encoded eigenvalue of the gadget Hamiltonian in the full space, whose gap
also closes as $s\to0$, where $H(0)=H_0\otimes I$ is degenerate.}
\label{fig:search}
\end{figure}

\section{Discussion}\label{sec:discussion}

Analog devices implement local, few-body
Hamiltonians~\cite{bloch2012simulations,browaeys2020rydberg,daley2022practical},
whereas the operators that algorithms ask for are often far from local. A
polynomial of degree $2\ell$ of a $\varkappa$-local Hamiltonian couples up to
$2\ell\varkappa$ qubits at once, and the projector of analog search couples all
$n$ of them. The result of this paper is that such operators can be carried by a
local Hamiltonian, at a modest and explicit price. One attaches short chains of
ancilla qubits to the input $H$. Each chain has an isolated eigenvalue that is an
exactly known analytic function of $H$, with a leading order set by its length
(Sections~\ref{sec:gadget} and~\ref{sec:branch}), and a weighted sum of chains
reproduces any prescribed polynomial of $H$ on an invariant subspace
(Section~\ref{sec:synthesis}, Appendix~\ref{app:odd}). The resulting Hamiltonian
is time-independent, so the transformation is switched on once rather than
sequenced in time, and its locality is only two more than that of $H$. The price
is $O(\ell^{2}+\log^{2}(1/\varepsilon))$ ancilla qubits and a larger energy
scale, where the circuit model pays the degree in sequential oracle calls
(Table~\ref{tab:tradeoff}).

The filter of Section~\ref{sec:filter} shows how far this goes. A rank-one
projector on $2^{n}$ dimensions, the least local operator there is, is
approximated to constant accuracy by a Hamiltonian on $O(\log n)$ ancilla qubits,
with $O(\log n)$-body terms and an energy scale polynomial in $n$, and a variant
keeps the locality constant at the cost of more ancillas. The filter also shows
which construction to prefer. Synthesising the Chebyshev filter, a known
polynomial, costs an energy scale quasi-polynomial in its degree. Iterating
instead the exact eigenvalue transformation of the two-site chain needs no
synthesis and no radius condition, costs one ancilla qubit per stage, and has an
energy scale that does not grow with the number of stages. Only the dependence on the
accuracy is exponentially worse: the number of stages grows as $\log(1/\eta)$
instead of $\log\log(1/\eta)$. Using the
transformation that the gadget performs natively is more efficient than imitating
a known one.

One point remains to be established. The encoded subspace is invariant under the
gadget Hamiltonian, so an evolution under that Hamiltonian alone never leaves it.
An adiabatic algorithm adds a driver that does not commute with the input, and it
is then not immediate that the evolution behaves as it would with the ideal
operator. On small instances the spectrum supports this: up to $n=20$ the full
Hamiltonian, ancillas included, reproduces the spectrum of analog search and the
scaling of its gap at the anticrossing (Section~\ref{sec:search}). Validating
this observation formally is left for further work, for which perturbative
methods that describe the effective Hamiltonian in a given
subspace~\cite{feshbach1958,schrieffer1966,bravyi2011schrieffer} are the natural
tool.

Two questions seem to us the most promising. The first is which other Hamiltonian
transformations are useful for applications and can be built in this way.
Our filter example also
suggests looking for transformations using the exact eigenvalue map of the
gadgets, combined by sums and compositions, rather than through the synthesis of a
prescribed polynomial. The second is whether the same idea, an ancilla gadget
whose spectrum is known exactly, reaches other algorithmic building blocks. The
main one is the block encoding: a design in which a prescribed matrix $A$,
rather than a function of a Hermitian input, appears as a block of a local
Hamiltonian. It would give the continuous-time paradigm the primitive on which
the circuit model has built almost
everything~\cite{gilyen2019qsvt,martyn2021grand,lloyd2021hsvt}.

\appendix

\section{The Chebyshev recursion with a synthesised square}\label{app:recursion}

This appendix supports the two claims made at the end of
Section~\ref{sec:filter:target}: composing a synthesised square along the
doubling identity gives the Chebyshev filter to any accuracy, and it does so at a
quasi-polynomial energy scale. Throughout, $\sigma<r$ and a margin
$\sigma<\bar\sigma\le2\sigma$ with $\bar\sigma<r$ is fixed.

\subsection{Construction}

Let $\ell=2^{k}$, $c_i:=T_{2^{i}}(\phi(0))$ and $g_i:=\sigma R_{2^{i}}$ for
$i=0,\dots,k$, with $\phi$ and $R_\ell$ as in Definition~\ref{def:filter}. The
values $c_i$ carry no superscript, unlike the branch coefficients $c^{(m)}_j$.

\begin{lemma}[The stage map]\label{lem:stage}
The $c_i$ satisfy $c_{i+1}=2c_i^{2}-1$ and $c_i>1$ for $i\ge1$, and
\begin{equation}
g_{i+1}=\theta_ig_i^{2}-\frac{\sigma}{c_{i+1}} ,
\qquad
\theta_i:=\frac{2c_i^{2}}{\sigma\,c_{i+1}}=\frac1\sigma\Bigl(1+\frac1{c_{i+1}}\Bigr) .
\label{eq:stagemap}
\end{equation}
Moreover $\|g_i\|_{\infty}\le\sigma$ on $|x|\le\sigma$, and
\begin{equation}
\frac1\sigma<\theta_i<\frac2\sigma .
\label{eq:stagebounds}
\end{equation}
\end{lemma}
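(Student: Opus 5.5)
The plan is to work entirely with the scalar data: $c_i=T_{2^i}(\phi(0))$ and $g_i(x)=\sigma T_{2^i}(\phi(x))/c_i$. Each claim follows from the doubling identity $T_{2n}=2T_n^{2}-1$ together with Lemma~\ref{lem:filter}.

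\emph{The recurrence for $c_i$.} Applying the doubling identity at $n=2^i$ and $y=\phi(0)$ gives
\[
c_{i+1}=T_{2^{i+1}}(\phi(0))=2T_{2^i}(\phi(0))^{2}-1=2c_i^{2}-1 .
\]
For positivity, note first that $\phi(0)<-1$, so $c_0=\phi(0)$ satisfies $|c_0|>1$. Then
\[
c_1=2c_0^{2}-1>1 .
\]
Induction through $y\mapsto2y^{2}-1$, which maps $(1,\infty)$ into itself, gives $c_i>1$ for every $i\ge1$. The sign of $c_0$ is irrelevant here, because only $c_0^{2}$ enters.

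\emph{The stage map.} Write $t_i:=T_{2^i}(\phi(x))$, so that $g_i=\sigma t_i/c_i$. The doubling identity gives
\[
g_{i+1}=\frac{\sigma\,(2t_i^{2}-1)}{c_{i+1}} .
\]
Substituting $t_i=c_ig_i/\sigma$ turns this into
\[
g_{i+1}=\frac{2c_i^{2}}{\sigma c_{i+1}}\,g_i^{2}-\frac{\sigma}{c_{i+1}},
\]
which is \eqref{eq:stagemap} with the first form of $\theta_i$. The second form comes from $2c_i^{2}=c_{i+1}+1$, which yields
\[
\theta_i=\frac{1}{\sigma}\Bigl(1+\frac{1}{c_{i+1}}\Bigr).
\]

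\emph{The bounds.} The estimate $\|g_i\|_\infty\le\sigma$ on $|x|\le\sigma$ is the statement $|R_{2^i}|\le1$ of Lemma~\ref{lem:filter}, applied with $\ell=2^i$. For \eqref{eq:stagebounds}, $i\ge0$ means $c_{i+1}>1$, so $0<1/c_{i+1}<1$. This gives both strict inequalities $1/\sigma<\theta_i<2/\sigma$.

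The only step needing care is the sign bookkeeping at $i=0$: there $c_0<0$, so the claim $c_i>1$ must start at $i=1$. The identity $2c_i^{2}=c_{i+1}+1$ holds regardless of sign, and it is what makes the second form of $\theta_i$ valid for every $i$.
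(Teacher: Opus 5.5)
Your proposal is correct and follows the paper's proof essentially step by step. The doubling identity gives both the recurrence for $c_i$ and the stage map, $|c_0|>1$ starts the induction $c_i>1$, the identity $2c_i^{2}=c_{i+1}+1$ gives the second form of $\theta_i$ and hence the bounds~\eqref{eq:stagebounds}, and $\|g_i\|_\infty\le\sigma$ is Lemma~\ref{lem:filter}.
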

\begin{proof}
$c_{i+1}=2T_{2^{i}}(\phi(0))^{2}-1$ is the Chebyshev doubling identity. $|c_0|>1$ gives $c_1>1$ and inductively $c_{i+1}>1$. Dividing $T_{2^{i+1}}(\phi)=2T_{2^{i}}(\phi)^{2}-1$ by $c_{i+1}$ and multiplying by $\sigma$ gives~\eqref{eq:stagemap}, and $2c_i^{2}=c_{i+1}+1$ gives the second form of $\theta_i$, from which~\eqref{eq:stagebounds} follows since $c_{i+1}>1$. The bound $\|g_i\|_\infty\le\sigma$ is Lemma~\ref{lem:filter}.
\end{proof}

Stage $0$ has the same form with input $g_{-1}:=x$, since
$g_0=\sigma\phi/\phi(0)=\theta_{-1}x^{2}+\sigma$ with
$\theta_{-1}:=-2\sigma/(\sigma^{2}+d^{2})$ and $|\theta_{-1}|<2/\sigma$. So every
stage is a prefactor times a square plus a constant, every input has norm at most
$\sigma<r$, and, the triangular system being linear in the target, one synthesis
serves every stage.

\begin{definition}[The synthesised square, and the recursion it drives]\label{def:square}
Let $\Psi=\sum_{j\le L}\alpha_j\lambda_0^{(j)}$ be the synthesis of $y\mapsto y^{2}$
with $L$ chains, the $\alpha_j$ given by~\eqref{eq:alpha_rec} for $p_2=1$. Put
$b_{-1}:=\sigma$ and $b_i:=-\sigma/c_{i+1}$ for $i\ge0$. The realised stages are
$\widetilde g_{-1}:=x$ and $\widetilde g_{i+1}:=\theta_i\Psi(\widetilde g_i)+b_i$,
and the stage Hamiltonians are $\Sigma_{-1}:=H_1$ and
$\Sigma_{i+1}:=\theta_i\sum_{j\le L}\alpha_jE_{2j}(\Sigma_i)+b_i$, each stage on
$L$ fresh chains, for $-1\le i<k$.
\end{definition}

The weights depend on $L$ and $\Delta$ only, so raising $\ell$ adds stages rather
than amplitude. The construction is exact only if $\Psi(y)=y^{2}$, and the
proposition below says how close is close enough.

\begin{proposition}[The recursion works]\label{prop:filter}
Let $0<\eta<1$, put $\varepsilon_L:=\sup_{|y|\le\bar\sigma}|\Psi(y)-y^{2}|$, and
assume
\begin{equation}
2\ell\frac d\sigma\;\ge\;\ln\frac8\eta ,
\qquad
\varepsilon_L\;\le\;\frac{\sigma}{6^{k+1}}\,
\min\Bigl\{\bar\sigma-\sigma,\ \frac{\eta\sigma}4\Bigr\} .
\label{eq:Lneed}
\end{equation}
\begin{enumerate}[label=(\roman*)]
\item Every realised stage stays in the certified disc,
$\|\widetilde g_i\|_\infty\le\bar\sigma$ on $|x|\le\sigma$, and for every eigenpair
$H_1\ket{v_x}=x\ket{v_x}$
\[
\Sigma_k\bigl(\ket{v_x}\otimes\ket{\Xi_k(x)}\bigr)=\widetilde g_k(x)\,\ket{v_x}\otimes\ket{\Xi_k(x)} ,
\qquad
\ket{\Xi_k(x)}:=\bigotimes_{i=0}^{k}\bigotimes_{j\le L}\bigl|\xi_j\bigl(\widetilde g_{i-1}(x)\bigr)\bigr\rangle ,
\]
with $\ket{\xi_j(y)}$ the branch eigenvector of $E_{2j}(y)$.
\item $v_k:=\widetilde g_k(0)\ge\sigma/2$, and $\widetilde R_\ell:=\widetilde g_k/v_k$,
the encoded action of $\Sigma_k/v_k$, satisfies
$\|\widetilde R_\ell(H_1)-\ket m\!\bra m\|\le\eta$.
\item The second condition of~\eqref{eq:Lneed} holds with $L=O(\log\frac \ell\eta)$,
so that $\Sigma_k$ uses $(k+1)L(L+1)=O\bigl(\log \ell\log^{2}\frac \ell\eta\bigr)$
ancilla qubits and is $\bigl(\varkappa+2(k+1)\bigr)$-local.
\end{enumerate}
\end{proposition}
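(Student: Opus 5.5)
Part (i) comes first, by induction on the stage index, with the error propagation carried along. Write $e_i:=\sup_{|x|\le\sigma}|\widetilde g_i(x)-g_i(x)|$, with $e_{-1}=0$. The inductive claim is
\[
e_{i}\le 6^{i+1}\varepsilon_L/\sigma\cdot\sigma\Big/ \sigma \quad\text{(in the form } e_i\le (6^{i+1}-1)\varepsilon_L/\sigma\text{)}.
\]
Once this holds, $\|\widetilde g_i\|_\infty\le\sigma+e_i\le\bar\sigma$ follows from the second condition of~\eqref{eq:Lneed} together with $\|g_i\|_\infty\le\sigma$ (Lemma~\ref{lem:stage}). The inductive step compares
\[
\widetilde g_{i+1}-g_{i+1}=\theta_i\bigl(\Psi(\widetilde g_i)-\widetilde g_i^{2}\bigr)+\theta_i\bigl(\widetilde g_i^{2}-g_i^{2}\bigr).
\]
The first term is at most $\tfrac2\sigma\varepsilon_L$, because $|\widetilde g_i|\le\bar\sigma$. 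The second is at most $\tfrac2\sigma(\bar\sigma+\sigma)e_i\le 6e_i$, using $\bar\sigma\le2\sigma$ and $\theta_i<2/\sigma$ from~\eqref{eq:stagebounds}; stage $0$ with $|\theta_{-1}|<2/\sigma$ works the same way. This gives $e_{i+1}\le 6e_i+2\varepsilon_L/\sigma$, hence $e_k\lesssim 6^{k+1}\varepsilon_L/\sigma$. The radius condition then holds at every stage, because the inputs of each synthesis, the $\widetilde g_i$, lie in $[-\bar\sigma,\bar\sigma]\subset\{|y|<r\}$. The eigenvector statement comes next, again by induction. Theorem~\ref{thm:realisation} applied to the Hermitian input $\Sigma_i$, whose encoded eigenpairs are $(\widetilde g_i(x),\ket{v_x}\otimes\ket{\Xi_i(x)})$, shows that each $E_{2j}(\Sigma_i)$ acts on $\ket{v_x}\otimes\ket{\Xi_i(x)}\otimes\ket{\xi_j(\widetilde g_i(x))}$ as the scalar $\lambda_0^{(j)}(\widetilde g_i(x))$. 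Here only the action on these joint eigenvectors is needed, not the bound $\|\Sigma_i\|<r$: $\Sigma_i$ acts as the scalar $\widetilde g_i(x)$ on the vector, so the $x$-bonds carry that scalar. Summing over $j$ with weights $\alpha_j$, multiplying by $\theta_i$ and adding $b_i$ gives $\widetilde g_{i+1}(x)$.

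Part (ii) uses $g_k(0)=\sigma$ and $e_k\le\eta\sigma/4\le\sigma/2$, which give $v_k\ge\sigma/2$. In the stopband $d\le|x|\le\sigma$, the first condition of~\eqref{eq:Lneed} with Lemma~\ref{lem:filter} gives $|g_k|\le 2\sigma e^{-2\ell d/\sigma}\le\sigma\eta/4$. So $|\widetilde g_k|\le\sigma\eta/2$, and dividing by $v_k\ge\sigma/2$ gives at most $\eta$. The spectral argument closing Lemma~\ref{lem:filter} then gives the operator bound. The check that the constants $8$ and $4$ close exactly is routine arithmetic.

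Part (iii) starts from Corollary~\ref{cor:L_needed} applied to $P(y)=y^{2}$, with $M_P=1$, $q=1$ and $\|y\|\le\bar\sigma<r$. It gives $\varepsilon_L\le C\,\Phi_L\gamma^{L+1}$ with $\gamma=\bar\sigma^{2}/r^{2}$. The required $\varepsilon_L$ is $\sigma 6^{-(k+1)}\min\{\bar\sigma-\sigma,\eta\sigma/4\}$, whose logarithm is $O(k+\log\frac1\eta)=O(\log\frac\ell\eta)$, since $k=\log_2\ell$. Because $\ln\Phi_L=o(L)$, the choice $L=O(\log\frac\ell\eta)$ suffices, with the constant depending on $r,\sigma,\bar\sigma$. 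The ancilla count is $k+1$ stages times $\sum_{j\le L}2j=L(L+1)$. Locality grows by two per stage, since each stage replaces the input by $E_{2j}$ of the previous Hamiltonian, as in Lemma~\ref{lem:local}; this gives $\varkappa+2(k+1)$.

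The main obstacle is keeping every realised stage inside the certified disc uniformly. That is why the induction must track the sup-norm error carefully, and why the geometric factor $6^{k+1}$ in~\eqref{eq:Lneed} has to be matched by the error recursion $e_{i+1}\le 6e_i+2\varepsilon_L/\sigma$.
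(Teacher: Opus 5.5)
Your proposal is correct and follows the paper's proof essentially step by step. It uses the same deviation recursion $e_{i+1}\le 6e_i+2\varepsilon_L/\sigma$, closed by the factor $6^{k+1}$ in \eqref{eq:Lneed}, and the same stage-by-stage eigenvector induction, which uses only how $E_{2j}(\Sigma_i)$ acts on encoded vectors and not $\|\Sigma_i\|<r$. It also determines $L$ from the same tail-estimate count.

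Two small corrections. The first displayed inductive claim is garbled: $e_i\le 6^{i+1}\varepsilon_L/\sigma$ does not reproduce itself, whereas the parenthetical $e_i\le(6^{i+1}-1)\varepsilon_L/\sigma$ does. The scalar bound on $\varepsilon_L$ is Proposition~\ref{prop:tail_estimate} with $x_0=\bar\sigma$, not the operator Corollary~\ref{cor:L_needed}.
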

\begin{proof}
Write $\mathcal E_L:=\Psi-y^{2}$, so that the exact and realised stages are
$g_{i+1}=\theta_ig_i^{2}+b_i$ and $\widetilde g_{i+1}=\theta_i\Psi(\widetilde g_i)+b_i$
for $i\ge-1$, with $g_{-1}=\widetilde g_{-1}=x$. By Lemma~\ref{lem:stage} and the
bound on $\theta_{-1}$, $|\theta_i|<2/\sigma$ for every $i\ge-1$, and
$\|g_i\|_\infty\le\sigma$ on $|x|\le\sigma$. The deviation
$D_i:=\widetilde g_i-g_i$ obeys, pointwise in $x$,
\[
D_{i+1}=\theta_i\bigl(2g_iD_i+D_i^{2}+\mathcal E_L(\widetilde g_i)\bigr) ,
\qquad D_{-1}=0 .
\]

(i) Put $D_\star:=6^{k+1}\varepsilon_L/\sigma$, so that the second condition
of~\eqref{eq:Lneed} reads $D_\star\le\min\{\bar\sigma-\sigma,\ \eta\sigma/4\}$, and
$D_\star\le\sigma$ because $\bar\sigma\le2\sigma$. Suppose that
$\|D_j\|_\infty\le D_\star$ for every $j\le i$, where $i<k$. Then
$\|\widetilde g_j\|_\infty\le\sigma+D_\star\le\bar\sigma$, hence
$|\mathcal E_L(\widetilde g_j)|\le\varepsilon_L$, and $D_j^{2}\le\sigma|D_j|$, so
\[
\|D_{j+1}\|_\infty\le\frac2\sigma\bigl(3\sigma\|D_j\|_\infty+\varepsilon_L\bigr)
=6\,\|D_j\|_\infty+\frac{2\varepsilon_L}{\sigma}
\qquad\text{for } j\le i .
\]
Unrolling from $D_{-1}=0$, and using $i+1\le k$,
\[
\|D_{i+1}\|_\infty\le\frac{2\varepsilon_L}{\sigma}\sum_{s=0}^{i+1}6^{s}
\le\frac{2\varepsilon_L}{5\sigma}\,6^{k+1}\le D_\star .
\]
By induction $\|D_i\|_\infty\le D_\star$ and $\|\widetilde g_i\|_\infty\le\bar\sigma<r$
for every $i\le k$. For the encoded states, argue by induction on the stage as in
the proof of Proposition~\ref{prop:single}(i). If
$\Sigma_i\ket{\psi_i}=\widetilde g_i(x)\ket{\psi_i}$ with
$\ket{\psi_i}=\ket{v_x}\otimes\ket{\Xi_i(x)}$, then each $E_{2j}(\Sigma_i)$ acts on
$\ket{\psi_i}\otimes\mathcal C_j$ as the matrix $E_{2j}(y)$, $y=\widetilde g_i(x)$,
whose branch eigenvector $\ket{\xi_j(y)}$ exists and is unique up to a phase
because $|y|\le\bar\sigma<r$, by the simplicity argument in the proof of
Theorem~\ref{thm:realisation}. The $L$ chains of the stage are disjoint, so
$\Sigma_{i+1}$ has eigenvalue $\theta_i\Psi(y)+b_i=\widetilde g_{i+1}(x)$ on
$\ket{\psi_i}\otimes\bigotimes_j\ket{\xi_j(y)}$, and the base case $i=-1$ is the
same statement with $y=x$.

(ii) $g_k(0)=\sigma$, so $v_k\ge\sigma-D_\star\ge\sigma/2$, since
$D_\star\le\eta\sigma/4\le\sigma/4$. On the encoded subspace $\Sigma_k/v_k$ acts as
$\widetilde R_\ell(H_1)$, and as in the proof of Proposition~\ref{prop:single}(iii)
the error is the largest $|\widetilde R_\ell(x)|$ over the nonzero eigenvalues $x$
of $H_1$. For $\delta\le|x|\le\sigma$, hence $d\le|x|\le\sigma$, one has
$|\phi|\le1$ and so $|g_k|\le\sigma/c_k$. Lemma~\ref{lem:filter} and the first condition
of~\eqref{eq:Lneed} give
\[
\frac1{c_k}\;\le\;2\mathrm e^{-2\ell d/\sigma}\;\le\;\frac\eta4 ,
\]
and therefore
\[
\max_{\delta\le|x|\le\sigma}\bigl|\widetilde R_\ell(x)\bigr|
\le\frac{\sigma/c_k+D_\star}{\sigma-D_\star}
\le\frac{\eta/4+\eta/4}{1-\eta/4}\le\eta .
\]

(iii) Proposition~\ref{prop:tail_estimate} applied to $y\mapsto y^{2}$, for which
$M_P=1$ and $q=1$, gives
\begin{equation}
\varepsilon_L\;\le\;C_\star\,\Phi_L\,\gamma^{\,L+1} ,
\qquad
C_\star:=\frac{C_r\,r^{2}}{(1-r^{2})(1-\gamma)} ,
\qquad \gamma=\frac{\bar\sigma^{2}}{r^{2}}<1 .
\label{eq:epsL}
\end{equation}
The right-hand side of the second condition of~\eqref{eq:Lneed} is at least
$c\,\eta\,6^{-k}$ with $c:=\frac\sigma6\min\{\bar\sigma-\sigma,\sigma/4\}$, since
$\eta<1$, and $\ln\Phi_L=o(L)$ by~\eqref{eq:Phi_growth}. That condition therefore
holds as soon as $(L+1)\ln(1/\gamma)$ exceeds
$\ln\bigl(C_\star\Phi_L/(c\eta)\bigr)+k\ln6$, that is for some
$L=O(k+\log\frac1\eta)=O(\log\frac \ell\eta)$. Each stage uses $\sum_{j\le L}2j=L(L+1)$ ancilla qubits,
and by Lemma~\ref{lem:local} it adds two to the locality.
\end{proof}

\subsection{Energy}

Proposition~\ref{prop:filter} settles the accuracy and the number of ancillas of
the recursion. What remains is its energy scale, which in an analog setting is a
cost in time: a device with a fixed energy budget runs $\Sigma_k/\|\Sigma_k\|$ in
place of $\Sigma_k/v_k$, the Hamiltonian that carries the filter, and so stretches
every evolution time by $\|\Sigma_k\|/v_k$. This is where these gadgets for the Chebyshev filter are least
favourable. Each stage feeds the previous one into $L$ chains with weights
$\alpha_j$, so it multiplies the norm by a factor of order $\Lambda_L/\sigma$, and
the norm grows geometrically in the number of stages, that is polynomially in
$\ell=2^{k}$ with exponent $\log_2(\Lambda_L/\sigma)$.

\begin{proposition}[Energy scale of the recursion]\label{prop:energy}
Assume the hypotheses of Proposition~\ref{prop:filter} and $\Delta\ge3\sigma$. The
total weight of the synthesised square satisfies
\begin{equation}
\Lambda_L\;=\;\sum_{j=1}^{L}\bigl|\alpha_j\bigr|\;\ge\;|\alpha_1|\;=\;\Delta ,
\label{eq:LambdaL}
\end{equation}
and, with $\ell=2^{k}$,
\begin{equation}
\|\Sigma_k\|\;\le\;\frac{2\,\Lambda_L\,(\sigma+2\Delta+2)}{d}\,
\Bigl(\frac{\Lambda_L}{\sigma}\Bigr)^{k}
\;=\;\frac{2\,\Lambda_L\,(\sigma+2\Delta+2)}{d}\;\ell^{\log_2(\Lambda_L/\sigma)} .
\label{eq:energyN}
\end{equation}
The normalised norm $\|\Sigma_k\|/v_k$ is at most $2/\sigma$ times this bound, by
Proposition~\ref{prop:filter}(ii).
\end{proposition}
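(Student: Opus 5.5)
The plan is to treat the two claims separately. The identity for $\alpha_1$ is immediate from the back-substitution. The norm bound comes from a one-step recursion for $N_i:=\|\Sigma_i\|$, and the one real idea is that the product of the stage prefactors $\sigma\theta_i$ telescopes.

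\emph{The weight.} For the square we have $q=1$, $p_2=1$ and $p_{2k}=0$ for $k\ge2$. The first step of~\eqref{eq:alpha_rec} has an empty sum, so it gives $\alpha_1=\Delta p_2=\Delta$. Hence $\Lambda_L=\sum_j|\alpha_j|\ge|\alpha_1|=\Delta$, which is~\eqref{eq:LambdaL}.

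\emph{One-step bound.} I would first bound a single chain driven by an operator input $A$. In $E_{2j}(A)$ the $x$-bonds act on disjoint pairs of sites, so together they have norm $\|A\|$. The unit bonds also act on disjoint pairs, so they have norm $1$, and the detuned site has norm $\Delta$. Thus $\|E_{2j}(A)\|\le\|A\|+1+\Delta$. By Definition~\ref{def:square} this gives
\[
N_{i+1}\le|\theta_i|\,\Lambda_L\,(N_i+1+\Delta)+|b_i| ,\qquad N_{-1}=\|H_1\|\le\sigma ,
\]
with $|b_{-1}|=\sigma$ and $|b_i|=\sigma/c_{i+1}\le\sigma$ for $i\ge0$. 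The hypotheses $\Delta\ge3\sigma$ and $\Lambda_L\ge\Delta$ make the additive terms $|b_i|$ and the constant $1+\Delta$ subordinate. I would show by induction that $N_i+1+\Delta\le K_i$ for a sequence satisfying $K_{i+1}=|\theta_i|\Lambda_L K_i$, up to absorbing constants into the prefactor $\sigma+2\Delta+2$. Unrolling then gives
\[
N_k\;\lesssim\;(\sigma+2\Delta+2)\,\Lambda_L^{\,k+1}\prod_{i=-1}^{k-1}|\theta_i| .
\]

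\emph{Telescoping the prefactors.} The crude bound $|\theta_i|<2/\sigma$ loses a factor $2^k=\ell$, so I would instead compute the product exactly. Put $a:=2\arctanh u$. By Lemma~\ref{lem:filter} we have $c_i=\cosh(2^ia)$ for $i\ge1$. For $i\ge1$, Lemma~\ref{lem:stage} gives $\sigma\theta_i=2c_i^2/c_{i+1}$, so
\[
\prod_{i=1}^{k-1}\sigma\theta_i=\frac{\prod_{i=1}^{k-1}2\cosh(2^ia)\cdot c_1}{c_k}\le\frac{\sinh(2^ka)}{\sinh(2a)}\cdot\frac{\cosh(2a)}{\cosh(2^ka)}\le\coth(2a) .
\]
Here I use the duplication identity $\prod_{i=s}^{t-1}2\cosh(2^ia)=\sinh(2^ta)/\sinh(2^sa)$. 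The two exceptional stages are handled directly. For $i=0$, Lemma~\ref{lem:stage} gives $\sigma\theta_0<2$. For $i=-1$, $\sigma|\theta_{-1}|=2\sigma^2/(\sigma^2+d^2)<2$. Finally $\coth(2a)\le 1/(2a)+1$ and $a\ge2u=2d/\sigma$, so the product is $O(\sigma/d)$. Altogether the stage prefactors contribute $\sigma^{-(k+1)}\cdot O(\sigma/d)$, which turns the estimate into $\Lambda_L(\sigma+2\Delta+2)\,d^{-1}(\Lambda_L/\sigma)^k$ times a constant. Writing $(\Lambda_L/\sigma)^k=\ell^{\log_2(\Lambda_L/\sigma)}$ with $\ell=2^k$ gives the second form in~\eqref{eq:energyN}. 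The normalised statement is then $v_k\ge\sigma/2$ from Proposition~\ref{prop:filter}(ii).

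\emph{Main obstacle.} The hard part is the bookkeeping that lands exactly on the constant $2$ in~\eqref{eq:energyN}. The additive terms $(1+\Delta)|\theta_i|\Lambda_L$ and $|b_i|$ accumulate over all stages, and they must be shown to stay within the factor $(\sigma+2\Delta+2)$. I would first try the potential $K_i:=N_i+1+\Delta$ together with $|\theta_i|\Lambda_L\ge\Lambda_L/\sigma\ge3$, which holds because $\Delta\ge3\sigma$. With that, each added constant $1+\Delta+\sigma$ is at most a fixed fraction of the preceding multiplicative term. A geometric sum with ratio at most $1/3$ should then absorb all the additive terms into a factor $3/2$. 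Together with the $O(1)$ slack in $\coth(2a)\le\sigma/d$ for $d\le\sigma/2$, this should leave the constant $2$.
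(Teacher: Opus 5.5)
Your outline is the paper's proof: the same one-stage bound, the same absorption of the additive constants using $|\theta_i|\Lambda_L\ge\Delta/\sigma\ge3$, and the same $\cosh$-duplication identity to telescope $\prod\sigma\theta_i$. Your absorption step differs harmlessly from the paper's, which instead adds $2(\Delta+1)$ to both sides to make the recursion purely multiplicative. The geometric series with ratio $1/3$ on $K_i=N_i+1+\Delta$ does give $N_k\le\tfrac32(\sigma+\Delta+1)\Lambda_L^{k+1}\prod_{i=-1}^{k-1}|\theta_i|$. Moreover $\tfrac32(\sigma+\Delta+1)\le\sigma+2\Delta+2$, since $\Delta+1\ge\sigma$. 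At $i=-1$ you should check $|\theta_{-1}|=2\sigma/(\sigma^2+d^2)>1/\sigma$, which holds because $d<\sigma$.

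What does not close is the constant $2$ in \eqref{eq:energyN}.

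\emph{The loss.} You split off stage $i=0$ and use only $\sigma\theta_0<2$. This gives $\prod_{i=-1}^{k-1}\sigma|\theta_i|<4\coth(2a)=2\coth a+2\tanh a$, whereas \eqref{eq:energyN} needs $2\sigma/d$. Since $2\coth a=(\sigma^2+d^2)/(\sigma d)$, the extra term $2\tanh a=4\sigma d/(\sigma^2+d^2)$ would have to fit into the slack $(\sigma^2-d^2)/(\sigma d)$. That fails for $d$ close to $\sigma$: as $d\to\sigma$ your bound tends to $4$ against $2$. Your better prefactor does not rescue it either, since that would require $6(\sigma+\Delta+1)\le2(\sigma+2\Delta+2)$.

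\emph{The missing hypothesis.} Your closing appeal to $d\le\sigma/2$ is not available here. That hypothesis belongs to Proposition~\ref{prop:single}(iv). The present statement assumes only $0<d<\sigma$ and \eqref{eq:Lneed}, and the first condition of \eqref{eq:Lneed} bounds $d$ from below, not from above.

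\emph{The fix.} Do not split off $i=0$. Only $c_0^2=\cosh^2a$ enters $\sigma\theta_0=2c_0^2/c_1$; equivalently $\sigma\theta_0=1+1/c_1$ with $c_1=\cosh 2a$. So the duplication identity already runs from $i=0$, and for every $d<\sigma$ it gives
\[
\prod_{i=0}^{k-1}\sigma\theta_i=\frac{\tanh(2^ka)}{\tanh a}<\coth a=\frac{\sigma^2+d^2}{2\sigma d}<\frac{\sigma}{d} .
\]
Together with $\sigma|\theta_{-1}|<2$ this is exactly $\prod_{i=-1}^{k-1}|\theta_i|<2/(d\sigma^k)$. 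That yields \eqref{eq:energyN} with the constant $2$ and no restriction on $d$, and it is the paper's computation.
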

\begin{proof}
\emph{One stage.} For every Hermitian $A$ and every $j$ one has
$\|E_{2j}(A)\|\le\|A\|+\Delta+1$: in the form~\eqref{eq:local-form} the $x$-bonds
are disjoint and contribute $\|A\|$, the unit bonds are disjoint and contribute
$1$, and the detuned site contributes $\Delta$. Since
$\Sigma_{i+1}=\theta_i\sum_{j\le L}\alpha_jE_{2j}(\Sigma_i)+b_i$ with
$|b_i|\le\sigma$, and $\|\Sigma_{-1}\|=\|H_1\|\le\sigma$,
\begin{equation}
\|\Sigma_{i+1}\|\le|\theta_i|\,\Lambda_L\bigl(\|\Sigma_i\|+\Delta+1\bigr)+\sigma
\qquad(-1\le i<k) .
\label{eq:energy_rec}
\end{equation}

\emph{Homogenising.} By~\eqref{eq:stagebounds} and
$|\theta_{-1}|=2\sigma/(\sigma^{2}+d^{2})>1/\sigma$, every prefactor satisfies
$|\theta_i|\Lambda_L\ge\Delta/\sigma\ge3$. Hence
$|\theta_i|\Lambda_L(\Delta+1)\ge3(\Delta+1)\ge2(\Delta+1)+\sigma$, and adding
$2(\Delta+1)$ to~\eqref{eq:energy_rec} gives
$\|\Sigma_{i+1}\|+2(\Delta+1)\le|\theta_i|\Lambda_L\bigl(\|\Sigma_i\|+2(\Delta+1)\bigr)$.
Unrolling from $\|\Sigma_{-1}\|\le\sigma$,
\[
\|\Sigma_k\|\;\le\;\Lambda_L^{\,k+1}\,(\sigma+2\Delta+2)\prod_{i=-1}^{k-1}|\theta_i| .
\]

\emph{The prefactors telescope.} By~\eqref{eq:stagemap},
$\sigma\theta_i=1+1/c_{i+1}$ for $i\ge0$. By Lemma~\ref{lem:filter},
$|\phi(0)|=\cosh\beta$ with $\beta:=2\arctanh(d/\sigma)$, and Chebyshev doubling is
angle doubling, so $c_j=\cosh(2^{j}\beta)$ for $j\ge1$. Since
$1+\frac1{\cosh2y}=\frac{2\cosh^{2}y}{\cosh2y}=\frac{\tanh2y}{\tanh y}$,
\[
\prod_{i=0}^{k-1}\sigma\theta_i
=\prod_{j=1}^{k}\frac{\tanh(2^{j}\beta)}{\tanh(2^{j-1}\beta)}
=\frac{\tanh(2^{k}\beta)}{\tanh\beta}\;<\;\coth\beta
=\frac{\sigma^{2}+d^{2}}{2\sigma d}\;<\;\frac\sigma d .
\]
With $|\theta_{-1}|<2/\sigma$ this gives~\eqref{eq:energyN}. The last sentence is
$v_k\ge\sigma/2$.
\end{proof}

\begin{corollary}[The energy scale is quasi-polynomial]\label{cor:energy}
At fixed $\Delta$, and with $L=O\bigl(\log\tfrac \ell\eta\bigr)$ as in
Proposition~\ref{prop:filter}(iii), one has
$\log_2\Lambda_L=O\bigl(\log\tfrac \ell\eta\bigr)$ and
\[
\|\Sigma_k\|=\exp O\bigl(\log \ell\cdot\log\tfrac \ell\eta\bigr) ,
\]
which is quasi-polynomial in $\ell$.
\end{corollary}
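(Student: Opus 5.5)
The plan is to obtain both claims of Corollary~\ref{cor:energy} from Proposition~\ref{prop:energy} and Proposition~\ref{prop:filter}(iii), once $\Lambda_L$ is bounded above. Proposition~\ref{prop:energy} gives
\[
\|\Sigma_k\|\le\frac{2\Lambda_L(\sigma+2\Delta+2)}{d}\Bigl(\frac{\Lambda_L}{\sigma}\Bigr)^{k},
\]
so
\[
\log\|\Sigma_k\|\le (k+1)\log\Lambda_L+k\log\tfrac1\sigma+\log\tfrac{2(\sigma+2\Delta+2)}{d}.
\]
With $k=\log_2\ell$ and $\sigma,d,\Delta$ fixed, the whole statement reduces to
\[
\log\Lambda_L=O\bigl(\log\tfrac\ell\eta\bigr)
\]
for the $L=O(\log\frac\ell\eta)$ of Proposition~\ref{prop:filter}(iii). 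The product $(k+1)\log\Lambda_L$ is then $O(\log\ell\cdot\log\frac\ell\eta)$, and the remaining terms are absorbed. Here $d$ is held fixed, or enters only as an additive logarithm, which is dominated.

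To bound $\Lambda_L$ I will use the partial sums of Proposition~\ref{prop:alpha_growth}, with target $y\mapsto y^{2}$, so that $q=1$ and $M_P=1$. That proposition gives
\[
\sum_{j\le L}r^{2j}|\alpha_j|=\mathcal A_L\le\frac{r^{2}\Delta}{1-r^{2}}\,\Phi_L .
\]
Since $r<1$, each weight satisfies $|\alpha_j|\le r^{-2j}\mathcal A_L\le r^{-2L}\mathcal A_L$. Summing over $j\le L$ gives
\[
\Lambda_L\le L\,r^{-2L}\,\frac{r^{2}\Delta}{1-r^{2}}\,\Phi_L .
\]
Taking logarithms,
\[
\log\Lambda_L\le \log L+2L\log\tfrac1r+\log\tfrac{\Delta}{1-r^{2}}+\ln\Phi_L .
\]
By~\eqref{eq:Phi_growth}, $\ln\Phi_L=O(\log^{2}L)=o(L)$. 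Hence $\log\Lambda_L=O(L)$ at fixed $r,\Delta$, and substituting $L=O(\log\frac\ell\eta)$ gives the stated bound on $\log_2\Lambda_L$.

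The main obstacle is the factor $r^{-2L}$. It is exponential in $L$, but since it enters only through a logarithm it costs just $O(L)$, and this crude bound suffices. The lower bound $\Lambda_L\ge\Delta$ from~\eqref{eq:LambdaL} shows that the base $\Lambda_L/\sigma$ is at least a constant, so the estimate is not vacuous.

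Quasi-polynomiality then follows. At fixed $\eta$,
\[
\|\Sigma_k\|=\exp O(\log^{2}\ell),
\]
which grows faster than any polynomial in $\ell$ in general but slower than $\exp(\ell^{c})$ for every $c>0$.
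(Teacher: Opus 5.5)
Your argument is the paper's: you bound $\Lambda_L$ by $r^{-2L}\mathcal A_L$, apply \eqref{eq:Asum_bound} with $q=1$, $M_P=1$ and \eqref{eq:Phi_growth}, and substitute into \eqref{eq:energyN}. Your extra factor $L$ is harmless but unnecessary, since $\Lambda_L=\sum_{j\le L}r^{-2j}A_j\le r^{-2L}\mathcal A_L$ directly. The one step to tighten is the prefactor $1/d$, which is not fixed in the application because $d$ shrinks as $\ell$ grows: the paper controls it by the first condition of \eqref{eq:Lneed}, which gives $1/d<\ell/\sigma$ and hence $\log(1/d)=O(\log\ell)$, and without that condition your claim that it is ``dominated'' is unjustified. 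Also, your closing remark that $\|\Sigma_k\|$ grows faster than any polynomial is not established, since the argument only gives an upper bound.
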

\begin{proof}
$\Lambda_L=\sum_{j\le L}r^{-2j}\bigl(r^{2j}|\alpha_j|\bigr)\le r^{-2L}\mathcal A_L$,
and~\eqref{eq:Asum_bound} with $q=1$ and $M_P=1$, the data of
Definition~\ref{def:square}, gives $\mathcal A_L\le r^{2}\Delta\Phi_L/(1-r^{2})$.
At fixed $\Delta$ the radius $r$ is fixed, and $\ln\Phi_L=O(\log^{2}L)$
by~\eqref{eq:Phi_growth}, so
$\log_2\Lambda_L=2L\log_2(1/r)+O(\log^{2}L)=O(\log\tfrac \ell\eta)$. The first
condition of~\eqref{eq:Lneed} gives $1/d<\ell/\sigma$, so~\eqref{eq:energyN} yields
$\log_2\|\Sigma_k\|\le k\log_2(\Lambda_L/\sigma)+O(\log\tfrac \ell\eta)
=O\bigl(\log \ell\cdot\log\tfrac \ell\eta\bigr)$.
\end{proof}

The exponent in~\eqref{eq:energyN} is at least $\log_2(\Delta/\sigma)$
by~\eqref{eq:LambdaL}, and the bound on $\Lambda_L$ used above grows geometrically
with $L$, which itself grows with $\ell$. We have not been able to prove a better
bound on $\Lambda_L$ than the $r^{-2L}$ of Proposition~\ref{prop:alpha_growth},
and the measurements of Appendix~\ref{app:beyond} indicate that none of the
obvious improvements is available.

\subsection{Numbers}\label{app:beyond}

At the working point
\begin{equation}
\Delta=10,\qquad
r=0.7562,\qquad
\sigma=0.5,\qquad
\bar\sigma=0.6,\qquad
d=\sigma/10 ,
\label{eq:workingpoint}
\end{equation}
where $r$ is the largest four-digit radius certified at $\Delta=10$ by
Corollary~\ref{cor:Dmin}, the synthesis error is $\varepsilon_L=9.1\cdot10^{-6}$
at $L=5$ and $7.1\cdot10^{-8}$ at $L=8$. The weight $\Lambda_L$ is nearly flat
over small $L$, taking the values $10.2$, $10.5$, $11.5$ and $13.8$ at
$L=4,8,16,20$, and the largest weight is $\alpha_1=\Delta$ for every $L\le29$. That
flatness is not a law. The $\alpha_j$ do not depend on $L$, so $\Lambda_L$ is a
partial sum of one fixed series, and it reaches $722$ at $L=40$ and
$1.13\cdot10^{8}$ at $L=80$, the measured ratio $|\alpha_{j+1}/\alpha_j|$ over
$70\le j\le110$ being $1.97$, $1.36$ and $1.27$ at $\Delta=2,10,16$, against the
ceilings $r^{-2}=2.94$, $1.75$ and $1.58$ that Proposition~\ref{prop:alpha_growth}
allows. The growth is geometric at every detuning we have tested, at a rate that
falls with $\Delta$ and never reaches $1$, so no bound of the form
$\Lambda_L=O(L)$ is available.

For $\eta=10^{-2}$ the first condition of~\eqref{eq:Lneed} needs $k\ge6$, and the
second asks for $L=14$, $14$, $24$, $31$ and $38$ chains per stage at $k=6$, $8$,
$10$, $13$ and $16$. Computing $\|\Sigma_k\|$ exactly, which the commutativity of
the summands permits, gives $\log_\ell\|\Sigma_k\|=6.35$, $5.89$, $6.43$, $8.15$ and
$10.71$, within a factor $4.6$ to $7.3$ of~\eqref{eq:energyN}. The exponent first
falls while $\Lambda_L$ is flat and then rises as $L$ enters the geometric range,
which is Corollary~\ref{cor:energy} made visible. Condition~\eqref{eq:Lneed} is
only sufficient, and at $k=6$ a single chain per stage already brings the
realised stopband to $10^{-5}$. The normalised norm stays above $10^{11}$ all the
same, because it is set by the weight $\Lambda_L\ge\Delta$ of every stage and not
by the number of chains.

\section{Deferred proofs}\label{app:proofs}

This appendix collects the proofs that would otherwise interrupt the argument.
Nothing here is used before it is stated in the main text.

\subsection{Proof of Lemma~\ref{lem:transfer}}\label{app:transfer}

\begin{proof}
From~\eqref{eq:recurrence} with $j\ge1$:
\[
  \Theta_{2j}
  = -\lambda \Theta_{2j-1} - x^2 \Theta_{2j-2}
  = -\lambda(-\lambda \Theta_{2j-2}-\Theta_{2j-3}) - x^2 \Theta_{2j-2}
  = (\lambda^2-x^2)\Theta_{2j-2}+\lambda \Theta_{2j-3},
\]
\[
  \Theta_{2j-1} = -\lambda \Theta_{2j-2} - \Theta_{2j-3}.
\]
In matrix form this is exactly~\eqref{eq:transfer}. Iterating from $(\Theta_0,\Theta_{-1})=(1,0)$ gives~\eqref{eq:Tm_int}.
\end{proof}

\subsection{Proof of Lemma~\ref{lem:ch}}\label{app:ch}

\begin{proof}
The closed form follows from the linear recurrence~\eqref{eq:Sk_rec} with characteristic roots $\mu_\pm$ and initial conditions $S_0=0$, $S_1=1$.

Identity~\eqref{eq:CH} is proved by induction on $m$. 

\noindent Base cases:
\begin{itemize}
    \item $m=1$: $S_1 T-x^2 S_0 I = T-0=T=T^1$. 
    \item $m=2$: By Cayley--Hamilton, $T$ satisfies its characteristic equation $T^2-\tau T+x^2 I=0$, i.e.\ $T^2=\tau T-x^2 I=S_2 T-x^2 S_1 I$.
\end{itemize}
\noindent Inductive step ($m\ge2$): $T^{m+1}=T\cdot T^m = T(S_m T-x^2 S_{m-1}I) = S_m T^2-x^2 S_{m-1}T = S_m(\tau T-x^2 I)-x^2 S_{m-1}T = (S_m\tau-x^2 S_{m-1})T - x^2 S_m I = S_{m+1}T-x^2 S_m I$, using $S_{m+1}=\tau S_m-x^2 S_{m-1}$.
\end{proof}

\subsection{Proof of Lemma~\ref{lem:poles}}\label{app:poles}

\begin{proof}
(i) Substituting $S_k=x^{k-1}\widetilde S_k$ turns~\eqref{eq:Sk_rec} into
$\widetilde S_k=2z\widetilde S_{k-1}-\widetilde S_{k-2}$ with $z=\tau/2x$,
$\widetilde S_1=1$, $\widetilde S_2=2z$, so
$\widetilde S_m=U_{m-1}(z)$. The zeros of $U_{m-1}$ are
$z=\cos\vartheta_j$, i.e.\ $\tau=2x\cos\vartheta_j$, i.e.\ $w=w_j$. The
factorisation holds because $\omega_j\bar\omega_j=1$ and
$\omega_j+\bar\omega_j=-2\cos\vartheta_j$, and $S_m$ is monic in $w$ because
$\tau=w-x^2-1$.

(ii) By~\eqref{eq:poles}, $|w_j|$ is the product of the two factors
$|x-\omega_j|=|x+e^{i\vartheta_j}|$ and $|x-\bar\omega_j|=|x+e^{-i\vartheta_j}|$.
Since $|\omega_j|=1$, each factor is at least $|\omega_j|-|x|=1-|x|$. Moreover
$|x+e^{i\vartheta_j}|+|x+e^{-i\vartheta_j}|\ge|e^{i\vartheta_j}-e^{-i\vartheta_j}|=2\sin\vartheta_j$,
so the larger factor is at least $\sin\vartheta_j$. Multiplying the smaller
factor, at least $1-|x|$, by the larger one, at least
$\max\{1-|x|,\sin\vartheta_j\}$, gives~\eqref{eq:polebound2}. On $\mathcal D$ we
have $|w|=|\lambda|^{2}<(1-|x|)^{2}\le|w_j|$, so no zero is reached. The bound is
sharp: as $\vartheta_j\to0$ with $x\in(-1,0)$, the ratio $|w_j|/(1-|x|)^{2}$
tends to $1$.

(iii) $S_{m-1}/S_m=x^{-1}U_{m-2}(z)/U_{m-1}(z)$, and the zeros of
$U_{m-1}$ are simple, so
$U_{m-2}/U_{m-1}=\sum_j\kappa_j/(z-z_j)$ with $\kappa_j=\lim_{z\to z_j}(z-z_j)\frac{U_{m-2}(z)}{U_{m-1}(z)}
=\frac{U_{m-2}(z_j)}{U_{m-1}'(z_j)}$. Using
$U_{m-1}(\cos\vartheta)=\sin(m\vartheta)/\sin\vartheta$ one computes
$U_{m-2}(z_j)=-(-1)^{j}$ and
$U_{m-1}'(z_j)=-m(-1)^{j}/\sin^{2}\vartheta_j$, hence
$\kappa_j=\sin^{2}\vartheta_j/m>0$. And $\sum_{j=1}^{m-1}\sin^{2}\vartheta_j=m/2$ gives
$\sum_j\kappa_j=\tfrac12$. Substituting $z-z_j=(w-w_j)/2x$ yields~\eqref{eq:partialfrac}. For $m=1$ every sum and product is empty:
$S_1=1$ and $R_1=1$.
\end{proof}

\subsection{Proof of Proposition~\ref{prop:eig_eq}}\label{app:eig_eq}

\begin{proof}
The boundary modifies only the last diagonal entry:
$-\lambda \to -\lambda-\Delta$.  Expanding the last row:
\begin{equation}
  \chi_{2m}(\lambda)
  = (-\Delta-\lambda)\Theta_{2m-1} - x^2 \Theta_{2m-2}
  = \Theta_{2m} - \Delta \Theta_{2m-1},
\label{eq:D_boundary}
\end{equation}
where the second equality uses $\Theta_{2m}=-\lambda \Theta_{2m-1}-x^2 \Theta_{2m-2}$.
In vector form, using~\eqref{eq:Tm_int}:
\begin{equation}
  \chi_{2m}(\lambda)
  = \bigl(1,\,-\Delta\bigr)\,T^m\begin{pmatrix}1\\0\end{pmatrix}.
\label{eq:D_vec1}
\end{equation}
Applying~\eqref{eq:CH}:
\begin{align*}
  \chi_{2m}(\lambda)
  &= S_m(\lambda^2+\Delta\lambda-x^2)-x^2 S_{m-1}.
\end{align*}
Setting this to zero and dividing by $S_m$ gives~\eqref{eq:eig_reduced}.
\end{proof}

\subsection{Proof of Proposition~\ref{prop:parity}}\label{app:parity}

\begin{proof}
That $\chi_{2m}$ depends on $x$ only through $x^{2}$ is immediate
from~\eqref{eq:char_poly}: by~\eqref{eq:Sk_rec} each $S_k$ is a polynomial in
$\tau$ and $x^{2}$, and $\tau=\lambda^{2}-x^{2}-1$, while the remaining term
is $-x^{2}S_{m-1}$. Hence $\chi_{2m}(\lambda;x)=\chi_{2m}(\lambda;-x)$ and the
two matrices $E_{2m}(x)$, $E_{2m}(-x)$ are isospectral, so for real $x$ their
ordered eigenvalues agree, $\lambda_j(-x)=\lambda_j(x)$. Any branch determined by
a condition invariant under $x\mapsto-x$ is therefore even. 
\end{proof}

\subsection{Proof of Proposition~\ref{prop:bracket}}\label{app:bracket}

\begin{proof}
Set $F(\lambda):=\lambda(\lambda+\Delta)-x^{2}R_m(\lambda,x)$ on $0\le\lambda\le M/\Delta$. By hypothesis $\lambda^{2}<W$ there, so $(\lambda,x)\in\mathcal D$ and $F$ is continuous by Lemma~\ref{lem:poles}(ii).

\emph{Upper bound.} By~\eqref{eq:monotone}, $x^{2}R_m\le M$ for every such 
$\lambda$, so $F(\lambda)\ge\lambda(\lambda+\Delta)-M$, which is $\ge0$ at 
$\lambda=M/\Delta$. Since $F(0)=-M<0$, $F$ vanishes in $[0,M/\Delta]$, and 
only once, because $\lambda(\lambda+\Delta)$ is strictly increasing there 
and $-x^{2}R_m$ is non-decreasing by Corollary~\ref{cor:threefold}(ii). 
Lying in $[0,M/\Delta]$, this root tends to $0$ with $x$, so near $x=0$ 
it is $\lambda_0^{(m)}$.

\emph{Lower bound.} By~\eqref{eq:monotone} again,
\[
M-x^{2}R_m = x^{2}\,\lambda^{2}\sum_{j}\frac{2\kappa_j}{w_j(w_j-\lambda^{2})}
 \le \frac{x^{2}\,\lambda^{2}}{W\,(W-\lambda^{2})} ,
\]
using $w_j\ge W$ from~\eqref{eq:polebound2} and $\sum_j2\kappa_j=1$. Hence at the root, where $\lambda^{2}+\Delta\lambda=x^{2}R_m$,
\[
\Delta\lambda = M-\lambda^{2}-\bigl(M-x^{2}R_m\bigr)
 \ge M-\lambda^{2}\left(1+\frac{x^{2}}{W(W-\lambda^{2})}\right).
\]
Substituting the upper bound $\lambda\le M/\Delta$ on the right-hand side and dividing by $\Delta$ gives~\eqref{eq:bracket}. Finally $M=\Theta(x^{2m})$ and $W\to1$ as $x\to0$, so the bracket has relative width $O(x^{2m})$.
\end{proof}

\subsection{Proof of Theorem~\ref{thm:series}}\label{app:series}

\begin{proof}
First, by Lemma~\ref{lem:poles}(ii), $R_m$ is analytic in $(\lambda,x)$ on $\mathcal D$, in particular on a neighbourhood of the origin, so $F(\lambda,x)=\lambda^2+\Delta\lambda-x^2R_m(\lambda,x)$ is too, with $F(0,0)=0$ and
\[
\frac{\partial F}{\partial\lambda}\Big|_{(0,0)} =2\lambda-x^2\frac{\partial R_m}{\partial\lambda}\Big|_{(0,0)}+\Delta=\Delta\neq0 .
\]
The implicit function theorem yields a unique analytic branch $\lambda_0^{(m)}$ near $x=0$ with $\lambda_0^{(m)}(0)=0$, the elementary case of the analytic perturbation theory of~\cite{kato1995perturbation}. That characterisation is invariant under $x\mapsto-x$, so Proposition~\ref{prop:parity} makes the branch even and
$\lambda_0^{(m)}(x)=\sum_{j\ge0}b_jx^{2j}$ with $b_0=0$.

Second, write $\lambda:=\lambda_0^{(m)}(x)$ for $x$ real and small. The exact identity~\eqref{eq:monotone} of Corollary~\ref{cor:threefold} gives $x^2R_m=M-\varepsilon(x)$ with $\varepsilon(x)=x^{2}\lambda^{2}\sum_j\frac{2\kappa_j}{w_j(w_j-\lambda^{2})}\ge0$, so the eigenvalue equation $\lambda^2+\Delta\lambda=x^2R_m(\lambda,x)$ reads
\begin{equation}
\Delta\lambda  =  M-\lambda^2-\varepsilon(x),
\qquad
0\le\varepsilon(x)\le \frac{x^{2}\lambda^{2}}{W\,(W-\lambda^{2})},
\quad W=(1-x)^{2}.
\label{eq:exact2sided}
\end{equation}

Third, by Proposition~\ref{prop:bracket}, $0\le\lambda\le M/\Delta$, and $M=\Theta(x^{2m})$, so $\lambda^2=O(x^{4m})$. Since $W\to1$ and $\lambda\to0$ as $x\to0$, the bound in~\eqref{eq:exact2sided} gives $\varepsilon(x)=O(x^{2}\lambda^{2})=O(x^{4m+2})$. Finally, expanding the geometric series,
\[
M=\frac{x^{2m}(1-x^2)}{1-x^{2m}}
=(x^{2m}-x^{2m+2})\sum_{i\ge0}x^{2mi}
=x^{2m}-x^{2m+2}+x^{4m}+O(x^{4m+2}).
\]
Substituting into~\eqref{eq:exact2sided} and using $\lambda^2=\bigl(x^{2m}/\Delta\bigr)^2+O(x^{4m+2})=x^{4m}/\Delta^2+O(x^{4m+2})$,
\begin{equation}
\Delta\,\lambda_0^{(m)}(x) = x^{2m}-x^{2m+2}+\Bigl(1-\frac1{\Delta^2}\Bigr)x^{4m}+O(x^{4m+2}).
\label{eq:lambda_expansion}
\end{equation}

The left-hand side of~\eqref{eq:lambda_expansion} equals $\Delta\sum_{j\ge0}b_jx^{2j}$. Comparing coefficients of $x^{2j}$ for $2j\le 4m$ shows $b_j=0$ for $j<m$, so the series indeed starts at $x^{2m}$ and we may set $c_k^{(m)}:=b_{m+k}$. Then, for $m\ge2$, the three exponents $2m$, $2m+2$ and $4m$ are distinct and all at most $4m$, giving
\[
\Delta c_0^{(m)}=1,\qquad
\Delta c_1^{(m)}=-1,\qquad
\Delta c_k^{(m)}=0 \ \ (1<k<m),\qquad
\Delta c_m^{(m)}=1-\frac1{\Delta^2},
\]
which is the claim.

At $m=1$ the three exponents are no longer distinct: $2m+2=4=4m$, so the $-x^{2m+2}$ and $+\bigl(1-\Delta^{-2}\bigr)x^{4m}$ contributions land on the same power and only their sum is observable. Indeed $R_1=1$ and $M=x^{2}$ exactly, so~\eqref{eq:exact2sided} is the exact quadratic $\lambda^{2}+\Delta\lambda-x^{2}=0$, whose small root $\lambda=\frac{-\Delta+\sqrt{\Delta^{2}+4x^{2}}}{2}$ has $c_0^{(1)}=\Delta^{-1}$ and
\[
c_1^{(1)}=-\frac1{\Delta^{3}}
=\underbrace{-\frac1\Delta}_{\text{the }c_1^{(m)}\text{ term}}
+\underbrace{\Bigl(\frac1\Delta-\frac1{\Delta^{3}}\Bigr)}_{\text{the }c_m^{(m)}\text{ term}} .
\]
\end{proof}

\subsection{Proof of Lemma~\ref{lem:Rflat}}\label{app:Rflat}

\begin{proof}
Since $x^{2}R_m(0,x)=M(x)$, subtracting~\eqref{eq:partialfrac} from its value at $w=0$ gives the exact identity
\[
x^{2}R_m(\lambda,x)-M(x)
 = x^{2}\sum_{j=1}^{m-1}2\kappa_j\left(\frac{1}{w-w_j}+\frac{1}{w_j}\right)
 = x^{2}\,w\sum_{j=1}^{m-1}\frac{2\kappa_j}{w_j\,(w-w_j)},
\qquad w=\lambda^{2}.
\]
Since $|\lambda|<1-r$, one has $|w|=|\lambda|^{2}<(1-r)^{2}$. It suffices to show that every term obeys
\begin{equation}
\frac{2\kappa_j}{|w_j|\,|w-w_j|} \le \frac{2}{m\,\bigl((1-r)^{2}-|w|\bigr)} ,
\label{eq:termbound}
\end{equation}
since summing the $m-1$ terms then gives $\frac{2}{(1-r)^{2}-|w|}$. We may assume $\lambda\neq0$, the case $\lambda=0$ being trivial. In both cases below, $|w-w_j|\ge|w_j|-|w|$.

If $\sin\vartheta_j\ge1-r$, then $|w_j|\ge(1-r)\sin\vartheta_j$ by~\eqref{eq:polebound2}, and $(1-r)\sin\vartheta_j\ge(1-r)^{2}>|w|$, so
\[
|w_j|\,|w-w_j| \ge (1-r)\sin\vartheta_j\bigl((1-r)\sin\vartheta_j-|w|\bigr)>0 .
\]
Since $\kappa_j=\sin^{2}\vartheta_j/m$, inequality~\eqref{eq:termbound} is
therefore implied by
\[
\sin^{2}\vartheta_j\,\bigl((1-r)^{2}-|w|\bigr)\ \le\ (1-r)\sin\vartheta_j\bigl((1-r)\sin\vartheta_j-|w|\bigr).
\]
Dividing by $\sin\vartheta_j>0$ turns this into
$(1-r)^{2}\sin\vartheta_j-|w|\sin\vartheta_j\le(1-r)^{2}\sin\vartheta_j-(1-r)|w|$.
The terms $(1-r)^{2}\sin\vartheta_j$ cancel, and dividing by $|w|>0$ leaves
$1-r\le\sin\vartheta_j$, which is exactly the assumption of this case.

If $\sin\vartheta_j<1-r$, then $|w_j|\ge(1-r)^{2}$ by~\eqref{eq:polebound2} and
\[
\frac{2\kappa_j}{|w_j||w-w_j|}
 \le \frac{2\sin^{2}\vartheta_j/m}{(1-r)^{2}\bigl((1-r)^{2}-|w|\bigr)}
 < \frac{2(1-r)^{2}/m}{(1-r)^{2}\bigl((1-r)^{2}-|w|\bigr)}
 = \frac{2}{m\bigl((1-r)^{2}-|w|\bigr)} .
\]
Every one of the $m-1$ terms is thus at most $2/\bigl(m((1-r)^{2}-|w|)\bigr)$, and summing gives the claim. Finally $|M(x)|\le\frac{r^{2m}(1+r^{2})}{1-r^{2m}}=a_m$ because $|1-x^{2}|\le1+r^{2}$ and $|1-x^{2m}|\ge1-r^{2m}$.
\end{proof}

\subsection{Proof of Theorem~\ref{thm:onestep}}\label{app:onestep}

\begin{proof}
Fix $x$ with $|x|<r$ and apply Rouch\'e's theorem on the circle
$|\lambda|=\rho$ to $f(\lambda):=\Delta\lambda$ and
$g(\lambda):=\lambda^{2}-x^{2}R_m(\lambda,x)$, so that
$F=f+g$ is the eigenvalue equation of
Proposition~\ref{prop:eig_eq}. Since $\rho<1-r$,
Lemma~\ref{lem:Rflat} applies on that circle and gives
$|x^{2}R_m|\le|M(x)|+b(\rho)\rho^{2}$, so
\[
|g(\lambda)|\le\rho^{2}+|M(x)|+b(\rho)\rho^{2}
=|M(x)|+\rho^{2}\bigl(1+b(\rho)\bigr)
<a_m+a_m=2a_m=\Delta\rho=|f(\lambda)| .
\]
The second summand is at most $a_m$, which is exactly~\eqref{eq:onestep_hyp},
and the first is \emph{strictly} below $a_m$: the proof of
Lemma~\ref{lem:Rflat} bounds $|M(x)|$ by $(1+|x|^{2})|x|^{2m}/(1-|x|^{2m})$,
the value at $|x|$ of a function strictly increasing on $(0,1)$ whose value at
$r$ is $a_m$, and $|x|<r$ strictly. That is where the strict inequality Rouch\'e
needs comes from, and it is why~\eqref{eq:onestep_hyp} may be assumed with
equality allowed. Hence $F$ has precisely one zero in $|\lambda|<\rho$, as $f$
does, and it is simple, and since $\chi_{2m}=S_mF$ with $S_m\neq0$ on $\mathcal D$
by Lemma~\ref{lem:poles}(ii), the same holds for the characteristic polynomial.
This zero is an analytic function of $x$, since it equals
$\frac{1}{2\pi i}\oint_{|\lambda|=\rho}\lambda\,\partial_\lambda F/F\,d\lambda$ with
$F$ jointly analytic and zero-free on the contour for every $|x|<r$. Near
$x=0$ it coincides with the germ of Theorem~\ref{thm:series}, both being zeros
of $F$ in $|\lambda|<\rho$ that vanish at $x=0$, so the germ continues
analytically to all of $\{|x|<r\}$ as this zero.

Finally Lemma~\ref{thm:cauchy} gives $|c_k^{(m)}|\le\rho\,r^{-2m-2k}$, and $\rho\,r^{-2m}=\frac{2(1+r^{2})}{\Delta(1-r^{2m})}$.
\end{proof}

\subsection{Proof of Proposition~\ref{prop:m1}}\label{app:m1}

\begin{proof}
With $S_1=1$, $S_0=0$ the characteristic polynomial~\eqref{eq:char_poly} is $\lambda^{2}+\Delta\lambda-x^{2}$, whose root vanishing at $x=0$ is~\eqref{eq:m1}. For $|x|<r\le\Delta/2$ we have $|4x^{2}|<4r^{2}\le\Delta^{2}$, so the argument $\Delta^{2}+4x^{2}$ lies in the open disc of radius $\Delta^{2}$ about $\Delta^{2}$, hence in the right half-plane. Its principal square root is therefore analytic with positive real part, so $|\Delta+\sqrt{\Delta^{2}+4x^{2}}|\ge\Delta$ and the second form of~\eqref{eq:m1} gives $|\lambda_0^{(1)}|\le 2|x|^{2}/\Delta$. Lemma~\ref{thm:cauchy} with $m=1$ converts this into~\eqref{eq:m1_ck}.
\end{proof}

\subsection{Proof of Lemma~\ref{lem:rho_interval}}\label{app:rho}

\begin{proof}
For $\rho<1-r$ the quantity $(1-r)^{2}-\rho^{2}$ is positive, so multiplying $\rho^{2}\bigl(1+\frac{2r^{2}}{(1-r)^{2}-\rho^{2}}\bigr)\le a_m$ through by it and writing $s:=\rho^{2}$ gives the equivalent
\[
s\bigl((1-r)^{2}-s\bigr)+2r^{2}s \le a_m\bigl((1-r)^{2}-s\bigr),
\qquad\text{i.e.}\qquad
s^{2}-\mathsf ps+\mathsf q \ge 0 .
\]
The discriminant is $\mathsf p^{2}-4\mathsf q=\bigl((1-r)^{2}+a_m+2r^{2}\bigr)^{2}-4a_m(1-r)^{2} \ge\bigl((1-r)^{2}+a_m\bigr)^{2}-4a_m(1-r)^{2}=\bigl((1-r)^{2}-a_m\bigr)^{2}\ge0$, with strict inequality since $r>0$. So the quadratic has two real roots $s_-<s_+$ and $s^{2}-\mathsf ps+\mathsf q\ge0$ exactly for $s\le s_-$ or $s\ge s_+$. Evaluating at $s=(1-r)^{2}$,
\[
(1-r)^{4}-(1-r)^{2}\mathsf p+\mathsf q =(1-r)^{4}-(1-r)^{4}-2r^{2}(1-r)^{2}-a_m(1-r)^{2}+a_m(1-r)^{2} =-2r^{2}(1-r)^{2} < 0 ,
\]
so $s_-<(1-r)^{2}<s_+$. The branch $s\ge s_+$ is therefore outside the range $s<(1-r)^{2}$, leaving $s\le s_-$. Positivity of $s_-$ is clear from $\mathsf q>0$.
\end{proof}

\subsection{Proof of Corollary~\ref{cor:Dmin}}\label{app:dmin}

\begin{proof}
With $\rho=2a_m/\Delta$, condition~\eqref{eq:rho_interval} reads $\frac{2a_m}{\Delta}\le\sqrt{s_-}$, i.e.\ $\Delta\ge 2a_m/\sqrt{s_-}$. Substituting the form of $s_-$ from~\eqref{eq:sminus} and using $\mathsf q=a_m(1-r)^{2}$ gives the second expression in~\eqref{eq:Dmin}. Note that $s_-<(1-r)^{2}$ makes the accompanying constraint $\rho<1-r$ of~\eqref{eq:onestep_hyp} automatic.

For the monotonicity, eliminate $\rho$ from~\eqref{eq:onestep_hyp}: with
$\rho=2a_m/\Delta$ the two conditions read
\[
\frac{2a_m}{\Delta}<1-r
\qquad\text{and}\qquad
\Upsilon(a_m):=\frac{4a_m}{\Delta^{2}}
\left(1+\frac{2r^{2}}{(1-r)^{2}-4a_m^{2}/\Delta^{2}}\right)<1 .
\]
On the range where the first holds, $\Upsilon$ is a product of positive factors each strictly increasing in its argument, hence strictly increasing. As $a_m=(1+r^{2})\frac{t}{1-t}$ with $t=r^{2m}$, and $t\mapsto t/(1-t)$ is increasing while $t=r^{2m}$ decreases in $m$, the sequence $a_m$ is strictly decreasing. The conditions therefore hold at every $m>m_0$ as soon as they hold at $m_0$.
\end{proof}

\subsection{Proof of Proposition~\ref{prop:sparsity}}\label{app:sparsity}

\begin{proof}
Write $n=k-j$. By Theorem~\ref{thm:series}, $c_n^{(j)}=0$ whenever $j\ge2$ and $1<n<j$. A nonzero contribution from such a $j$ therefore requires $n\le1$ or $n\ge j$, i.e.\ $j\ge k-1$ or $2j\le k$. This is the stated vanishing. Note that $j=1$ is never excluded, since $1\le\lfloor k/2\rfloor$ for every $k\ge2$: the index $j=1$ always sits in the retained range.

Let $k\ge\max\{3,q+1\}$. Then $k-1\ge2$ and $k-1\ge q$, so Theorem~\ref{thm:series} does apply to the branch $\lambda_0^{(k-1)}$ and gives $c_1^{(k-1)}=-\Delta^{-1}$ exactly. The term $j=k-1$ of~\eqref{eq:alpha_rec} therefore contributes $-\Delta\alpha_{k-1}c_1^{(k-1)}=+\alpha_{k-1}$. Moreover $k\ge3$ makes $k-1>\lfloor k/2\rfloor$, so this term is not among those retained in the sum, and no index is counted twice. The remaining indices $\lfloor k/2\rfloor<j<k-1$ contribute nothing, which is~\eqref{eq:alpha_sparse}.

Both steps fail at $k=2$. There $j=k-1=1$, and $c_1^{(1)}=-\Delta^{-3}$, not $-\Delta^{-1}$. We therefore leave $k=2$ in its original form, which is~\eqref{eq:alpha_k2}.
\end{proof}

\subsection{Proof of Proposition~\ref{prop:alpha_growth}}\label{app:alpha}

\begin{proof}
We first establish a pointwise recurrence: $A_q\le r^{2q}\Delta M_P$ and
\begin{equation}
A_k\le r^{2k}\Delta M_P+r^{2}A_{k-1}+C_r\,\mathcal A_{\lfloor k/2\rfloor},
\qquad k>q .
\label{eq:A_sparse}
\end{equation}
Indeed $\alpha_q=\Delta p_{2q}$ gives $A_q=r^{2q}\Delta|p_{2q}|\le r^{2q}\Delta M_P$. For $k\ge3$, take moduli in the sparse recurrence~\eqref{eq:alpha_sparse}, multiply by $r^{2k}$, and bound the coefficients by~\eqref{eq:cauchy_bound_j}. The three terms rescale as
\[
r^{2k}\,\Delta|p_{2k}|\le r^{2k}\Delta M_P,
\qquad
r^{2k}|\alpha_{k-1}|=r^{2}A_{k-1},
\qquad
r^{2k}\,\Delta|\alpha_j|\frac{C_r}{\Delta\,r^{2(k-j)}}=C_r\,r^{2j}|\alpha_j|=C_rA_j ,
\]
and $\sum_{q\le j\le\lfloor k/2\rfloor}A_j=\mathcal A_{\lfloor k/2\rfloor}$. The remaining case $k=2$ occurs only when $q=1$. There~\eqref{eq:alpha_k2} together with $|c_1^{(1)}|\le\frac{C_r}{\Delta r^{2}}$ from~\eqref{eq:cauchy_bound_j} gives
\[
A_2=r^{4}|\alpha_2|
 \le r^{4}\Delta M_P+r^{4}\,\Delta|\alpha_1|\frac{C_r}{\Delta r^{2}}
 = r^{4}\Delta M_P+C_r\,r^{2}|\alpha_1|
 = r^{4}\Delta M_P+C_r\mathcal A_1 ,
\]
which is weaker than~\eqref{eq:A_sparse} at $k=2$, whose right-hand side carries the additional term $r^{2}A_1\ge0$. So~\eqref{eq:A_sparse} holds for every $k>q$ without exception.

Being a sum of non-negative terms, $\mathcal A$ is non-decreasing. Summing~\eqref{eq:A_sparse} over $k=q+1,\dots,L$ and adding
$A_q\le r^{2q}\Delta M_P$,
\[
\mathcal A_L \le \Delta M_P\!\sum_{k=q}^{L}\!r^{2k}
 + r^{2}\!\!\sum_{k=q+1}^{L}\!\!A_{k-1}
 + C_r\!\!\sum_{k=q+1}^{L}\!\!\mathcal A_{\lfloor k/2\rfloor} .
\]
Here $\sum_{k\ge q}r^{2k}=\frac{r^{2q}}{1-r^{2}}$ and $\sum_{k=q+1}^{L}A_{k-1}=\sum_{k=q}^{L-1}A_k=\mathcal A_{L-1}\le\mathcal A_L$. Also $\lfloor k/2\rfloor\le\lfloor L/2\rfloor$ for $k\le L$, so the last sum is at most $(L-q)\,\mathcal A_{\lfloor L/2\rfloor}$. Collecting the $\mathcal A_L$ terms and dividing by $1-r^{2}>0$,
\begin{equation}
\mathcal A_L \le \frac{\dfrac{r^{2q}\,\Delta M_P}{1-r^{2}}+C_r\,(L-q)\,\mathcal A_{\lfloor L/2\rfloor}}{1-r^{2}},
\qquad L>q .
\label{eq:Asum_rec}
\end{equation}

To reach the closed form~\eqref{eq:Asum_bound}, abbreviate $\nu:=\frac{r^{2q}\Delta M_P}{1-r^{2}}$ and prove $\mathcal A_L\le\nu\Phi_L$ by strong induction on $L$. At $L=q$ we have $\mathcal A_q=A_q\le r^{2q}\Delta M_P\le\nu=\nu\Phi_q$. Let $L>q$ and assume the bound for all smaller indices, in particular $\mathcal A_{\lfloor L/2\rfloor}\le\nu\Phi_{\lfloor L/2\rfloor}$, which also covers the case $\lfloor L/2\rfloor<q$, where $\mathcal A$ vanishes. Then~\eqref{eq:Asum_rec} and $\Phi_{\lfloor L/2\rfloor}\ge1$ give
\[
\mathcal A_L \le \frac{\nu+C_r(L-q)\,\nu\,\Phi_{\lfloor L/2\rfloor}}{1-r^{2}}
 \le \nu\,\frac{1+C_r(L-q)}{1-r^{2}} \Phi_{\lfloor L/2\rfloor}
 = \nu\,\Phi_L ,
\]
which is~\eqref{eq:Asum_bound}.

It remains to bound the growth of $\Phi$. Unrolling~\eqref{eq:Phi}, $\Phi_L$ is a finite product whose index is halved at each step: after $i$ steps it is at most $L/2^{i}$, so the corresponding factor is at most $ \frac{C_r}{1-r^{2}}\,L\,2^{-i}$, since $1+C_r(k-q)\le C_rk$ (as $C_rq\ge1$) and $k\le L/2^{i}$. Writing $\mathsf n$ for the number of factors and taking logarithms,
\[
\ln\Phi_L \le \sum_{i=0}^{\mathsf n-1}\bigl(\ln(\tfrac{C_r}{1-r^{2}} L)-i\ln2\bigr)
 \le -\frac{\ln2}{2}\,\mathsf n^{2}+\ln\bigl(\tfrac{\sqrt2\,C_r}{1-r^{2}}\,L\bigr)\,\mathsf n .
\]
The right-hand side is a concave quadratic in $\mathsf n$, so it is bounded by its maximum over all real $\mathsf n$, namely $\frac{\ln^{2}(\frac{\sqrt2\,C_r}{1-r^{2}} L)}{2\ln2}$, attained at $\mathsf n=\log_2(\frac{\sqrt2\,C_r}{1-r^{2}} L)$ which is~\eqref{eq:Phi_growth}. Since $\ln^{2}(\frac{\sqrt2\,C_r}{1-r^{2}} L)/L\to0$, we get
$\Phi_L^{1/L}\to1$.
\end{proof}

\subsection{Proof of Proposition~\ref{prop:tail_estimate}}\label{app:tail}

\begin{proof}
From $d_k=\sum_{j=q}^{L}\alpha_jc_{k-j}^{(j)}$ and the triangle inequality,
\[
|\mathcal E_L(x)| \le \sum_{j=q}^{L}|\alpha_j|\sum_{k>L}|c^{(j)}_{k-j}|\,|x|^{2k} .
\]
Re-indexing the inner sum with $n=k-j\ge L+1-j$ and using~\eqref{eq:cauchy_bound_j},
\[
\sum_{k>L}|c^{(j)}_{k-j}|\,|x|^{2k}
= |x|^{2j}\!\!\sum_{n>L-j}\!\!|c^{(j)}_n|\,|x|^{2n}
\le \frac{C_r|x|^{2j}}{\Delta}\sum_{n>L-j}\gamma^{n}
= \frac{C_r|x|^{2j}}{\Delta}\cdot\frac{\gamma^{\,L+1-j}}{1-\gamma}.
\]
Since $|x|\le x_0$ we have $|x|^{2j}\gamma^{-j}\le x_0^{2j}\gamma^{-j}=r^{2j}$, so
\[
|\mathcal E_L(x)|\ \le\ \frac{C_r\,\gamma^{\,L+1}}{\Delta(1-\gamma)}\sum_{j=q}^{L}r^{2j}|\alpha_j|
\ =\ \frac{C_r\,\gamma^{\,L+1}}{\Delta(1-\gamma)} \mathcal A_L ,
\]
and $\mathcal A_L\le\frac{r^{2q}\Delta M_P}{1-r^{2}}\Phi_L$ is exactly~\eqref{eq:Asum_bound}. The factor $\Delta$ cancels, leaving~\eqref{eq:tail_final}.
\end{proof}

\subsection{Proof of Proposition~\ref{prop:single}}\label{app:single}

\begin{proof}
Write $\lambda$ for~\eqref{eq:lam1}. It is even, and on $[0,\infty)$ it is
increasing and convex, with
\[
\lambda'(y)=\frac{2y}{\sqrt{\Delta^{2}+4y^{2}}}\in[0,1) ,
\qquad
y-\frac\Delta2\;\le\;\lambda(y)\;\le\;y ,
\]
the two bounds being $2y\le\sqrt{\Delta^{2}+4y^{2}}\le\Delta+2y$. The closed forms
in~\eqref{eq:rho} follow from
$\lambda(\sigma)=2\sigma^{2}/(\Delta+\sqrt{\Delta^{2}+4\sigma^{2}})$ and
$\lambda'(\sigma)=2\sigma/\sqrt{\Delta^{2}+4\sigma^{2}}$, and
$\theta/\varrho=\sqrt{\Delta^{2}+4\sigma^{2}}/2\sigma>1$.

(i) On $\ket{v_x}\otimes\mathbb C^{2}$ the operator $E_2(H_1)$ acts as the matrix
$E_2(x)$, so $\ket{v_x}\otimes\ket{\xi(x)}$ is an eigenvector of $\Sigma_0$ with
eigenvalue $\sigma+\theta_{\rm in}\lambda(d)-\theta_{\rm in}\lambda(x)=f_0(x)$.
Suppose $\Sigma_i\ket{\psi_i}=f_i(x)\ket{\psi_i}$ with
$\ket{\psi_i}=\ket{v_x}\otimes\ket{\Xi_i(x)}$ and put $y=f_i(x)$. Then
$E_2(\Sigma_i)=\Sigma_i\otimes X-\Delta I\otimes\ket1\!\bra1$ acts on
$\ket{\psi_i}\otimes\mathbb C^{2}$ as $E_2(y)$, so
$\ket{\psi_i}\otimes\ket{\xi(y)}=\ket{v_x}\otimes\ket{\Xi_{i+1}(x)}$ is an
eigenvector of $\Sigma_{i+1}$ with eigenvalue $\sigma-\theta\lambda(y)=f_{i+1}(x)$.
No radius enters, because $\ket{\xi(y)}$ exists for every real $y$. For the
locality, expand $\Sigma_i$ in products of Pauli operators. Then
$E_2(\Sigma_i)$ multiplies every term by $X$ on the new qubit and adds a $1$-local
term, so the locality grows by one per stage, from $\varkappa+1$ for $\Sigma_0$.

(ii) Put $u_i:=|f_i(0)|-\sigma$, so that $u_0=\theta_{\rm in}\lambda(d)$. Since
$\lambda(y)/y^{2}=2/(\Delta+\sqrt{\Delta^{2}+4y^{2}})$ decreases on
$(0,\infty)$ and $\theta_{\rm in}>\theta=2\sigma/\lambda(\sigma)$,
\[
u_0\;>\;\theta\lambda(d)\;=\;2\sigma\,\frac{\lambda(d)}{\lambda(\sigma)}\;>\;\frac{2d^{2}}{\sigma} .
\]
If $|y|>\sigma$ then $Q(y)=\sigma-\theta\lambda(|y|)<\sigma-\theta\lambda(\sigma)=-\sigma$,
and $|Q(y)|-\sigma=\theta\bigl(\lambda(|y|)-\lambda(\sigma)\bigr)$ because
$\theta\lambda(\sigma)=2\sigma$. Since $f_0(0)=\sigma+u_0>\sigma$, induction gives
$f_i(0)<-\sigma$ for every $i\ge1$ and
\begin{equation}
u_{i+1}=\theta\bigl(\lambda(\sigma+u_i)-\lambda(\sigma)\bigr)
\qquad(i\ge0) .
\label{eq:u_rec}
\end{equation}
By convexity $\lambda(\sigma+u)-\lambda(\sigma)\ge\lambda'(\sigma)\,u$, so
$u_{i+1}\ge\varrho u_i$ and $u_k\ge\varrho^{k}u_0$.

(iii) The stopband stays in $[-\sigma,\sigma]$. Indeed $f_0$ decreases on
$[d,\sigma]$ from $f_0(d)=\sigma$ to
$f_0(\sigma)=\sigma-\theta_{\rm in}(\lambda(\sigma)-\lambda(d))=-\sigma$, and $Q$ is
even and decreases on $[0,\sigma]$ from $Q(0)=\sigma$ to
$Q(\sigma)=\sigma-\theta\lambda(\sigma)=-\sigma$. So $f_0$ maps $[d,\sigma]$ into
$[-\sigma,\sigma]$ and $Q$ maps $[-\sigma,\sigma]$ into itself, and since every
$f_k$ is even,
\[
|f_k(x)|\le\sigma\quad\text{for } d\le|x|\le\sigma ,
\qquad
f_k(\pm\sigma)=-\sigma ,
\]
the second because $f_0(\sigma)=-\sigma$ and $-\sigma$ is a fixed point of $Q$. Now
$\widehat R_k(0)=1$ by definition, and by~(i) the operator $\Sigma_k/f_k(0)$ acts on
$\ket{v_x}\otimes\ket{\Xi_k(x)}$ as the number $\widehat R_k(x)$. Since
$\widehat R_k(H_1)-\ket m\!\bra m=\sum_{x\in\spec H_1\setminus\{0\}}\widehat R_k(x)\Pi_x$
with orthogonal spectral projectors $\Pi_x$, its norm is
$\max_{x\in\spec H_1\setminus\{0\}}|\widehat R_k(x)|$, and every
$x\in\spec H_1\setminus\{0\}$ has $d\le\delta\le|x|\le\sigma$
by~\eqref{eq:H1spec}, so this norm is at most $\sigma/|f_k(0)|$. If $\sigma=\|H_1\|$
then $\sigma$ or $-\sigma$ lies in $\spec H_1$, where $|f_k|=\sigma$, and the bound
is attained. By~(ii) it is at most
$\sigma/(\sigma+\varrho^{k}u_0)\le(\sigma/u_0)\varrho^{-k}<\frac{\sigma^{2}}{2d^{2}}\varrho^{-k}$,
which is at most $\eta$ once $k\ge\log_\varrho\frac{\sigma^{2}}{2d^{2}\eta}$.

(iv) \emph{The norm.} By~\eqref{eq:E2}, $\|E_2(A)\|\le\|A\|+\Delta$ for every
Hermitian $A$. Hence $\|\Sigma_{i+1}\|\le\theta\|\Sigma_i\|+\sigma+\theta\Delta$, and
since $\theta>2$ this gives
$\|\Sigma_{i+1}\|+\sigma+2\Delta\le\theta\bigl(\|\Sigma_i\|+\sigma+2\Delta\bigr)$.
Also $\|\Sigma_0\|\le\sigma+u_0+\theta_{\rm in}(\sigma+\Delta)$ because
$\|H_1\|\le\sigma$, so
\[
\|\Sigma_k\|\;\le\;\theta^{k}E ,
\qquad
E:=2\sigma+2\Delta+u_0+\theta_{\rm in}(\sigma+\Delta) .
\]
\emph{The marked value.} The norm grows as $\theta^{k}$, and we now show that the
marked value does too, after a slow start whose length is what produces the power
of $\sigma/d$. Recall from~(ii) that $u_i=|f_i(0)|-\sigma$
obeys~\eqref{eq:u_rec}. One stage admits two lower bounds,
\begin{equation}
u_{i+1}\;\ge\;\varrho\,u_i
\qquad\text{and}\qquad
u_{i+1}\;\ge\;\theta\Bigl(u_i-\frac\Delta2\Bigr) .
\label{eq:two_bounds}
\end{equation}
The first is the convexity bound of~(ii). The second follows
from~\eqref{eq:u_rec} with $\lambda(\sigma+u)\ge\sigma+u-\Delta/2$ and
$\lambda(\sigma)\le\sigma$. The first is the better one while $u_i$ is small, and
the second once $u_i$ is large compared with $\Delta$.

\emph{Slow phase.} Let $n$ be the least integer with $\varrho^{n}u_0\ge\Delta$, so
that $n=0$ if $u_0\ge\Delta$ and $n\le1+\log_\varrho(\Delta/u_0)$ otherwise. The
first bound of~\eqref{eq:two_bounds} alone gives $u_i\ge\varrho^{i}u_0$ for every
$i$, and in particular $u_n\ge\Delta$.

\emph{Fast phase.} For $m\ge0$ the first bound gives
$u_{n+m}\ge\varrho^{m}u_n\ge2^{m}\Delta$, because $\varrho>2$. The second bound
then reads
\[
u_{n+m+1}\;\ge\;\theta\,u_{n+m}\Bigl(1-\frac{\Delta}{2u_{n+m}}\Bigr)
\;\ge\;\theta\,u_{n+m}\bigl(1-2^{-m-1}\bigr) ,
\]
so each stage gains a factor $\theta$, up to a correction that tends to $1$
geometrically. All the corrections together cost less than a factor $4$, since
$\prod_{j\ge1}(1-2^{-j})\ge\tfrac12\cdot\tfrac34\bigl(1-\sum_{j\ge3}2^{-j}\bigr)
=\tfrac9{32}>\tfrac14$. Hence
\[
u_k\;\ge\;\tfrac14\,\theta^{k-n}\,u_n\;\ge\;\tfrac14\,\theta^{k-n}\varrho^{n}u_0
\qquad\text{for } k\ge n .
\]

\emph{Both phases.} For $k<n$ the slow phase gives
$u_k\ge\varrho^{k}u_0\ge\theta^{k-n}\varrho^{n}u_0$, since $\varrho<\theta$. So for
every $k\ge0$
\[
|f_k(0)|\;\ge\;u_k\;\ge\;\frac14\Bigl(\frac\varrho\theta\Bigr)^{n}\theta^{k}u_0 .
\]
The marked value thus grows as $\theta^{k}$, and the slow phase costs the factor
$(\varrho/\theta)^{n}$. With $\beta:=\log_\varrho\theta$ one has
$(\varrho/\theta)^{\log_\varrho y}=y^{1-\beta}$, so the bound on $n$ gives
\[
\Bigl(\frac\varrho\theta\Bigr)^{n}\;\ge\;
\frac\varrho\theta\,\min\Bigl\{1,\frac{u_0}{\Delta}\Bigr\}^{\beta-1} .
\]

\emph{The scaling.} Dividing the bound on the norm by the last two displays,
\[
\frac{\|\Sigma_k\|}{|f_k(0)|}\;\le\;\frac{4\theta}{\varrho}\cdot
\frac{E}{u_0\,\min\{1,u_0/\Delta\}^{\beta-1}}
\qquad\text{for every }k\ge0 .
\]
The numbers $\theta$, $\varrho$ and $\beta$ depend only on $\Delta/\sigma$. For
$d\le\sigma/2$ one has
$\theta_{\rm in}\le2\sigma/(\lambda(\sigma)-\lambda(\sigma/2))$ and
$u_0\le\theta_{\rm in}\lambda(\sigma/2)$, so $E$ is at most $\sigma$ times a
function of $\Delta/\sigma$. The denominator increases with $u_0$, and
$u_0>2d^{2}/\sigma$ by~(ii), so it is at least
$2\sigma\min\{1,2\sigma/\Delta\}^{\beta-1}(d/\sigma)^{2\beta}$. This is the claim,
with $C$ depending only on $\Delta/\sigma$.
\end{proof}

\section{The odd gadget}\label{app:odd}

The even chain of Section~\ref{sec:gadget} produces even functions of the input,
so the synthesis of Section~\ref{sec:synthesis} reaches even targets only. Odd
targets are supplied by a companion chain of odd length, and almost nothing has
to be redone for it: the interior of the chain is unchanged, so the minors
$\Theta_k$, the transfer matrix $T$, the Chebyshev convergents $S_k$ and the
ratio $R_n$ are literally the same objects, and
Lemmas~\ref{lem:transfer}, \ref{lem:ch}, \ref{lem:poles}, \ref{lem:Rflat}
and~\ref{lem:rho_interval} hold verbatim. This appendix therefore states the odd
results and proves only the three steps at which the boundary makes a
difference. Everything else is obtained by repeating Sections~\ref{sec:gadget}
and~\ref{sec:branch} word for word, under the dictionary of
Table~\ref{tab:dictionary}.

\subsection{The chain and its eigenvalue equation}\label{app:odd:def}

For each integer $n\ge1$ let $O_{2n+1}(x)$ be the $(2n+1)\times(2n+1)$
tridiagonal Hermitian matrix
\begin{equation}
O_{2n+1}(x) =
\begin{pmatrix}
0 & x & 0 & \cdots & 0 \\
x & 0 & 1 & \ddots & \vdots \\
0 & 1 & 0 & \ddots & 0 \\
\vdots & \ddots & \ddots & \ddots & 1 \\
0 & \cdots & 0 & 1 & \Do x
\end{pmatrix},
\qquad \Do>0 ,
\label{eq:odd-gadget-matrix}
\end{equation}
whose couplings again alternate $x,1,x,1,\dots$ starting with $x$, and which now
has an odd number of sites, so that the chain ends on a unit bond and carries
$n$ bonds of each kind. The boundary term is $\Do x$ rather than the constant
$-\Delta$ of~\eqref{eq:even-gadget-matrix}, and we write $\Do$ for it, never
$\Delta$: the two are different physical quantities, since $\Delta$ is an
on-site energy while $\Do$ multiplies the input. They are independent
parameters, and
Corollary~\ref{cor:Dmax} constrains them in opposite directions. Replacing $x$
by an input Hamiltonian gives
\begin{equation}
O_{2n+1}(H) \;=\; H\otimes\mathsf A_n \;+\; I\otimes\mathsf B_n ,
\qquad
\mathsf A_n=\sum_{k=1}^{n}\bigl(\ket{2k}\bra{2k-1}+\mathrm{h.c.}\bigr)+\Do\ket{2n+1}\bra{2n+1},
\label{eq:odd-split}
\end{equation}
with $\mathsf B_n$ the unit bonds, which is the same shape
$H\otimes\mathsf A+I\otimes\mathsf B$ that Theorem~\ref{thm:realisation} uses. In
the one-excitation encoding of~\eqref{eq:local-form} the boundary term is
$\Do\,H\otimes\tfrac12(1-Z_{2n})$, so the locality is again two more than that
of $H$.

\begin{proposition}[Odd eigenvalue equation]\label{prop:eig_eq_odd}
With $\Theta_k$ the interior minors of Section~\ref{sec:chareq}, which are the
same for both parities,
\begin{equation}
\begin{aligned}
\chi^{\mathrm o}_{2n+1}(\lambda) &:= \det\bigl(O_{2n+1}(x)-\lambda I\bigr)
\;=\; \Theta_{2n+1}+\Do x\,\Theta_{2n}\\
&\;=\; S_n\bigl[(\Do x-\lambda)(\lambda^{2}-x^{2})+\lambda\bigr]-x^{2}(\Do x-\lambda)S_{n-1},
\end{aligned}
\label{eq:char_poly_odd}
\end{equation}
and on $\mathcal D$, where $S_n\neq0$, the reduced eigenvalue equation is
\begin{equation}
\lambda\bigl(1+\Do x\lambda-\lambda^{2}\bigr)
\;=\;(\Do x-\lambda)\,x^{2}R_n(\lambda,x) .
\label{eq:eig_reduced_odd}
\end{equation}
\end{proposition}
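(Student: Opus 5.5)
The plan mirrors the proof of Proposition~\ref{prop:eig_eq}: expand the determinant along its last row, then reduce to the interior minors via the transfer matrix. In $O_{2n+1}(x)-\lambda I$ the last diagonal entry is $\Do x-\lambda$, and the last site is coupled to site $2n$ by a unit bond. Expanding along the last row therefore gives
\[
\chi^{\mathrm o}_{2n+1}=(\Do x-\lambda)\Theta_{2n}-\Theta_{2n-1}.
\]
The unit-bond relation~\eqref{eq:recurrence} at $j=n+1$ reads $\Theta_{2n+1}=-\lambda\Theta_{2n}-\Theta_{2n-1}$. Substituting it yields $\chi^{\mathrm o}_{2n+1}=\Theta_{2n+1}+\Do x\,\Theta_{2n}$, which is the first line of~\eqref{eq:char_poly_odd}. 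This is the odd analogue of~\eqref{eq:D_boundary}.

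For the second line I will express everything through $(\Theta_{2n},\Theta_{2n-1})^{\top}=T^{n}(1,0)^{\top}$ and apply the Chebyshev identity~\eqref{eq:CH}. The first column of $T^{n}=S_nT-x^{2}S_{n-1}I$ gives
\[
\Theta_{2n}=S_n(\lambda^{2}-x^{2})-x^{2}S_{n-1},\qquad \Theta_{2n-1}=-\lambda S_n .
\]
Substituting these into $(\Do x-\lambda)\Theta_{2n}-\Theta_{2n-1}$ gives
\[
S_n\bigl[(\Do x-\lambda)(\lambda^{2}-x^{2})+\lambda\bigr]-x^{2}(\Do x-\lambda)S_{n-1},
\]
which is exactly the claimed expression. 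As a check at $n=1$, set $S_1=1$, $S_0=0$ and compare with the direct $3\times3$ determinant. Note that the case $n=1$ also uses $\Theta_{2n-1}=\Theta_1=-\lambda$, which is consistent with $\Theta_{-1}=0$.

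For the reduced equation, Lemma~\ref{lem:poles}(ii) says $S_n\neq0$ on $\mathcal D$, so I divide by $S_n$ and set $\chi^{\mathrm o}_{2n+1}=0$. This gives
\[
(\Do x-\lambda)(\lambda^{2}-x^{2})+\lambda=x^{2}(\Do x-\lambda)\,\frac{S_{n-1}}{S_n}.
\]
I then move the $-x^{2}(\Do x-\lambda)$ term from the left to the right, where it combines with the ratio into $(\Do x-\lambda)\,x^{2}\bigl(1+S_{n-1}/S_n\bigr)=(\Do x-\lambda)\,x^{2}R_n$. What remains on the left is $(\Do x-\lambda)\lambda^{2}+\lambda=\lambda\bigl(1+\Do x\lambda-\lambda^{2}\bigr)$, which is~\eqref{eq:eig_reduced_odd}.

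No step is a real obstacle; the argument is a short algebraic computation. The only places needing care are the sign conventions in the last-row expansion (the unit off-diagonal enters squared, hence the $-\Theta_{2n-1}$) and applying the unit-bond recurrence at index $2n+1$, one step beyond the even chain. Both are checked by the $n=1$ case.
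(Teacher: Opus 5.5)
Your proposal is correct and follows the paper's own proof essentially step for step: the last-row expansion $(\Do x-\lambda)\Theta_{2n}-\Theta_{2n-1}$, the unit-bond recurrence at $j=n+1$ for the first line, the first column of $T^{n}=S_nT-x^{2}S_{n-1}I$ for the closed form, and division by $S_n\neq0$ on $\mathcal D$ followed by cancelling the $x^{2}(\Do x-\lambda)$ terms. Your $n=1$ check against the direct $3\times3$ determinant is a sound addition that the paper omits.
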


\begin{proof}
The boundary modifies only the last diagonal entry, from $-\lambda$ to
$\Do x-\lambda$, and the last coupling is a unit bond, so expanding the last row
gives $\chi^{\mathrm o}_{2n+1}=(\Do x-\lambda)\Theta_{2n}-\Theta_{2n-1}$, and the
unit-bond step of~\eqref{eq:recurrence} at $j=n+1$ reads
$\Theta_{2n+1}=-\lambda\Theta_{2n}-\Theta_{2n-1}$, which is the second
expression. By~\eqref{eq:Tm_int} and~\eqref{eq:CH},
$(\Theta_{2n},\Theta_{2n-1})^{\mathsf T}=T^{n}(1,0)^{\mathsf T}
=\bigl(S_n(\lambda^{2}-x^{2})-x^{2}S_{n-1},\,-\lambda S_n\bigr)^{\mathsf T}$,
and substituting gives the closed form. Dividing by $S_n$ and writing
$S_{n-1}/S_n=R_n-1$ turns it into
$(\Do x-\lambda)(\lambda^{2}-x^{2})+\lambda=(\Do x-\lambda)(x^{2}R_n-x^{2})$, in
which the two terms $\mp(\Do x-\lambda)x^{2}$ cancel. Factoring $\lambda$ out of
the left-hand side is~\eqref{eq:eig_reduced_odd}.
\end{proof}

Two structural facts follow, and they are what everything below rests on.

\emph{Parity.} In~\eqref{eq:char_poly_odd} the quantity $\tau=\lambda^{2}-x^{2}-1$,
hence $S_n$ and $S_{n-1}$, and also $\lambda^{2}-x^{2}$ and $x^{2}$, are
invariant under $(\lambda,x)\mapsto(-\lambda,-x)$, while $\Do x-\lambda$ and
$\lambda$ both change sign. Hence
$\chi^{\mathrm o}_{2n+1}(-\lambda;-x)=-\chi^{\mathrm o}_{2n+1}(\lambda;x)$, so
$\lambda\in\spec O_{2n+1}(x)$ if and only if $-\lambda\in\spec O_{2n+1}(-x)$,
and any branch singled out by a condition invariant under that involution is an
\emph{odd} function of $x$.

\emph{The coefficient of $\lambda$.} Collecting in~\eqref{eq:eig_reduced_odd}
the terms linear in $\lambda$ shows that the role played by the detuning
$\Delta$ in the even equation $\lambda(\lambda+\Delta)=x^{2}R_m$ is played here
by $1+x^{2}R_n$, whose value at $\lambda=0$ is $1+M(x)$. That quantity is
explicit, and this is the identity the appendix turns on:
\begin{equation}
1+M(x)\;=\;1+\frac{x^{2n}(1-x^{2})}{1-x^{2n}}\;=\;\frac{1-x^{2n+2}}{1-x^{2n}} .
\label{eq:oneplusM}
\end{equation}
Its zeros are the $(2n+2)$-th roots of unity, \emph{all of modulus one}.
Consequently, for every $n\ge1$ and every $|x|\le r<1$,
\begin{equation}
\bigl|1+M(x)\bigr|\;\ge\;\frac{1-r^{2n+2}}{1+r^{2n}}\;\ge\;1-r^{2} ,
\label{eq:mubound}
\end{equation}
a bound uniform in $n$ and positive on every disc of radius $r<1$. It is because
of~\eqref{eq:mubound} that no chain length has to be excluded below.

\subsection{The branch}\label{app:odd:series}

\begin{theorem}[Odd branch]\label{thm:series_odd}
There is a unique branch $\lambda_1^{(n)}$ analytic near $x=0$ with
$\lambda_1^{(n)}(0)=0$. It is odd, and for every $n\ge1$
\begin{equation}
\lambda_1^{(n)}(x)\;=\;\Do\,\frac{x^{2n+1}\bigl(1-x^{2}\bigr)}{1-x^{2n+2}}
\;-\;\Do^{3}x^{4n+3}\;+\;O\bigl(x^{4n+5}\bigr) .
\label{eq:odd_expansion}
\end{equation}
In particular, writing $\lambda_1^{(n)}(x)=\sum_{k\ge0}c^{(n)}_kx^{2n+1+2k}$,
\begin{equation}
c^{(n)}_0=\Do,
\qquad
c^{(n)}_1=-\Do,
\qquad
c^{(n)}_k=0 \ \ (1<k\le n),
\qquad
c^{(n)}_{n+1}=\Do-\Do^{3} .
\label{eq:series_odd}
\end{equation}
\end{theorem}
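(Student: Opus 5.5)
The plan is to copy the proof of Theorem~\ref{thm:series}: the implicit function theorem, then parity, then an order-by-order expansion of the reduced equation~\eqref{eq:eig_reduced_odd}. Put
\[
F(\lambda,x):=\lambda\bigl(1+\Do x\lambda-\lambda^{2}\bigr)-(\Do x-\lambda)\,x^{2}R_n(\lambda,x).
\]
By Lemma~\ref{lem:poles}(ii), $F$ is jointly analytic on $\mathcal D$, which contains a neighbourhood of the origin. At $(0,0)$ we have $F=0$ and $\partial_\lambda F=1$, since the term $x^{2}R_n$ and its $\lambda$-derivative carry the factor $x^{2}$. The implicit function theorem then gives a unique analytic branch $\lambda_1^{(n)}$ with $\lambda_1^{(n)}(0)=0$. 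This also uses that $\chi^{\mathrm o}_{2n+1}=S_nF$ with $S_n\neq0$ on $\mathcal D$, so zeros of $F$ near $0$ are exactly the eigenvalues there.

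For oddness, the parity observation after Proposition~\ref{prop:eig_eq_odd} gives $F(-\lambda,-x)=-F(\lambda,x)$. Both sides of~\eqref{eq:eig_reduced_odd} flip sign, because $R_n$ depends only on $w=\lambda^{2}$ and $x^{2}$ and on $\tau$. Hence $x\mapsto-\lambda_1^{(n)}(-x)$ is another analytic zero of $F$ vanishing at $0$. By uniqueness it equals $\lambda_1^{(n)}$, so the branch is odd.

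Next comes a crude order bound. Rewrite the equation as a fixed point,
\[
\lambda=\frac{\Do x\cdot x^{2}R_n}{1+x^{2}R_n+\Do x\lambda-\lambda^{2}} .
\]
The identity in the proof of Lemma~\ref{lem:Rflat} holds for complex arguments. It gives $x^{2}R_n=M(x)-\varepsilon$ with
\[
\varepsilon=x^{2}\lambda^{2}\sum_j\frac{2\kappa_j}{w_j(w_j-\lambda^{2})}=O(x^{2}\lambda^{2}).
\]
Since $M=O(x^{2n})$, the numerator is $O(x^{2n+1})$. The denominator is $1+O(x^{2})$, and is bounded away from $0$ by~\eqref{eq:mubound}. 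Therefore $\lambda=O(x^{2n+1})$. This step replaces the real-axis bracket that the even proof used, and is cleaner because it is done directly on power series.

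Feeding this back gives $\varepsilon=O(x^{4n+4})$ and $\lambda^{2}=O(x^{4n+2})$. So
\[
\lambda=\frac{\Do xM}{(1+M)\bigl(1+\frac{\Do x\lambda}{1+M}+O(x^{4n+2})\bigr)}+O(x^{4n+5}).
\]
The first factor is computed exactly by~\eqref{eq:oneplusM}:
\[
\frac{\Do xM}{1+M}=\Do\frac{x^{2n+1}(1-x^{2})}{1-x^{2n+2}} .
\]
Expanding the second factor to first order gives the correction
\[
-\Do\,x^{2n+1}\cdot\Do x\cdot\Do x^{2n+1}=-\Do^{3}x^{4n+3}.
\]
Every remaining term is of order at least $x^{4n+5}$. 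These include $\varepsilon$, which is $O(x^{4n+4})$ multiplied by $\Do x$; the $\lambda^{2}$ term, multiplied by $x^{2n+1}$; and the second-order term in $\Do x\lambda$. Together these give~\eqref{eq:odd_expansion}.

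Finally, expand
\[
x^{2n+1}(1-x^{2})\sum_{i\ge0}x^{(2n+2)i}=x^{2n+1}-x^{2n+3}+x^{4n+3}-\dots
\]
Reading off coefficients gives $c^{(n)}_0=\Do$, $c^{(n)}_1=-\Do$, $c^{(n)}_k=0$ for $1<k\le n$, and $c^{(n)}_{n+1}=\Do-\Do^{3}$. The exponent $4n+3$ is distinct from $2n+3$ for every $n\ge1$, so, unlike the even case at $m=1$, no case needs separate treatment.

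The main obstacle is bookkeeping the error orders rigorously. In particular, $\varepsilon$ must be really $O(x^{2}\lambda^{2})$ as an analytic function near $0$, and not only on the real axis. I would establish this first from the exact partial-fraction identity, with the poles $w_j$ bounded away from $0$ by~\eqref{eq:polebound2}. Once that holds, the order-by-order comparison is routine.
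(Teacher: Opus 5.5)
Your argument is correct and is essentially the paper's proof. Both use the implicit function theorem, oddness from $F(-\lambda,-x)=-F(\lambda,x)$ plus uniqueness, the split $x^{2}R_n=M-\varepsilon$ with $\varepsilon=O(x^{2}\lambda^{2})$ taken from the exact (complex) partial-fraction identity, the closed form $\Do xM/(1+M)=\Do x^{2n+1}(1-x^{2})/(1-x^{2n+2})$, and the lacunary read-off. Keeping $\Do x\lambda$ in the denominator and expanding it, instead of moving $-\Do x\lambda^{2}$ to the right-hand side as in \eqref{eq:odd_exact}, is only a cosmetic difference.

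The one step to repair is your crude bound. The numerator is $\Do x(M-\varepsilon)$, and at that point you only know $\lambda(0)=0$. That gives $\Do x\varepsilon=O(x^{5})$, which is not $O(x^{2n+1})$ once $n\ge3$, so the sentence ``since $M=O(x^{2n})$, the numerator is $O(x^{2n+1})$'' is circular as written.

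Fix it as the paper does. Let $b_px^{2p+1}$ be the lowest nonzero term of $\lambda$; $\lambda$ is not identically zero, since that would force $\Do xM\equiv0$. If $p<n$, the left-hand side $\lambda\,(1+O(x^{2}))$ has order exactly $2p+1$. The right-hand side, however, vanishes to order at least $\min\{2n+1,4p+5\}>2p+1$, a contradiction. Equivalently, iterate $\lambda=O(x^{a})\Rightarrow\lambda=O(x^{\min\{2n+1,\,2a+3\}})$ finitely many times.

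A minor point: your denominator is nonzero near the origin simply because it is $1+O(x^{2})$. \eqref{eq:mubound} concerns $1+M$ and is not what is needed there.
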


\begin{proof}
Write $F^{\mathrm o}(\lambda,x):=\lambda(1+\Do x\lambda-\lambda^{2})-(\Do x-\lambda)x^{2}R_n$.
By Lemma~\ref{lem:poles}(ii) it is analytic near the origin, with
$F^{\mathrm o}(0,0)=0$ and $\partial_\lambda F^{\mathrm o}(0,0)=1\neq0$, so the
implicit function theorem gives the branch. Its characterisation is invariant
under $(\lambda,x)\mapsto(-\lambda,-x)$, so the branch is odd. Write
$\lambda:=\lambda_1^{(n)}(x)$ and, by Corollary~\ref{cor:threefold}(ii),
$x^{2}R_n=M(x)-\varepsilon$ with $0\le\varepsilon=O(x^{2}\lambda^{2})$. Moving
the terms linear in $\lambda$ to the left,
\begin{equation}
\lambda\bigl(1+M\bigr)\;=\;\Do x\,M\;-\;\Do x\lambda^{2}+\lambda^{3}-\Do x\varepsilon+\lambda\varepsilon .
\label{eq:odd_exact}
\end{equation}

\emph{Leading order.} Let $b_px^{2p+1}$ be the lowest nonzero term of $\lambda$.
Since $1+M=1+O(x^{2n})$, the left side of~\eqref{eq:odd_exact} has order $2p+1$,
while on the right $\Do xM$ has order exactly $2n+1$ and the four remaining terms
have orders $4p+3$, $6p+3$ and at least $4p+5$, all $>2p+1$. The right side
therefore has order $\min\{2n+1,\ \ge4p+3\}$, which is $2p+1$ only if $p=n$, and
matching the coefficient of $x^{2n+1}$ gives $b_n=\Do$.

\emph{The expansion.} Dividing~\eqref{eq:odd_exact} by $1+M$, which is a unit by
\eqref{eq:oneplusM}, the first term becomes
$\Do x\,M/(1+M)=\Do x^{2n+1}(1-x^{2})/(1-x^{2n+2})$ exactly. For the rest,
$\lambda=\Do x^{2n+1}+O(x^{2n+3})$ gives
$-\Do x\lambda^{2}=-\Do^{3}x^{4n+3}+O(x^{4n+5})$, while $\lambda^{3}=O(x^{6n+3})$
and the two $\varepsilon$-terms are $O(x^{4n+5})$, all three being $O(x^{4n+5})$
because $6n+3\ge4n+5$ for $n\ge1$, and multiplying by
$(1+M)^{-1}=1+O(x^{2n+2})$ changes none of this. That is
\eqref{eq:odd_expansion}. Finally
\[
\Do\,\frac{x^{2n+1}(1-x^{2})}{1-x^{2n+2}}
\;=\;\Do\sum_{i\ge0}\Bigl(x^{2n+1+(2n+2)i}-x^{2n+3+(2n+2)i}\Bigr)
\]
is lacunary of period $2n+2$: in the index $k$ of~\eqref{eq:series_odd} its
nonzero coefficients sit at $k\in\{0,1\}+(n+1)\mathbb Z_{\ge0}$, and they equal
$\Do$ and $-\Do$ alternately. Reading off $k=0,1$, the gap $1<k\le n$, and
$k=n+1$, where the lacunary term contributes $\Do$ and the correction
$-\Do^{3}$, gives~\eqref{eq:series_odd}.
\end{proof}

So Theorem~\ref{thm:series} transposes under $\Delta^{-1}\mapsto\Do$, with two
bonuses. The odd family is sparser, in that its run of zeros extends one index
further and~\eqref{eq:odd_expansion} exhibits further runs for which the even
case has no counterpart, and it has no exceptional shortest chain, since
\eqref{eq:series_odd} holds at $n=1$ as it stands whereas
$c^{(1)}_1=-\Delta^{-3}$ spoils the pattern in the even case.

\subsection{Coefficient bounds on the complex disc}\label{app:odd:radius}

Lemma~\ref{lem:Rflat} concerns $R_n$ alone and is available unchanged.
What differs is the Rouch\'e comparison: the natural linear part of
$F^{\mathrm o}$ is $\lambda(1+M(x))$, not $\lambda$, and~\eqref{eq:mubound} is
what keeps it from degenerating.

\begin{theorem}[Odd coefficient bound]\label{thm:onestep_odd}
Let $0<r<1$ and $n\ge1$, let $a_n$ be as in~\eqref{eq:ab}, and put
\begin{equation}
\bar a_n:=\min\Bigl\{a_n,\ \tfrac12\bigl(1-r^{2}\bigr)\Bigr\},
\qquad
\rho:=\frac{8\,\Do\,r\,a_n}{1-r^{2}} .
\label{eq:odd_contour}
\end{equation}
Assume $\rho\le\sqrt{s_-}$, with $s_-$ given by~\eqref{eq:sminus} with $a_m$
replaced by $\bar a_n$. Then $\lambda_1^{(n)}$ is analytic on $\{|x|<r\}$ and
\begin{equation}
\sup_{|x|<r}\bigl|\lambda_1^{(n)}(x)\bigr|\;\le\;\rho ,
\qquad
\bigl|c^{(n)}_k\bigr|\;\le\;\frac{\Do\,C^{\mathrm o}_{r,n}}{r^{2k}},
\qquad
C^{\mathrm o}_{r,n}:=\frac{8(1+r^{2})}{(1-r^{2})(1-r^{2n})} .
\label{eq:onestep_ck_odd}
\end{equation}
\end{theorem}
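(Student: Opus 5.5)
The proof copies that of Theorem~\ref{thm:onestep}: one application of Rouch\'e's theorem for each fixed $x$ with $|x|<r$, on the circle $|\lambda|=\rho$. We then use the argument-principle integral to get analyticity in $x$, and Lemma~\ref{thm:cauchy} in its odd form to get the coefficients. Write
\[
F^{\mathrm o}(\lambda,x)=\lambda\bigl(1+\Do x\lambda-\lambda^{2}\bigr)-(\Do x-\lambda)\,x^{2}R_n(\lambda,x).
\]
By \eqref{eq:char_poly_odd}, $\chi^{\mathrm o}_{2n+1}=S_nF^{\mathrm o}$, and $S_n\neq0$ on $\mathcal D$ by Lemma~\ref{lem:poles}(ii). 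So it suffices to count the zeros of $F^{\mathrm o}$. We take the dominant part to be $f(\lambda):=\lambda\bigl(1+M(x)\bigr)$, as the text suggests. By \eqref{eq:mubound}, $|f|\ge(1-r^{2})\rho$ on the circle, uniformly in $n$. The remainder is
\[
g=F^{\mathrm o}-f=-\Do x\,M+\Do x\lambda^{2}-\lambda^{3}+(\Do x-\lambda)\bigl(M-x^{2}R_n\bigr)+\lambda\bigl(M-x^{2}R_n\bigr).
\]
This is the complex analogue of \eqref{eq:odd_exact}, with $\varepsilon$ replaced by $M-x^{2}R_n$.

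\textbf{Bounding $g$.} The term $\Do xM$ has modulus below $\Do r a_n$, strictly so because $|x|<r$, exactly as in the proof of Theorem~\ref{thm:onestep}. For the other terms, Lemma~\ref{lem:Rflat} applies because $\rho\le\sqrt{s_-}<1-r$ (Lemma~\ref{lem:rho_interval}). It gives $|M-x^{2}R_n|\le b(\rho)\rho^{2}$. Since $|\Do x-\lambda|+|\lambda|\le \Do r+2\rho$, the non-leading terms total at most $\Do r\rho^{2}+\rho^{3}+(\Do r+2\rho)\,b(\rho)\rho^{2}$.

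The contour radius is chosen as $\rho=8\Do r a_n/(1-r^{2})$. The leading term $\Do r a_n$ is then $(1-r^{2})\rho/8$, a small fraction of $|f|$. It remains to show that everything else is at most about $\tfrac12(1-r^{2})\rho$. This is where $\bar a_n$ comes in. Lemma~\ref{lem:rho_interval} with $\bar a_n$ in place of $a_m$ gives $\rho^{2}(1+b(\rho))\le\bar a_n\le\tfrac12(1-r^{2})$. Hence $\rho(1+b(\rho))\le \tfrac{1-r^{2}}{2\rho}\cdot\rho\,$, that is, the $\rho^{3}$ and $\rho\,b\rho^{2}$ terms are at most $\tfrac12(1-r^{2})\rho$ times factors we must check are $<1$.

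The $\Do r$-weighted terms $\Do r\rho^{2}(1+b)$ are at most $\Do r\bar a_n\le \Do r a_n=(1-r^{2})\rho/8$. Summing these pieces should give $|g|<|f|$ on the circle. So $F^{\mathrm o}$ has exactly one zero in $|\lambda|<\rho$, and it is simple.

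\textbf{Main obstacle.} The delicate step is this bookkeeping: fitting the factor $8$ in $\rho$ and the cap $\tfrac12(1-r^{2})$ in $\bar a_n$ so that the strict inequality closes. The pure-$\lambda$ terms ($\rho^{3}$ and $2\rho\cdot b\rho^{2}$) have no factor $\Do$. They must be absorbed using $\rho^{2}(1+b)\le\tfrac12(1-r^{2})$, and a factor $2$ may be lost there. If the constants do not close as stated, I would first try to split $g$ so that $\lambda\,(M-x^{2}R_n)$ is grouped with $f$ and bounded below together with it. Failing that, I would use the slack in the inequality $\bar a_n\le a_n$.

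\textbf{Identification and coefficients.} The unique zero equals $\frac1{2\pi i}\oint\lambda\,\partial_\lambda F^{\mathrm o}/F^{\mathrm o}\,d\lambda$, so it is analytic on $\{|x|<r\}$. Near $0$ it coincides with the germ of Theorem~\ref{thm:series_odd}, because both are zeros in the disc vanishing at $x=0$. This gives $\sup|\lambda_1^{(n)}|\le\rho$.

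For the coefficients, Theorem~\ref{thm:series_odd} says the series starts at $x^{2n+1}$. Applying the argument of Lemma~\ref{thm:cauchy} to $\lambda_1^{(n)}/x^{2n+1}$ gives $|c^{(n)}_k|\le\rho\,r^{-2n-1-2k}$. Substituting $a_n r^{-2n}=(1+r^{2})/(1-r^{2n})$ turns this into $\Do\,C^{\mathrm o}_{r,n}/r^{2k}$ with $C^{\mathrm o}_{r,n}=8(1+r^{2})/\bigl((1-r^{2})(1-r^{2n})\bigr)$, as stated.
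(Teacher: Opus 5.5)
\textbf{Verdict: there is a genuine gap.} Your strategy is the paper's: Rouch\'e on $|\lambda|=\rho$ with $f=\lambda(1+M(x))$, the argument-principle integral for analyticity, and the odd form of Lemma~\ref{thm:cauchy}. But the one inequality that carries the theorem, $|g|<|f|$ on the circle, is never established. You say the pieces ``should'' sum correctly and leave it as the main obstacle. As you have set it up, it does not close.

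The cause is an algebra slip in your remainder. From $F^{\mathrm o}=\lambda+\Do x\lambda^{2}-\lambda^{3}-(\Do x-\lambda)\,x^{2}R_n$ and $x^{2}R_n=M-\varepsilon$, with $\varepsilon:=M-x^{2}R_n$, one gets
\[
g=F^{\mathrm o}-f=-\Do xM+\Do x\lambda^{2}-\lambda^{3}+(\Do x-\lambda)\,\varepsilon ,
\]
which is minus the right-hand side of~\eqref{eq:odd_exact}. Your extra summand $+\lambda\varepsilon$ double-counts: it cancels the $-\lambda\varepsilon$ already inside $(\Do x-\lambda)\varepsilon$. So your expression is not $F^{\mathrm o}-f$, and bounding it term by term produces the pure-$\lambda$ contribution $\rho^{3}(1+2b)$.

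The hypothesis controls that term only as $\rho^{3}(1+2b)\le\rho\bigl((1-r^{2})-\rho^{2}\bigr)$. This is an equality at $\rho=\sqrt{s_-}$ whenever $\bar a_n=\tfrac12(1-r^{2})$, for example when $r$ is near $1$. Adding the term $\Do r a_n=\tfrac18\rho(1-r^{2})$, your estimate then exceeds the lower bound $\rho(1-r^{2})$ on $|f|$. This is exactly the factor $2$ you were worried about.

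With the correct remainder the constants close as stated, and no regrouping with $f$ or appeal to slack is needed. Lemma~\ref{lem:Rflat} and $|\Do x-\lambda|\le\Do r+\rho$ give
\[
|g|\le\Do r\,a_n+\Do r\,\rho^{2}(1+b)+\rho^{3}(1+b)\le2\Do r\,a_n+\rho\,\bar a_n\le\tfrac14\rho(1-r^{2})+\tfrac12\rho(1-r^{2})=\tfrac34\rho(1-r^{2}).
\]
Three facts are used here:
\begin{itemize}
\item $\rho^{2}(1+b)\le\bar a_n\le a_n$ on the $\Do$-weighted terms;
\item $\bar a_n\le\tfrac12(1-r^{2})$ on the pure-$\lambda$ term;
\item $2\Do r a_n=\tfrac14\rho(1-r^{2})$, by the choice of $\rho$.
\end{itemize}
Since $|f|\ge\rho(1-r^{2})$ by~\eqref{eq:mubound}, Rouch\'e applies, and you do not need the strictness coming from $|x|<r$. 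This is precisely the paper's tally, which bounds $|\Do x\,x^{2}R_n|\le\Do r(a_n+b\rho^{2})$ and $|\lambda(x^{2}R_n-M)|\le b\rho^{3}$ directly.

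The rest of your proposal is correct and matches the paper. That covers the factorisation $\chi^{\mathrm o}_{2n+1}=S_nF^{\mathrm o}$, the identification of the zero with the germ of Theorem~\ref{thm:series_odd}, and the Cauchy step giving $\Do\,C^{\mathrm o}_{r,n}r^{-2k}$.
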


\begin{proof}
The proof of Lemma~\ref{lem:rho_interval} uses only $a_m>0$, so the lemma holds
with $a_m$ replaced by $\bar a_n$, and it gives $0<s_-<(1-r)^{2}$, so
$\rho\le\sqrt{s_-}<1-r$: the circle $|\lambda|=\rho$ lies in $\mathcal D$,
Lemma~\ref{lem:Rflat} applies on it, and by~\eqref{eq:rho_interval} the
hypothesis is equivalent to $\rho^{2}\bigl(1+b(\rho)\bigr)\le\bar a_n$, with
$b(\rho)=2r^{2}/((1-r)^{2}-\rho^{2})$.

Fix $x$ with $|x|<r$ and split $F^{\mathrm o}=f+g$ with
\[
f(\lambda):=\lambda\bigl(1+M(x)\bigr),
\qquad
g(\lambda):=\Do x\lambda^{2}-\lambda^{3}-\Do x\,x^{2}R_n+\lambda\bigl(x^{2}R_n-M\bigr) .
\]
Then $f$ has exactly one zero in $|\lambda|<\rho$, it is simple, and
$|f|\ge\rho\,(1-r^{2})$ on the circle by~\eqref{eq:mubound}. By
Lemma~\ref{lem:Rflat}, $|x^{2}R_n-M|\le b(\rho)\rho^{2}$ and
$|x^{2}R_n|\le a_n+b(\rho)\rho^{2}$ there, so
\[
|g|\;\le\;\Do r\rho^{2}+\rho^{3}+\Do r\bigl(a_n+b\rho^{2}\bigr)+\rho\,b\rho^{2}
\;=\;\Do r\Bigl[a_n+\rho^{2}\bigl(1+b\bigr)\Bigr]+\rho^{3}\bigl(1+b\bigr) .
\]
Using $\rho^{2}(1+b)\le\bar a_n\le a_n$ in the bracket and
$\rho^{3}(1+b)\le\rho\,\bar a_n\le\tfrac12\rho(1-r^{2})$ in the last term,
\[
|g|\;\le\;2\Do r\,a_n+\tfrac12\rho\bigl(1-r^{2}\bigr)
\;=\;\tfrac14\rho\bigl(1-r^{2}\bigr)+\tfrac12\rho\bigl(1-r^{2}\bigr)
\;=\;\tfrac34\rho\bigl(1-r^{2}\bigr)\;<\;|f| ,
\]
the middle equality because $\rho=8\Do ra_n/(1-r^{2})$ makes
$2\Do ra_n=\tfrac14\rho(1-r^{2})$. Rouch\'e therefore gives exactly one zero of
$F^{\mathrm o}$ in $|\lambda|<\rho$, as $f$ has, and that zero is simple. As in
the proof of Theorem~\ref{thm:onestep} it is an analytic function of $x$, being
$\frac{1}{2\pi i}\oint_{|\lambda|=\rho}\lambda\,\partial_\lambda F^{\mathrm o}/F^{\mathrm o}\,d\lambda$
with $F^{\mathrm o}$ jointly analytic and non-vanishing on the contour, and it
agrees near $x=0$ with the germ of Theorem~\ref{thm:series_odd}, so it is
$\lambda_1^{(n)}$. Finally the proof of Lemma~\ref{thm:cauchy} applies with
$x^{2m}$ replaced by $x^{2n+1}$, giving
$|c^{(n)}_k|\le\rho\,r^{-2n-1-2k}$, and
$\rho\,r^{-2n-1}=8\Do a_n/\bigl((1-r^{2})r^{2n}\bigr)$, which is
$\Do\,C^{\mathrm o}_{r,n}$.
\end{proof}

\begin{corollary}[Maximum boundary coupling]\label{cor:Dmax}
Theorem~\ref{thm:onestep_odd} applies at radius $r$ for every
\begin{equation}
\Do\;\le\;\Do^{\max}(n,r):=\frac{\bigl(1-r^{2}\bigr)\sqrt{s_-}}{8\,r\,a_n}\;>\;0 .
\label{eq:Dmax}
\end{equation}
In particular \emph{no chain length is excluded}: every $n\ge1$ is admissible at
every radius $r<1$, provided $\Do$ is small enough. Moreover
$\Do^{\max}(\cdot,r)$ is increasing, so $\Do\le\Do^{\max}(n_0,r)$ certifies
every $n\ge n_0$ at once: as in the even case, the shortest chain in the basis
binds.
\end{corollary}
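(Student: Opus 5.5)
The statement has three parts: applicability for $\Do\le\Do^{\max}$, positivity of $\Do^{\max}$, and monotonicity in $n$. I would handle them in that order, mirroring the proof of Corollary~\ref{cor:Dmin}.

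\emph{Applicability and positivity.} The hypothesis of Theorem~\ref{thm:onestep_odd} is $\rho\le\sqrt{s_-}$, where $\rho=8\Do r a_n/(1-r^{2})$ and $s_-$ is computed from $\bar a_n$. The quantity $s_-$ does not depend on $\Do$. So the hypothesis is linear in $\Do$, and solving it gives exactly $\Do\le(1-r^{2})\sqrt{s_-}/(8ra_n)=\Do^{\max}(n,r)$. Positivity follows from Lemma~\ref{lem:rho_interval} applied with $\bar a_n>0$, which gives $s_->0$, together with $a_n>0$ and $0<r<1$. Since nothing here restricts $n$, every chain length is admissible at every radius $r<1$ once $\Do$ is small enough.

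\emph{Monotonicity in $n$.} The constraint should only get easier as $n$ grows, because $a_n$ decreases. To see this, I would rewrite the admissibility condition in terms of $a_n$, with $\Do$ fixed:
\[
\rho=\frac{8\Do r a_n}{1-r^{2}}<1-r
\qquad\text{and}\qquad
\rho^{2}\bigl(1+b(\rho)\bigr)\le\bar a_n .
\]
This is equivalent to the hypothesis by Lemma~\ref{lem:rho_interval}.

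\emph{Main obstacle.} The difficulty is the minimum in $\bar a_n$. I would handle it with two observations. First, the left-hand side $\rho^{2}(1+b(\rho))$ is strictly increasing in $\rho$, hence in $a_n$. Second, it scales like $a_n^{2}$ for small $a_n$, so it shrinks faster than $\bar a_n$ as $a_n$ decreases.

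I would then split into two regimes. When $\bar a_n=a_n$, consider the ratio $\rho^{2}(1+b)/a_n$. It is proportional to $a_n(1+b(\rho))$, which is increasing in $a_n$, so the condition is preserved as $a_n$ decreases. When $\bar a_n=\tfrac12(1-r^{2})$, the right-hand side is constant and the left-hand side decreases with $a_n$, so the condition is again preserved. This covers all cases. Since $a_n$ is strictly decreasing in $n$, as shown in the proof of Corollary~\ref{cor:Dmin}, a coupling admissible at $n_0$ is admissible at every $n\ge n_0$.

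If one also wants $\Do^{\max}(\cdot,r)$ itself to be increasing, I would argue directly. The function $a\mapsto\sqrt{s_-(\bar a)}/a$ is decreasing, because $s_-(\bar a)$ grows at most linearly in $\bar a$: this follows from the form $s_-=2\mathsf q/(\mathsf p+\sqrt{\mathsf p^{2}-4\mathsf q})$ with $\mathsf q=\bar a(1-r)^{2}$ and $\mathsf p$ increasing in $\bar a$. Hence $\sqrt{s_-}/a_n$ grows as $a_n$ decreases.
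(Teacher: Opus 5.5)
Your proposal is correct and follows the paper's proof. You solve the linear condition $\rho\le\sqrt{s_-}$ for $\Do$, rewrite it through Lemma~\ref{lem:rho_interval} as $\rho^{2}(1+b(\rho))\le\bar a_n$ with $\rho$ proportional to $a_n$, and use that $a_n$ decreases in $n$ on each branch of the minimum defining $\bar a_n$. (The paper instead divides by $\varpi_n^{2}$ and checks that $\bar a_n/\varpi_n^{2}$ decreases in $a_n$ while $b(\varpi_n\Do)$ increases in $a_n$.)

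Two small points. First, the case where $n_0$ and $n$ fall on different branches of the minimum needs a sentence. It is immediate, because $\rho^{2}(1+b)\le\bar a_n$ is the conjunction of $\rho^{2}(1+b)\le a_n$ and $\rho^{2}(1+b)\le\tfrac12(1-r^{2})$, and your two observations show each is preserved as $a_n$ decreases. This is what the paper's remark that the two cases agree where they meet takes care of. Second, your last paragraph is redundant. Nested admissible sets $\{\Do\le\Do^{\max}(n,r)\}$ already mean $\Do^{\max}(\cdot,r)$ is non-decreasing. Its justification from the growth of $\mathsf p$ alone is also incomplete, since $\mathsf q$ grows with $\bar a$ too. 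The claim is nevertheless true: the denominator equals $2s_+$, and $s_+$ increases in $\bar a$ because $s_+>(1-r)^{2}$.
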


\begin{proof}
Equation~\eqref{eq:Dmax} is $\rho\le\sqrt{s_-}$ solved for $\Do$, and $s_->0$
because $\bar a_n>0$. For the monotonicity write $\varpi_n:=8ra_n/(1-r^{2})$,
so that $\rho=\varpi_n\Do$ and the hypothesis reads
$\Do^{2}\le\bar a_n\big/\bigl[\varpi_n^{2}\bigl(1+b(\varpi_n\Do)\bigr)\bigr]$.
The ratio $\bar a_n/\varpi_n^{2}$ equals $(1-r^{2})^{2}/(64r^{2}a_n)$ when
$\bar a_n=a_n$ and $(1-r^{2})^{3}/(128r^{2}a_n^{2})$ otherwise. The two agree
where the cases meet and both decrease strictly in $a_n$, which decreases
strictly in $n$ by Corollary~\ref{cor:Dmin}. So $\bar a_n/\varpi_n^{2}$ is
non-decreasing in $n$, while $b(\varpi_n\Do)$ is non-increasing because $b$
increases in its argument and $\varpi_n$ decreases. The right-hand side of
the hypothesis is therefore non-decreasing in $n$.
\end{proof}

The direction of~\eqref{eq:Dmax} is the one substantive difference between the
two parities, and it is not an artefact of the estimate. For the even gadget the
detuning must be \emph{large}, $\Delta\ge\Delta_\star(r)$, because $\Delta$ is
the coefficient of $\lambda$ and it is what pins the branch near the origin. For
the odd gadget that role is played by $1+M(x)$, which \eqref{eq:mubound} already
bounds below, and $\Do$ instead multiplies the source term. Since
$c^{(n)}_0=\Do$, the branch grows linearly in $\Do$, so $\Do$ must be
\emph{small}. The cost is in the energy scale, not in the accuracy: the
triangular diagonal is $\Do$, so the synthesis weights scale like $\Do^{-1}$ and
$\|\Sigma\|$ with them, exactly as they scale like $\Delta$ in the even case,
while $\Do$ cancels out of the error bound as $\Delta$ does. And because
$\Delta$ and $\Do$ are independent parameters, an on-site energy on the even
chains and a coupling to the input on the odd ones, the two requirements never
conflict.

\subsection{The dictionary}\label{app:odd:transpose}

Let $P$ be an odd polynomial, $P(x)=\sum_{k=p}^{\ell}p_{2k+1}x^{2k+1}$ with
$p_{2p+1}\neq0$ and $p\ge1$, a linear term $p_1x$ being carried by $p_1H$
itself, which is $\varkappa$-local and needs no gadget, and put
$P^{\mathrm o}_L:=\sum_{n=p}^{L}\beta_n\lambda_1^{(n)}$. Since
$c^{(n)}_0=\Do\neq0$, the matching conditions again form a lower-triangular
system with constant diagonal, and Section~\ref{sec:synthesis} goes through
verbatim under the substitutions of Table~\ref{tab:dictionary}, with the
back-substitution
\begin{equation}
\beta_k=\frac1{\Do}\Bigl(p_{2k+1}-\sum_{n=p}^{k-1}\beta_n\,c^{(n)}_{k-n}\Bigr),
\qquad k=p,\dots,L .
\label{eq:beta_rec}
\end{equation}

\begin{table}[t]
\centering
\small
\renewcommand{\arraystretch}{1.25}
\begin{tabular}{@{}lll@{}}
\toprule
& \textbf{even}, Sections~\ref{sec:gadget}--\ref{sec:synthesis} & \textbf{odd}, this appendix\\
\midrule
chain & $E_{2m}$, $2m$ sites, detuning $-\Delta$ & $O_{2n+1}$, $2n+1$ sites, boundary $\Do x$\\
eigenvalue equation & $\lambda(\lambda+\Delta)=x^{2}R_m$ & $\lambda(1+\Do x\lambda-\lambda^{2})=(\Do x-\lambda)x^{2}R_n$\\
coefficient of $\lambda$ & $\Delta$ & $1+M(x)=\dfrac{1-x^{2n+2}}{1-x^{2n}}$\\
branch & even, order $x^{2m}$ & odd, order $x^{2n+1}$\\
$c_0,\ c_1$ & $\Delta^{-1},\ -\Delta^{-1}$ ($c_1$ for $m\ge2$) & $\Do,\ -\Do$\\
zeros of $c_k$ & $1<k<m$ & $1<k\le n$, and further runs\\
next coefficient & $c_m=\Delta^{-1}-\Delta^{-3}$ & $c_{n+1}=\Do-\Do^{3}$\\
Rouch\'e contour & $\rho=2a_m/\Delta$ & $\rho=8\Do ra_n/(1-r^{2})$\\
admissibility & $\Delta\ge\Delta_\star(r)$ & $\Do\le\Do^{\max}(n,r)$\\
coefficient constant & $C_r=\dfrac{2}{1-r^{2}}$ & $C^{\mathrm o}_r=\dfrac{8(1+r^{2})}{(1-r^{2})^{2}}$\\
triangular diagonal & $\Delta^{-1}$ & $\Do$\\
\bottomrule
\end{tabular}
\caption{The two parities side by side. Every statement of
Sections~\ref{sec:gadget}--\ref{sec:synthesis} transposes under these
substitutions. Here $C^{\mathrm o}_r$ dominates $C^{\mathrm o}_{r,n}$
of~\eqref{eq:onestep_ck_odd} for every $n\ge1$.}
\label{tab:dictionary}
\end{table}

Three checks make the transposition exact rather than merely plausible.
\emph{Sparsity:} Theorem~\ref{thm:series_odd} kills $c^{(n)}_{k-n}$ whenever
$1<k-n\le n$, a range strictly larger than the one used in
Proposition~\ref{prop:sparsity}, and $c^{(k-1)}_1=-\Do$ holds for \emph{every}
$k\ge2$, so the term $n=k-1$ of~\eqref{eq:beta_rec} contributes
$+\beta_{k-1}$ with no exceptional case, and there is no counterpart of the
carve-out~\eqref{eq:alpha_k2}. \emph{Growth and truncation:} the factor
$\Do^{-1}$ of~\eqref{eq:beta_rec} meets the factor $\Do$
of~\eqref{eq:onestep_ck_odd}, so the two cancel, leaving $C^{\mathrm o}_r$
exactly where $C_r$ stood and $r^{2p}M_P/\Do$ where $r^{2q}\Delta M_P$ stood.
Propositions~\ref{prop:alpha_growth} and~\ref{prop:tail_estimate} and
Corollary~\ref{cor:L_needed} hold with those replacements, and in particular the
admissible input norm again matches the analyticity radius $r$.
\emph{Realisation:} by~\eqref{eq:odd-split} the odd gadgets have the form
$H\otimes\mathsf A+I\otimes\mathsf B$ with $\mathsf A,\mathsf B$ supported on one
chain, which is all the proof of Theorem~\ref{thm:realisation} uses, so odd
chains commute with one another and with the even ones. For a target
$P=p_0+p_1x+P_{\mathrm e}+P_{\mathrm o}$, split into its two parities with
$P_{\mathrm e}$ starting at degree two and $P_{\mathrm o}$ at degree three, the
single Hamiltonian
\begin{equation}
\Sigma\;=\;p_0I+p_1H+\sum_{j}\alpha_j\,E_{2j}(H)\;+\;\sum_{n}\beta_n\,O_{2n+1}(H)
\label{eq:Sigma_mixed}
\end{equation}
therefore carries $p_0+p_1x+P^{e}_L+P^{\mathrm o}_L$ on the joint encoded subspace, as
soon as $\Delta\ge\Delta_\star(r)$ and $\Do\le\Do^{\max}(n_{\min},r)$.

One thing does not transpose, and we have not needed it: the real-axis bracket
of Proposition~\ref{prop:bracket}. An analogue follows the same way from
Corollary~\ref{cor:threefold}, but the series of Theorem~\ref{thm:series_odd}
was read off the exact identity~\eqref{eq:odd_exact} instead.

\section*{Acknowledgements}

JR acknowledges support from the Belgian Fonds de la Recherche Scientifique - FNRS under Grant O.0013.22 (EoS CHEQS).
JC acknowledges support from the Maison du Quantique de Nouvelle-Aquitaine ``HybQuant'', 
as part of the HQI initiative and France 2030, under the French National Research Agency (ANR) grant ANR-22-PNCQ-0002.
AB was supported by the French PEPR integrated project HQI (ANR-22-PNCQ-0002) and 
the Momentum MSCA Programme which is co-funded by the European Commission through 
the HORIZON-MSCA-2023-COFUND programme and the Secretariat of the Hungarian Academy 
of Sciences (MTA), Grant Agreement No. 101179854.

\bibliographystyle{plain}
\bibliography{refs}

\end{document}